              \def\version{\today}	        	%

\documentclass[reqno,11pt]{amsart} 
\usepackage[T1]{fontenc}
\usepackage[utf8]{inputenc}
\usepackage{amsmath,amsthm} 
\usepackage[normalem]{ulem}

\usepackage[mathscr]{eucal}
\usepackage{amssymb}
\usepackage{srcltx} 
\usepackage{dsfont}
\usepackage{hyperref}
\usepackage{xcolor}
\usepackage{enumerate}
\usepackage{tikz, pgfplots}
\usetikzlibrary{patterns}
\usepackage{caption}
\usepackage{subcaption}
\usepackage{comment}
\usepackage{lscape}
\usepackage{graphicx}
\usepackage{tabularx}
\numberwithin{equation}{section}

\def\emptyset{\varnothing} 

\newfam\Bbbfam 
\font\tenBbb=msbm10 
\font\sevenBbb=msbm7 
\font\fiveBbb=msbm5 
\textfont\Bbbfam=\tenBbb 
\scriptfont\Bbbfam=\sevenBbb 
\scriptscriptfont\Bbbfam=\fiveBbb

\def\2{\mathbf 2}

\def\barn1a{\bar n_{1a}}
\def\tilden3{\widetilde n_3}

\newcommand{\R}     {\mathbb{R}} 
 
\newcommand{\N}     {\mathbb{N}} 
\renewcommand{\P}   {\mathbb{P}}

\def\1{{\mathchoice {1\mskip-4mu\mathrm l}      
{1\mskip-4mu\mathrm l} 
{1\mskip-4.5mu\mathrm l} {1\mskip-5mu\mathrm l}}} 
 
\def\comment#1{} 
\newtheoremstyle{thm}{2ex}{2ex}{\itshape\rmfamily}{} 
{\bfseries\rmfamily}{}{1.7ex}{} 
 
\newtheoremstyle{rem}{1.3ex}{1.3ex}{\rmfamily}{} 
{\itshape\rmfamily}{}{1.5ex}{}

\renewcommand{\theequation}{\thesection.\arabic{equation}} 
 
\newtheorem{theorem}{Theorem}[section] 
\newtheorem{lemma}[theorem]{Lemma}

\newtheorem{conj}[theorem] {Conjecture}

\theoremstyle{definition}
\newtheorem{defn}[theorem] {Definition}

\newtheorem{remark}[theorem]{Remark}

\newcommand{\eps}{\varepsilon}

\newcommand\numberthis{\addtocounter{equation}{1}\tag{\theequation}}

\definecolor{Red}{rgb}{1,0,0}

\pgfplotsset{compat=1.18}

\definecolor{gb}{rgb}{0.0, 0.36, 0.56}
\newcommand{\gb}{\color{gb}}  
\newcommand{\igb}{\iffalse\gb\fi}
\newcommand{\bl}{\color{black}}
\definecolor{lightgreen}{rgb}{0.8, 0.9, 0.1}
\definecolor{darkgreen}{rgb}{0.6, 0.8, 0.3}
\definecolor{red1}{rgb}{0.9, 0, 0.1}
\definecolor{purple1}{rgb}{0.5, 0, 0.6}
\definecolor{blue1}{rgb}{0.4, 0.6, 0.99}
\definecolor{orange1}{rgb}{0.9, 0.7, 0}

\begin{document} 

\title[The interplay of selection and dormancy in a Moran model]{The interplay of selection and dormancy in a Moran model can lead to \igb  coexistence of types}


\author[Jochen Blath, Baptiste Le Duigou and András Tóbiás]{}
\maketitle
\centerline{\sc Jochen Blath{\footnote{Goethe-Universität Frankfurt, Institute of Mathematics, and $\mathrm{C^3S}$ -- Center for Critical Computational Studies, Robert-Mayer-Straße 10, 60325 Frankfurt am Main, Germany, {\tt blath@math.uni-frankfurt.de}}}, Baptiste Le Duigou{\footnote{Department of Computer Science and Information Theory, Faculty of Electrical Engineering and Informatics, Budapest University of Technology and Economics, Műegyetem rkp. 3., H-1111 Budapest, Hungary, and HUN-REN Alfréd Rényi Institute of Mathematics, Reáltanoda utca 13--15, H-1053 Budapest, Hungary,  {\tt leduigou@cs.bme.hu}}} and András Tóbiás{\footnote{Department of Computer Science and Information Theory, Faculty of Electrical Engineering and Informatics, Budapest University of Technology and Economics, Műegyetem rkp. 3., H-1111 Budapest, Hungary, and HUN-REN Alfréd Rényi Institute of Mathematics, Reáltanoda utca 13--15, H-1053 Budapest, Hungary,  {\tt tobias@cs.bme.hu}}}}
\thispagestyle{empty}

\bigskip

\centerline{\small(\version)} 
\vspace{.5cm} 
 
\begin{quote} 
{\small {\bf Abstract:}} 
In this paper we propose a Moran model 
that 
\igb describes the population dynamics of \bl 
two types: While the first type has a {\em selective advantage} during reproduction, the second type can avoid \igb replacement during reproduction with some positive probability
\bl
by switching temporarily into a {\em dormant state}. 

We investigate the interplay of both evolutionary strategies by studying the invasion dynamics of the dormant type into the resident (selectively advantageous) population in the large population limit of the system. \igb It turns out that \bl the dormancy trait can not only {\em invade} and subsequently {\em fixate} under suitable parameter assumptions \igb despite its selective disadvantage \bl (a phenomenon that has already been observed in \igb a related context in \bl Blath and Tóbiás \cite{BT19}), but \igb that \bl  there is also a \igb novel \bl regime  of {\em stable \color{black} coexistence} of both types \igb due to a frequency-dependent balancing effect \bl that did not arise in the previous setup \igb with Lotka--Volterra type symmetric competition.

The emergence of a coexistence \igb regime here \bl rests \igb in part \bl on specific properties of the Moran modelling framework, in particular \igb {\em its fixed overall population size} \igb that enforces instant re-colonization after death events, \bl  as well as on the \igb (positive) \bl  mortality and resuscitation rates \bl of the dormant state.
We provide heuristic explanations for the observed types of behaviour \igb and \bl the corresponding proofs, which involve comparisons to \igb suitable \bl branching processes, approximations by dynamical systems, and an analysis of asymptotic behaviour of the latter.

\end{quote}

\bigskip\noindent 
{\it MSC 2010.} 92D25, 60J85, 34D05.

\medskip\noindent
{\it Keywords and phrases.} Moran model, selection, dormancy, invasion, coexistence, founder control.


\section{Introduction}

Dormancy is an ubiquitous trait that allows organisms to switch into reversible and non-proliferative states that are protected from unfavourable conditions~\cite{LdHWB21}. It is generally believed that this strategy, although \igb typically \bl being costly, can  contribute to the survival and resilience of species, and may also foster coexistence and diversity in species communities. However, the precise consequences of dormancy in given scenarios are highly sensitive to the modelling of the dormancy-related mechanisms. 

For example, in {\em population genetics}, which usually assumes fixed (but large) population sizes and long time-scales, it has been shown that dormancy can affect ancestral relationships in multiple and idiosyncratic ways, \igb leading to novel coalescent structures \bl \cite{KKL01, BGKW16}, \cite{BGKW18}, 
depending on the involved time-scales and switching mechanisms (e.g.\ responsive vs stochastic). 

Moreover, in {\em (stochastic) adaptive dynamics}, where population sizes are regulated by Lotka--Volterra type competition on the ecological scale, it has been shown that so-called {\em \igb competition-induced dormancy}, even when combined with a reduced reproductive rate \igb in order to incorporate the costs of such a strategy, \bl can successfully invade and  subsequently fixate in a resident populations without this trait \cite{BT19}. A similar \igb (costly) \bl  invasion is not possible for spontaneous dormancy, see e.g.\ in \cite{P24}, \igb again highlighting the relevance of the precise dormancy strategy. This seems to be even more the case in \bl random environments, \igb where \bl several different kinds of dormancy can become optimal, depending on the distributional properties of the environment \cite{BHS21}.

Recently, \igb probabilistic \bl research is increasingly bridging the realms of both \igb evolutionary and ecological \bl modelling frameworks \igb -- note \bl  for example the identification of ancestral structures from population genetics in \igb logistic \bl  systems \cite{AFS25,F25,GN25}, or \igb the application of coalescent theory \bl in models from stochastic adaptive dynamics on a suitable time-scale \cite{CH24}. 


In the present paper, we follow \igb these research threads \bl by transferring the idea of competition-induced dormancy into the population-genetic framework of a fixed-size Moran model, \igb more precisely \bl a variant of Moran's original model~\cite{M58} without mutation and with some additional transitions, \igb all considered on the ecological scale. \bl 
Here, competitive edges become selective advantages, and killed individuals are instantaneously replaced by new offspring.  While one would at first glance expect a qualitatively similar behaviour to the one described \igb for a corresponding competitive birth--death model \bl in \cite{BT19} (invasion and subsequent fixation \igb of the dormancy trait), \bl it turns out that new effects, resting \igb crucially \bl on the specifics of the Moran model framework, can appear. In particular, invasion can now lead to \igb {\em coexistence} \bl of both types instead of fixation of the invader. The corresponding parameter regimes for all possible outcomes, including also so-called \igb {\em founder control} \bl (a term borrowed from spatial ecology, see e.g.\ \cite{V15}, referring to fixation of the resident and extinction of the mutant, regardless of their types), can be identified explicitly. This yields an example for a scenario in which  dormancy indeed fosters coexistence, \igb in line with general ecological predictions.  

\medskip

The {\bf organization of the paper} is as follows. In Section~\ref{sec-modeldefmainresultsMoran}, we define our model and state our main results. In particular, in Section~\ref{sec-modeldefMoran} we introduce our Moran model, which consists of 
two types of individuals: One having a selective advantage and the other being able to avoid selective pressure by switching to a dormant state. 
After some preliminary results on the underlying dynamical system in Section~\ref{sec-dynsystMoran} and on the branching processes approximating small sub-populations in  Section~\ref{sec-3phasesMoran}, 
in Section~\ref{sec-resultsMoran}, we present our main results, in particular Theorem~\ref{theorem-mainMoran}, which determines the fate of a single mutant individual of the dormancy type appearing in a large resident population of the other type. As part of this theorem, we also compute the asymptotic probability of a successful mutant invasion in the large-population limit, as well as the asymptotic time until reaching a macroscopic population size. We further characterize whether a successful invasion leads to coexistence vs a complete fixation of mutants and extinction of residents. 

In Section~\ref{sec-discussionMoran} we discuss our modelling choices as well as relations to \igb previous \bl work, and we interpret the conditions for invasion and coexistence \igb from the \bl main theorem. Finally, after the statement and proof of some preliminary results on the global stability properties of the underlying two-dimensional dynamical system in Section~\ref{sec-preliminaryproofsMoran}, we carry out the proof of our main result in Section~\ref{sec-proofMoran}.

\section{Model definition and main results}\label{sec-modeldefmainresultsMoran}
\subsection{Model definition}\label{sec-modeldefMoran}
\igb We consider a population of $N\in \mathbb N$ individuals which can each have two different traits called type $1$ and type $2$. Type $2$ individuals additionally come in two states, called active state (denoted by $2a$) and dormant state ($2d$), while we consider type $1$ particles to be always active. 
\bl
The three-dimensional vector of type \igb counts \bl $\mathbf X^{(N)}(t)=\big(X_{1}^{(N)}(t),X_{2a}^{(N)}(t),X_{2d}^{(N)}(t)\big)$ (\igb for e.g.\ type $1$ given by \bl $X_{1}^{(N)}(t)=\# \{ \text{individuals of type $1$ at time $t$}\}$) evolves as a continuous-time Markov chain with state space $\mathbb N_0^3$ \igb and transitions \bl as follows.

There are {\em \igb Moran type reproduction events} involving active (type $1$ and 2a) individuals. \igb Note that a particularity of the Moran dynamics is that any birth event is matched with a death event at the same time, so that the overall population size is always constant. This effectively entangles birth, death, and competition. \bl Each pair of active individuals meets at rate $\frac{1}{N}$ independently of the other pairs. 
\begin{enumerate}
    \item When two type $1$ individuals meet, a uniformly chosen one reproduces and the other one dies, and thus the type \igb counts \bl remain unchanged. 
    \item We fix $p \in (0,1)$. When two type $2$ individuals meet, we uniformly choose one of them who tries to reproduce, \igb thus exercising competitive pressure on the other. \bl With probability $1-p$, the reproduction is successful, \igb the latter individual dies and is replaced by one offspring of the former. Since both are of  the same type, the type counts \bl remain unchanged. \igb However, with \bl  probability $p$, the non-reproducing individual escapes into dormancy (switching to type $2d$), does not vanish, and thus prevents reproduction.
    \item When a type $1$ and a type $2a$ individual meet, {\igb we assume that type $1$ has a selective advantage that amounts to being more likely to be chosen as the reproducing individual. This is realized as follows. }  
    \begin{enumerate}
        \item With probability $\frac{1+s}{2}$, the type $1$ individual gets a chance to reproduce. If this happens, then with (conditional) probability $1-p$, it replaces the type $2a$ individual by its offspring, while with probability $p$, the type $2a$ individual again becomes dormant and the type $1$ individual cannot reproduce.
        \item With probability $\frac{1-s}{2}$, the type $2a$ individual reproduces and it replaces the type $1$ individual with its offspring.
    \end{enumerate} 
\end{enumerate}  
Additionally, type $2d$ individuals
\begin{enumerate}
    \item switch \igb back \bl into the active state (2a) at rate $\sigma>0$,
    \item and die at rate $\kappa$. Upon death, in line with the Moran model assumption of fixed population size, a parent is picked from the currently present population and produces one offspring of its own kind that takes the vacant slot of the deceasing individual. (In particular, if the picked parent is of type $2d$, nothing happens.)
\end{enumerate}

\begin{remark}
When the proportion of type $1$ individuals in the population is close to 1, a positive $\kappa$ is advantageous for type $1$ \igb since dying individuals in the dormant subpopulation will be replaced with high probability with type $1$ individuals -- in contrast to the case $\kappa=0$, where this is not possible. Hence, \bl when a small subpopulation of type $2$ tries to invade a type $1$ resident population, \igb a \bl $\kappa>0$ makes this invasion more difficult. 
\end{remark} \color{black}

According to the description above, if the current state of $\mathbf X^{(N)}(\cdot)$ is $(\ell,n,m)\in \mathbb N_0^3$, then the process jumps to
\begin{itemize}
    \item $(\ell+1,n-1,m)$ at rate $\frac{1}{N}\ell n\frac{1+s}{2}(1-p)$,
    \item $(\ell-1,n+1,m)$ at rate $\frac{1}{N}\ell n \frac{1-s}{2}$
    \item $(\ell,n-1,m+1)$ at rate $ \frac{1}{N}(\ell n \frac{1+s}{2}+n(n-1)) p$,
    \item $(\ell,n+1,m-1)$ at rate $\sigma m + \kappa m\frac{n}{N}$,
    \item $(\ell+1,n,m-1)$ at rate $\kappa m\frac{\ell}{N}$. 
\end{itemize}

\igb For a detailed discussion of our modelling choices we refer to \bl Section~\ref{sec-modellingchoicesMoran} below. 

\subsection{The dynamical system and its equilibria}\label{sec-dynsystMoran}

Thanks to \cite[Theorem 2.1, p.~456]{EK}, on any fixed time interval of the form $[0,T]$ \igb for some \bl $T \geq 0$, the rescaled process  $\big(\frac1N \mathbf X^{(N)}(t)\big)$ converges uniformly in probability to the solution of a certain system of ODEs, 
given convergence of initial conditions in probability. Since $X^{(N)}_1(t)+X^{(N)}_{2a}(t)+X^{(N)}_{2d}(t)=N$, this system of ODEs \igb is essentially two-dimensional. We could for example focus on the coordinates \bl
$x_{1}(t)$ and $x_{2a}(t)$, \igb with the redundant coordinate \bl $x_{2d}(t)$ \igb then determined by \bl $x_{2d}(t)=1-x_{1}(t)-x_{2a}(t)$, for $t \geq 0$. \igb In this case, the limiting \bl system reads
\[
\begin{aligned}
    \dot x_{1}(t)& = x_{1}(t)x_{2a}(t)\Big(\frac{1+s}{2}(1-p)-\frac{1-s}{2}\Big)+\kappa x_{1}(t)(1-x_1(t)-x_{2a}(t)), \\
    \dot x_{2a}(t) &= x_{1}(t)x_{2a}(t) \Big(\frac{1-s}{2} - \frac{1+s}{2}\Big) - x_{2a}(t)^2 p + \kappa x_{2a}(t)(1-x_1(t)-x_{2a}(t)) + \sigma (1-x_1(t)-x_{2a}(t)), \\
\end{aligned}
\]
\igb which after some simplification becomes \bl
\[
\begin{aligned}
      \dot x_{1}(t)& = x_{1}(t) \Big( x_{2a}(t) (s-\frac{1+s}{2}p)+\kappa (1-x_1(t)-x_{2a}(t)) \Big), \\
    \dot x_{2a}(t) &= -x_{1}(t)x_{2a}(t) s - x_{2a}(t)^2 p + (1-x_{1}(t)-x_{2a}(t))(\kappa x_{2a}(t)+\sigma),
\end{aligned} \numberthis\label{2dMoran}
\]
\igb where the \bl dynamics of $x_{2d}(t)$ can be expressed \igb as \bl
\[ \dot x_{2d}(t)=x_{1}(t)x_{2a}(t)\frac{1+s}{2}p + x_{2a}(t)^2 p -x_{2d}(t)(\kappa (x_1(t)+x_{2a}(t)) +\sigma). \numberthis\label{2dexpress} \]

In order to \igb prepare the main result on invasion and coexistence in \bl our Moran model, we now briefly discuss the existence and stability of equilibria of the system~\eqref{2dMoran}. It is easy to see that $(1,0)$ is an equilibrium of the system. We can identify another \igb monomorphic \bl equilibrium of the form $(0,\bar{x}_{2a})$. In that case, we obtain
\[ 0 = -\bar{x}_{2a}^2p + (1-\bar{x}_{2a})(\kappa \bar{x}_{2a}+\sigma),\]
and thus
\[ \bar{x}_{2a}=\frac{\kappa - \sigma + \sqrt{(\kappa-\sigma)^2+4\sigma(p+\kappa)}}{2(p+\kappa)}. \numberthis\label{x2abar,kappa>0}\]

\begin{lemma}[\igb Stability of monomorphic equilibria\bl]
    \label{lemma-simultaneousstability}
    \ 
    \begin{enumerate}[(i)]
        \item If $p<\frac{2s}{1+s}$, then $(0,\bar{x}_{2a})$ is unstable and $(1,0)$ is stable. 
        \item Let $p>\frac{2s}{1+s}$. \igb In this case, we say that the {\em basic dormancy emergence condition} is satisfied. For \bl  $D=p(s+1)-2s$ \igb we define the {\em critical dormancy mortality rate for the
        invasion of type $1$} by \bl 
        \[\tilde{\kappa}=\frac{-2D\sigma-\sqrt{4D^2\sigma^2-4(-4p+2D)\sigma D^2}}{4D-8p},
        \numberthis\label{tildekappadef}\] 
        \igb and the {\em critical dormancy mortality rate for the invasion of type $2$} by
        \[\kappa'=\frac{\sigma}{s}\Big(\frac{1+s}{2}p-s\Big).
        \numberthis\label{hatkappadef}\] 
        \bl
        Then the following assertions hold:
        \begin{align*}
            &(S_1) \quad \text{When } \kappa \in (0,\tilde{\kappa}), \: (0,\bar{x}_{2a}) \text{ is stable}, \\
            &(S_2) \quad \text{When } \kappa \in (0, \kappa'), \: (1,0) \text{ is unstable, } \\
            &(S_3) \quad \text{When } \kappa \in (\kappa', \infty), \: (1,0) \text{ is stable, } \\
            &(S_4) \quad \text{When } \kappa \in (\tilde{\kappa}, \infty), \: (0,\bar{x}_{2a}) \text{ is unstable}.
        \end{align*}
        Further, if $\sigma > \frac{2s^2}{p(1-s)}$, then $\tilde{\kappa}<\kappa'$, and in that case, both equilibria are unstable in $(\tilde{\kappa},\kappa')$, and otherwise, $\tilde{\kappa}>\kappa'$ and both equilibria are stable in $(\kappa',\tilde{\kappa})$.
    \end{enumerate} 
\end{lemma}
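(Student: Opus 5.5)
We will linearize the planar system~\eqref{2dMoran} at each of the two monomorphic equilibria and read off stability from the eigenvalues of the Jacobian. Write $f_1(x_1,x_{2a})=x_1\big(x_{2a}(s-\tfrac{1+s}{2}p)+\kappa(1-x_1-x_{2a})\big)$ and $f_2$ for the second right-hand side.

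\emph{Equilibrium $(1,0)$.} Here $\partial_{x_1}f_1=-\kappa$ and $\partial_{x_{2a}}f_1=s-\tfrac{1+s}{2}p-\kappa$. Also $\partial_{x_1}f_2=-\sigma$ and $\partial_{x_{2a}}f_2=-s-\sigma$, since at $x_{2a}=0$ the terms $-x_{2a}^2p$ and $\kappa x_{2a}(1-x_1-x_{2a})$ contribute nothing. The eigenvalues are therefore determined by
\[
\mathrm{tr}=-\kappa-s-\sigma<0,\qquad \det=\kappa(s+\sigma)+\sigma\big(s-\tfrac{1+s}{2}p-\kappa\big)=\kappa s-\sigma\big(\tfrac{1+s}{2}p-s\big).
\]
If $p<\frac{2s}{1+s}$, then $\det>0$ for all $\kappa\ge 0$, so $(1,0)$ is stable. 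If $p>\frac{2s}{1+s}$, then $\det$ changes sign exactly at $\kappa=\kappa'$ from \eqref{hatkappadef}. For $\kappa<\kappa'$ we get a saddle, which gives $(S_2)$; for $\kappa>\kappa'$ we get a sink, which gives $(S_3)$.

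\emph{Equilibrium $(0,\bar x_{2a})$.} Since the factor $x_1$ vanishes, $\partial_{x_{2a}}f_1=0$, so the Jacobian is lower triangular. Its eigenvalues are $\lambda_1=\bar x_{2a}(s-\tfrac{1+s}{2}p)+\kappa(1-\bar x_{2a})$ and $\lambda_2=\partial_{x_{2a}}f_2(0,\bar x_{2a})=-2p\bar x_{2a}+\kappa(1-2\bar x_{2a})-\sigma$. To show $\lambda_2<0$, we note that $g(x)=-px^2+(1-x)(\kappa x+\sigma)$ is a concave quadratic with $g(0)=\sigma>0$ and $g(1)=-p<0$. Hence $\bar x_{2a}$ is its unique root in $(0,1)$, and $\lambda_2=g'(\bar x_{2a})<0$. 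Stability is thus governed by the sign of $\lambda_1$, equivalently of $\kappa(1-\bar x_{2a})-\bar x_{2a}D/2$, with $D=p(1+s)-2s$.

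In case (i) we have $D<0$, so $\lambda_1>0$ and the equilibrium is unstable. In case (ii) the condition $\lambda_1=0$ reads $\bar x_{2a}=\frac{2\kappa}{2\kappa+D}$. Substituting this into $g(\bar x_{2a})=0$ and clearing denominators yields a quadratic equation in $\kappa$. Its relevant positive root should be exactly $\tilde\kappa$ from \eqref{tildekappadef}: one checks $(4D-8p)\kappa^2+4D\sigma\kappa+D^2\sigma=0$, whose root with the indicated sign is $\tilde\kappa$.

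To settle the sign of $\lambda_1$ on either side of $\tilde\kappa$, we would use monotonicity. Differentiating $g(\bar x_{2a})=0$ implicitly gives $\partial_\kappa \bar x_{2a}=-\bar x_{2a}(1-\bar x_{2a})/g'(\bar x_{2a})>0$. So $\bar x_{2a}$ increases in $\kappa$ but stays in $(0,1)$, while $\frac{2\kappa}{2\kappa+D}$ increases from $0$ to $1$. Comparing the two curves requires more care. A cleaner route is to note that the quadratic in $\kappa$ has leading coefficient $4D-8p<0$ (since $D<2p$) and constant term $D^2\sigma>0$. Hence it has exactly one positive root, so $\lambda_1$ vanishes at a single $\kappa>0$. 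Since $\lambda_1<0$ for small $\kappa$ (at $\kappa=0$, $\lambda_1=-\bar x_{2a}D/2<0$), we obtain $(S_1)$ and $(S_4)$, provided we verify a sign change rather than a tangency. This follows from the simple root of the quadratic together with the monotone reparametrization.

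\emph{Ordering of $\tilde\kappa$ and $\kappa'$.} We evaluate the quadratic $q(\kappa)=(4D-8p)\kappa^2+4D\sigma\kappa+D^2\sigma$ at $\kappa'=\sigma D/(2s)$. Then $\tilde\kappa<\kappa'$ if and only if $q(\kappa')<0$. Dividing by $\sigma D^2$, this becomes $\frac{(4D-8p)\sigma}{4s^2}+\frac{2\sigma}{s}+1<0$. Using $D-2p=-p(1-s)-2s$, this simplifies to $\sigma p(1-s)>2s^2$. Combining with $(S_1)$–$(S_4)$ gives the final dichotomy.

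The main obstacle is the algebra identifying the root with $\tilde\kappa$ and justifying the sign change there, which rests on the one-positive-root argument.
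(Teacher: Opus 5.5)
Your overall route is correct, and in places it is cleaner than the paper's. You use the implicit relation $g(\bar x_{2a})=0$ and the concavity of $g$ instead of the explicit root formula and a squaring step. You also order $\tilde\kappa$ and $\kappa'$ by evaluating the quadratic at $\kappa'$ rather than manipulating the root formula. However, the key polynomial is miscomputed, and as a result two of your claims are false as written. Substituting $y(\kappa)=\frac{2\kappa}{2\kappa+D}$ into $g$ and clearing denominators gives
\[
(2\kappa+D)^2\,g\big(y(\kappa)\big)=(2D-4p)\kappa^2+2D\sigma\kappa+D^2\sigma=:Q(\kappa).
\]
This is the paper's $\tilde Q(\kappa)/(\kappa+p)$, and its positive root is $\tilde\kappa$ from \eqref{tildekappadef}.

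Your polynomial equals $2Q(\kappa)-D^2\sigma$, so at $\tilde\kappa$ it takes the value $-D^2\sigma\neq0$, and $\tilde\kappa$ is not its root. The error also propagates into the ordering step. From your own expression, $\frac{(4D-8p)\sigma}{4s^2}+\frac{2\sigma}{s}+1=1-\frac{\sigma p(1-s)}{s^2}$, which gives the threshold $\sigma p(1-s)>s^2$, not the $2s^2$ you state. With $Q$ instead, $Q(\kappa')/(\sigma D^2)=\frac{(D-2p)\sigma}{2s^2}+\frac{\sigma}{s}+1=1-\frac{\sigma p(1-s)}{2s^2}$, which yields exactly the claimed dichotomy.

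The sign-change step is only asserted ("simple root together with the monotone reparametrization"). That $Q$ has a simple root tells you nothing by itself about the sign of your $\lambda_1$ beyond the root. Your $\lambda_1$ is minus the paper's $\lambda_1$ in \eqref{lambda1defMoran}, and a priori you only know that it vanishes exactly where $Q$ does. You can close this directly:
\begin{itemize}
\item For $\kappa>0$ one has $y(\kappa)\in(0,1)$, and $\lambda_1>0\iff\bar x_{2a}<y(\kappa)$.
\item Since $g>0$ on $[0,\bar x_{2a})$ and $g<0$ on $(\bar x_{2a},1]$, this is equivalent to $g(y(\kappa))<0$, i.e.\ $Q(\kappa)<0$, i.e.\ $\kappa>\tilde\kappa$.
\item Likewise, $\lambda_1<0\iff 0<\kappa<\tilde\kappa$.
\end{itemize}
This gives $(S_1)$ and $(S_4)$ at once, with no continuity argument or check at $\kappa=0$. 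It also justifies your criterion $\tilde\kappa<\kappa'\iff Q(\kappa')<0$, since $Q>0$ on $[0,\tilde\kappa)$ and $Q<0$ on $(\tilde\kappa,\infty)$.

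Your treatment of $(1,0)$, of the bottom-right Jacobian entry at $(0,\bar x_{2a})$ via $g'(\bar x_{2a})<0$, and of case (i) is correct as it stands.
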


The proof of this lemma will be carried out in Section~\ref{sec-stability}.
Regarding the existence of a third, coordinate-wise positive \igb (polymorphic) \bl equilibrium, we have the following lemma.

\begin{lemma}[Conditions for the existence of a polymorphic equilibrium]
    \label{lemma-xtildeexistence} 
    \leavevmode
    \begin{itemize} 
        \item If 
             $\sigma>\frac{2s^2}{p(1-s)}$, \bl  
            then the dynamical system~\eqref{2dMoran} has a coordinate-wise positive equilibrium of the form $(\tilde x_1, \tilde x_{2a})$ with $\tilde x_1, \tilde x_{2a} >0$ and $\tilde x_1 + \tilde x_{2a} <1$ if and only if $\kappa \in (\tilde{\kappa},\kappa')$. If such an equilibrium exists, it is also unique and given by the formula
            \[\tilde{x}_{2a}=\frac{2\kappa s+\sigma(2s-p(1+s))}{\kappa p(s-1)+s(-2s+p(s+1))}, \qquad \tilde x_{1} = \tilde x_{2a} \big( s-\frac{1+s}{2} p - \kappa \big) \frac1\kappa + 1. \numberthis\label{thisisthecoexistenceequilibrium} \]
            \color{black}
        \item In contrast, if 
             $\sigma<\frac{2s^2}{p(1-s)}$, \bl  
            then the dynamical system~\eqref{2dMoran} has a coordinate-wise positive equilibrium of the form $(\tilde x_1, \tilde x_{2a})$ as above if and only if $\kappa \in (\kappa', \tilde{\kappa})$. If such an equilibrium exists, it is also unique and given by~\eqref{thisisthecoexistenceequilibrium}. \color{black}
    \end{itemize}
    Moreover, in both cases, for $\kappa=\tilde \kappa$, the equilibria $(0,\bar x_{2a})$ and $(\tilde x_1,\tilde x_{2a})$ formally coincide, whereas for $\kappa=\kappa'$, the equilibria $(1,0)$ and $(\tilde x_1,\tilde x_{2a})$ formally coincide.

The condition $\sigma > \frac{2s^2}{p(1-s)}$ vs.\ $\sigma <\frac{2s^2}{p(1-s)}$ will be interpreted in Remark~\ref{remark-sigma} below.
\end{lemma}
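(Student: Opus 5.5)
The plan is to solve the equilibrium equations of \eqref{2dMoran} by hand. The key observation is that, once $x_1>0$ is assumed, the first equation is linear in $x_1,x_{2a}$ and eliminates one variable, after which the second equation becomes linear in $x_{2a}$. Uniqueness and the explicit formula \eqref{thisisthecoexistenceequilibrium} then come for free. The existence conditions reduce to sign conditions on three explicit functions of $\kappa$.

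Write $D=p(1+s)-2s$ as in Lemma~\ref{lemma-simultaneousstability}, and set $x_{2d}=1-x_1-x_{2a}$.
\begin{enumerate}[(a)]
\item \emph{Eliminating $x_1$.} For $x_1>0$ the first equation gives $x_{2d}=x_{2a}D/(2\kappa)$. This is consistent with $x_{2d}>0$, $x_{2a}>0$ only if $D>0$. So if $D\le 0$ there is no polymorphic equilibrium, and from now on I assume the basic dormancy emergence condition $D>0$, so that $\kappa'>0$. It follows that $x_1=1-x_{2a}\bigl(1+\frac{D}{2\kappa}\bigr)$, which is the second formula in \eqref{thisisthecoexistenceequilibrium}. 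Moreover $x_1+x_{2a}<1$ holds automatically.
\item \emph{Solving for $x_{2a}$.} Substitute into the second equation and divide by $x_{2a}>0$. This yields $-sx_1-px_{2a}+\frac{D}{2\kappa}(\kappa x_{2a}+\sigma)=0$, which is linear in $x_{2a}$. Multiplying by $2\kappa$ and using $2s-2p+D=p(s-1)$ gives
\[ x_{2a}\bigl(\kappa p(s-1)+sD\bigr)=2\kappa s-D\sigma , \]
which is the first formula in \eqref{thisisthecoexistenceequilibrium}, and it is unique whenever the bracket is nonzero.
\item \emph{Positivity conditions.} A candidate equilibrium exists iff $\tilde x_{2a}>0$ and $\tilde x_1>0$. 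The numerator $2\kappa s-D\sigma$ is positive iff $\kappa>\kappa'$. The denominator is positive iff $\kappa<\kappa^*:=\frac{sD}{p(1-s)}$. A direct computation gives
\[ \tilde x_1=\frac{q(\kappa)}{2\kappa(\kappa p(s-1)+sD)},\qquad q(\kappa)=2(D-2p)\kappa^2+2D\sigma\kappa+D^2\sigma . \]
Since $D-2p<0$ and $q(0)>0$, $q$ has exactly one positive root. I will check this root is $\tilde\kappa$ from \eqref{tildekappadef}, so that $q>0$ iff $\kappa<\tilde\kappa$. Hence existence holds iff either $\kappa'<\kappa<\min(\kappa^*,\tilde\kappa)$, or $\max(\kappa^*,\tilde\kappa)<\kappa<\kappa'$.
\end{enumerate}

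\emph{Ordering of the thresholds.} This is the only real obstacle. First, $\kappa'<\kappa^*$ iff $\sigma<\frac{2s^2}{p(1-s)}$. Rather than compare the square-root formula for $\tilde\kappa$ directly, I use two evaluations of $q$.
\begin{enumerate}[(a)]
\item At $\kappa=\kappa'$ we have $\tilde x_{2a}=0$, hence $\tilde x_1=1$. This forces $q(\kappa')=2\kappa'\bigl(\kappa' p(s-1)+sD\bigr)$, whose sign equals that of $\kappa^*-\kappa'$.
\item At $\kappa=\kappa^*$ the denominator vanishes, so $q(\kappa^*)=-(2\kappa^*+D)(2\kappa^* s-D\sigma)$, whose sign equals that of $\kappa'-\kappa^*$.
\end{enumerate}
If $\sigma<\frac{2s^2}{p(1-s)}$, then $q(\kappa')>0>q(\kappa^*)$. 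This gives $\kappa'<\tilde\kappa<\kappa^*$, so the second range is empty and the first is exactly $(\kappa',\tilde\kappa)$. If $\sigma>\frac{2s^2}{p(1-s)}$, then $q(\kappa^*)>0>q(\kappa')$. This gives $\kappa^*<\tilde\kappa<\kappa'$, so the first range is empty and the second is exactly $(\tilde\kappa,\kappa')$. This is consistent with the ordering in Lemma~\ref{lemma-simultaneousstability}.

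\emph{Coincidence statements.} At $\kappa=\kappa'$ the formula gives $(\tilde x_1,\tilde x_{2a})=(1,0)$. At $\kappa=\tilde\kappa$ it gives $\tilde x_1=0$; the point with $x_1=0$ then satisfies the second equation of \eqref{2dMoran} restricted to $x_1=0$. Its value is forced to be the positive root $\bar x_{2a}$ in \eqref{x2abar,kappa>0}, because the linear relation from the polymorphic ansatz still holds by continuity and yields a positive solution.
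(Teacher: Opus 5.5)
Your proof is correct. The elimination step matches the paper's and leads to the same linear equation for $\tilde x_{2a}$, but from there you take a genuinely different route.

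The paper first locates the pole $\kappa^*=\frac{sD}{p(1-s)}$ of $\tilde x_{2a}(\kappa)$ relative to $\kappa'$ and to $\tilde\kappa$. For $\tilde\kappa$ it uses a lower bound obtained by viewing $\tilde\kappa$ as a function of $\sigma$, which tacitly uses that $\tilde\kappa$ increases in $\sigma$. It then computes $\frac{d}{d\kappa}\tilde x_{2a}$, whose sign is that of $\frac{2s^2}{p(1-s)}-\sigma$. Finally it combines this monotonicity with the endpoint values $(0,\bar x_{2a})$ at $\tilde\kappa$ and $(1,0)$ at $\kappa'$. The case $\sigma<\frac{2s^2}{p(1-s)}$ is only said to be ``similar''.

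You instead reduce everything to a sign chart of three functions, each with a single positive sign change: the numerator at $\kappa'$, the denominator at $\kappa^*$, and the quadratic $q$ at $\tilde\kappa$. You order the three thresholds purely through the identity $q(\kappa)=2\kappa(\kappa p(s-1)+sD)-(2\kappa s-D\sigma)(2\kappa+D)$, evaluated at $\kappa'$ and at $\kappa^*$. This buys several things:
\begin{itemize}
\item a complete and symmetric treatment of both regimes;
\item the ``only if'' direction as directly as the ``if'' direction;
\item no derivative computation and no appeal to monotonicity of $\tilde\kappa$ in $\sigma$;
\item the ordering $\tilde\kappa\lessgtr\kappa'$ of Lemma~\ref{lemma-simultaneousstability} as a by-product, without the squaring chain used there.
\end{itemize}
What the paper's route gives in addition is qualitative information on how the equilibrium moves. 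For $\sigma>\frac{2s^2}{p(1-s)}$, $\tilde x_{2a}$ and $\tilde x_{2d}$ decrease and $\tilde x_1$ increases along $(\tilde\kappa,\kappa')$, as in Figure~\ref{fig-equilibria}.

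Two small points. First, you divide by $\kappa$, so state explicitly that for $\kappa=0$ the first equation forces $x_{2a}=0$; this is consistent with the claim since $\tilde\kappa,\kappa'>0$. Second, at $\kappa=\tilde\kappa$ no continuity argument is needed. Setting $\tilde x_1=0$ in your relation $\tilde x_1=1-\tilde x_{2a}(1+\frac{D}{2\kappa})$ gives directly $\tilde x_{2a}=\frac{2\tilde\kappa}{2\tilde\kappa+D}>0$. Moreover, the formula point satisfies both equilibrium equations identically whenever the denominator is nonzero.
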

    \normalcolor



\begin{remark}

 When  $\sigma \neq \frac{2s^2}{p(1-s)}$, the equilibrium  $(\tilde x_1,\tilde x_{2a})$ defined by~\eqref{thisisthecoexistenceequilibrium}
is almost everywhere coordinate-wise nonzero. However, $(\tilde x_1,\tilde x_{2a})$ is only coordinate-wise positive (and in $(0,1)^2$) in the cases mentioned in Lemma~\ref{lemma-xtildeexistence}.
\color{black} For the latter case, we make some conjectures on the stability of $(\tilde x_1, \tilde x_{2a})$ in Conjecture~\ref{conj-Moran} below. As an illustration, in Figure~\ref{fig-equilibria} we plot the coordinates of the equilibria of~\eqref{2dMoran} as a function of $\kappa$ for two different values of $\sigma$.\end{remark}

In what follows, we refer by \emph{founder control} to the simultaneous stability of $(1,0)$ and $(0,\bar x_{2a})$ and by \emph{stable coexistence} to the simultaneous instability of these two equilibria, cf.\ also Remark~\ref{remark-nomenclature} below.
\color{black}

\begin{figure}[h!]
    \centering
    
    \begin{subfigure}[t]{0.48\textwidth}
        \centering
        \includegraphics[width=\textwidth]{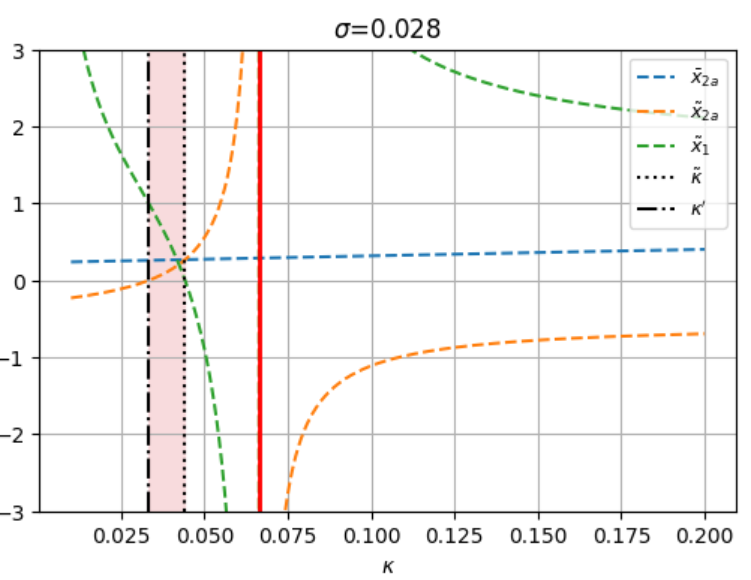}
    \end{subfigure}
    \hfill
    \begin{subfigure}[t]{0.49\textwidth}
        \centering
        \includegraphics[width=\textwidth]{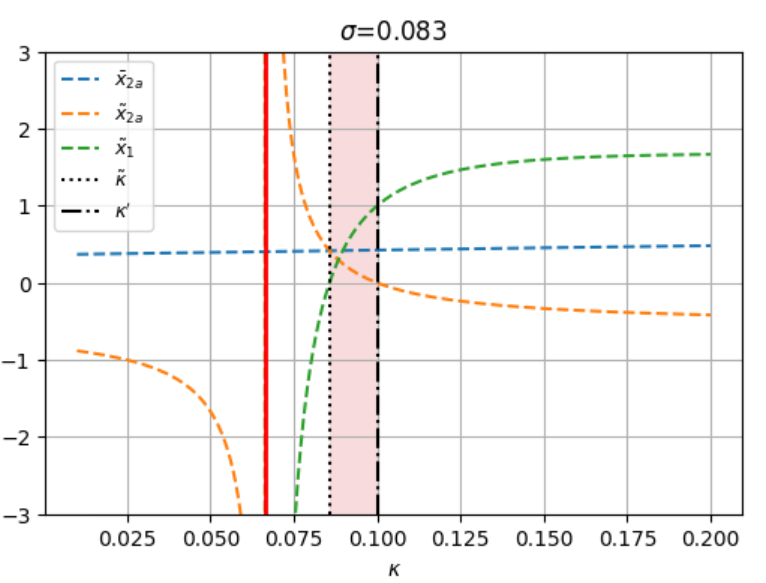}
    \end{subfigure}
    
    \caption{We plot, for $p=0.4$ and $s=0.1$, the evolution of $\bar{x}_{2a}$, $\tilde{x}_{2a}$ and $\tilde{x}_1$ according to $\kappa$, for two values of $\sigma$, the first one less \igb and the second one greater \bl than $\frac{2s^2}{p(1-s)}$.
    \igb The light-red 
    region corresponds to \bl founder control (in the first case) and coexistence (on the second) coinciding with the existence of $(\tilde{x}_{2a},\tilde{x}_{2d})$. 
   Moreover, the pictures verify the properties of $(\tilde{x}_{1},\tilde{x}_{2a})$ stated in Lemma ~\ref{lemma-xtildeexistence}. We also remark the apparition of an asymptote (red vertical line) that will be made explicit in the proof of Lemma~\ref{lemma-xtildeexistence} in Section~\ref{sec-stability}.}\label{fig-equilibria}
\end{figure}

\subsection{Three phases of an invasion and two branching processes}
\label{sec-3phasesMoran}

We will study the situation where at time zero, the initial condition of our Moran model is $\mathbf X^{(N)}(0)=(N-1,1,0)$, i.e.\ almost all individuals are from the resident type $1$ and there is one single \igb active \bl type $2$ ``mutant'' individual. 
Our main result below, Theorem~\ref{theorem-mainMoran}, \igb determines \bl 
the fate of the population after the appearance of this first mutant individual for large $N$. \igb To be able to state and prove \bl 
this theorem, \igb we need to define the (asymptotic) {\em survival probability} of an invader, and its {\em invasion fitness}. Both quantities can be derived from a suitable coupled branching process for the invader during the early phase of the invasion. Hence, we already here discuss the three classical phases of the invasion, similarly to the \igb strategy in the \bl seminal paper of Champagnat~\cite{C06}, where a competitive Lotka--Volterra type model with logistic competition was studied.
\begin{itemize}
    \item {\igb \bf Phase I:} As long as the mutant population is small compared to $N$ \igb (but not extinct), \bl the type $1$ population size \igb remains \bl close to $N$ thanks to the constant \color{black} population size \igb assumption. Hence, the influence of the resident on the invader is almost constant, and we \bl can approximate the mutant population size by \igb an autonomous \bl two-type binary branching process.
    \item {\igb \bf Phase II:} This phase \igb begins once \bl the total mutant population size has successfully reached $\eps N$ for some $\eps>0$ small but not depending on $N$ \igb (thus exiting from phase I), \bl which will have an asymptotically positive probability as $N\to\infty$ \igb only if \bl $(1,0)$ is an unstable equilibrium of the dynamical system~\eqref{2dMoran}. Given this, $\frac1N \mathbf X^{(N)}(\cdot)$ can now be approximated by the solution to~\eqref{2dMoran} (where we approximate $\frac1N X^{(N)}_{2d}(\cdot)$ by $n_{2d}(\cdot)=1-n_{1}(\cdot)-n_{2a}(\cdot)$). \igb The solution then \bl
    converges to $(0,\bar x_{2a})$ whenever this equilibrium is asymptotically stable, and in case it is unstable, it \igb will approach \bl $(\tilde x_1, \tilde x_{2a})$. 
    \item {\igb \bf Phase III:} This phase only occurs \igb if \bl the dynamical system tends to $(0,\bar x_{2a})$. In this last phase, $\frac1N \big(X_{2a}^{(N)}(t),X_{2d}^{(N)}(t)\big)$ stays close to $(\bar x_{2a},\bar x_{2d})$ where $\bar x_{2d}=1-\bar x_{2a}$, while the resident population goes extinct.
\end{itemize}

\igb We now collect some necessary properties of the branching process approximation of the invader during phase I. Here, as \bl long as the \igb total type $2$ \bl population is small but not extinct, the effect of dormancy due to self-interaction of type $2a$ can be ignored and type $2d$ individuals mostly get replaced by type $1$ individuals upon death (for $\kappa>0$), and hence we can approximate the process $\big(X_{2a}^{(N)}(t),X_{2d}^{(N)}(t)\big)$ by {\igb an autonomous (i.e.\ independent of type 1)\bl} branching process $\big(Z_{2a}^{(N)}(t),Z_{2d}^{(N)}(t)\big)$ that jumps from $(i,j) \in \mathbb N_0^2$ to: 
\begin{itemize}
    \item $(i-1,j)$ at rate $i\frac{1+s}{2}(1-p)$,
    \item $(i+1,j)$ at rate $i \frac{1-s}{2}$
    \item $(i-1,j+1)$ at rate $pi\frac{1+s}{2}$,
    \item $(i+1,j-1)$ at rate $\sigma j$ ,
    \item $(i,j-1)$ at rate $\kappa j$.
\end{itemize}
{\igb This approximation will be made precise in Section~\ref{sec-proofMoran}.}

{\igb We first compute the mean growth rate of the invading population, that is, the {\em invasion fitness} of the mutant, from this branching process. Indeed, the} mean matrix of the two-type branching process is
\[ 
J:=\begin{pmatrix}
    -s & \frac{1+s}{2} p \\
     \sigma & -\sigma  - \kappa 
\end{pmatrix}, 
\]
which has an eigenvalue with positive real part if and only if its determinant is negative, i.e.\ if 
\[ \kappa<\kappa', \numberthis\label{2invades1Moran} \] 
which happens to be the condition $(S_2)$ of instability of $(1,0)$ from Lemma~\ref{lemma-simultaneousstability}, see Section~\ref{sec-invasionfitnesses} for an interpretation. When there is such an eigenvalue, we will denote it by $\lambda_2$, \igb and this is the desired mean growth rate. \bl  Solving the characteristic equation, we obtain
\[\lambda_2=\frac{-s-\sigma-\kappa+\sqrt{(s+\sigma+\kappa)^2-4(s(\sigma+\kappa) -\frac{1+s}{2}p\sigma)}}{2} \color{black}. \numberthis\label{lambda2defMoran} \]

{\igb The second quantity required for our invasion theorem is the {\em survival} (or, equivalently, the {\em extinction}) {\em probability} of the invader, again obtained from this branching process. 
More precisely, denote the latter\bl} by $q_{2a}$ \igb for the \bl branching process started from one active (type $2a$) individual and zero dormant (type $2d$) individuals; then we have $q_{2a}=1$ in the subcritical case and $q_{2a}<1$ in the supercritical case.

Using \cite[Chapter V.3]{AN72}, the pair of extinction probabilities $(q_{2a},q_{2d})$ \igb satisfies \bl
\begin{align*}
    q_{2a}&=\frac{1+s}{2}(1-p)+\frac{1-s}{2}q_{2a}^2+p\frac{1+s}{2}q_{2d}, \\
    q_{2d}&=\frac{\sigma}{\sigma+\kappa}q_{2a}+\frac{\kappa}{\sigma+\kappa}.
\end{align*}
\igb Solving this for $q_{2a}$ leads to the quadratic equation \bl
\[0=\frac{1-s}{2}q_{2a}^2+\Big(p\frac{1+s}{2}\frac{\sigma}{\sigma+\kappa}-1\Big)q_{2a}+\frac{1+s}{2}\Big((1-p)+p\frac{\kappa}{\sigma+\kappa}\Big).\]

This second degree polynomial has a positive leading coefficient, is positive at $0$ and is equal to $0$ at $1$, moreover, \igb again \bl thanks to \cite[Chapter 5]{AN72}, it has a root in $(0,1)$ if the following matrix has an eigenvalue $>1$: 
\[ 
M:=\begin{pmatrix}
    1-s & \frac{1+s}{2} p \\
     \frac{\sigma}{\sigma+\kappa} & 0 
\end{pmatrix}. 
\]
As the characteristic polynomial is $\lambda^2-\lambda(1-s)-p\frac{1+s}{2}\frac{\sigma}{\sigma+\kappa}$, which at $1$ is equal to $s-p\frac{1+s}{2}\frac{\sigma}{\sigma+\kappa}$, and this quantity is precisely negative thanks to condition ~\eqref{2invades1Moran}, we have a root in $(0,1)$ for the polynomial function in $q_{2a}$.


In Phase III, when $\frac1N \big(X_{2a}^{(N)}(t),X_{2d}^{(N)}(t)\big) \approx (\bar x_{2a},\bar x_{2d})$ and $X_1^{(N)}(t)$ is already small compared to $N$ but not yet extinct, the type $1$ population size $X_{1}^{(N)}(t)$ can be approximated by the branching process $Z_1(t)$ that jumps from $n \in \N_0$ to
\begin{itemize}
    \item $n+1$ at rate $n \bar x_{2a} \frac{1+s}{2}(1-p) + \kappa \bar x_{2d}$,
    \item $n-1$ at rate $n \bar x_{2a} \frac{1-s}{2}$.
\end{itemize}
The net growth rate (i.e.\ birth rate minus death rate) of the branching process corresponding to the invasion of type $1$ against type $2$ is 
\[ -\lambda_1:=-\Big(\bar x_{2a}( \frac{1+s}{2}p-s) -\kappa\bar{x}_{2d}\Big) \color{black}, \numberthis\label{lambda1defMoran} \] i.e., when $p>\frac{2s}{1+s}$, \color{black} this branching process is supercritical if and only if 
\[ \bar{x}_{2a}<\frac{\kappa}{\kappa+\frac{1+s}{2}p-s}, \numberthis\label{phase3cond} \] which happens to be the condition for instability of $(0,\bar{x}_{2a})$ in that case, as we will see later in the proofs, 
and subcritical otherwise. When $p<\frac{2s}{1+s}$, the branching process is always supercritical.


\begin{remark}[\igb Reverse invasion of type 1 and fixation of type 2\bl]
    \label{remark-nomenclature}
    One \igb can \bl also study the reverse invasion direction, where a small population of type 1 tries to invade a resident population of type 2 with an initial condition such that $X_1^{(N)}(0) = 1$ and $\frac1N \big(X_{2a}^{(N)}(0),X_{2d}^{(N)}(0)\big)\approx (\bar x_{2a}, \bar x_{2d})$. The branching process approximating the type 1 population in Phase I of this invasion is the same as the one that approximates the type 1 population in Phase III of the invasion of type 2, and the same holds with the roles of type 1 and 2 swapped. We will formally not study the invasion of type 1 in this paper, but at least on a heuristic level, it is clear that type 1 can invade if and only if type 2 cannot fixate, while type 1 fixates if and only if type 2 cannot invade. As anticipated, the case where none (resp.\ both) of the types can invade the other one turns out to be equivalent to founder control (resp.\ stable coexistence). 
\end{remark}

For additional interpretation and equivalent conditions for the sub- and supercriticality of the branching processes and the emergence of the coordinate-wise positive equilibrium $(\tilde x_{2a}, \tilde x_{2d})$, see Section~\ref{sec-invasionfitnesses}.

\subsection{Statement of the main results}\label{sec-resultsMoran}
\igb To identify the three phases of the potential invasion and the corresponding extinction and coexistence scenarios, we define several hitting times. Indeed, 
for \bl $\eps \geq 0$ \igb let \bl 
\[ T_{\eps}^{1} = \inf \{ t \geq 0 \colon X_1^{(N)}(t)=\lfloor \eps N \rfloor \}, \]
the first time when the type $1$ population size reaches size $\lfloor \eps N \rfloor$. In particular, $T_0^1$ is the extinction time of type $1$. Certainly, $T_\eps^1$ also depends on $N$, but we suppress this in the notation for simplicity. Define further $X_{2}(t)=X_{2a}(t)+X_{2d}(t)$ for $t \geq 0$, and for $\eps \geq 0$ put
\[ T_{\eps}^2 = \inf \{ t \geq 0 \colon X_2^{(N)}(t) = \lfloor \eps N \rfloor \}. \numberthis\label{Teps2defMoran} \]
Moreover, when the coordinate-wise positive \igb (polymorphic) \bl equilibrium $(\tilde x_1,\tilde x_{2a})$ of the dynamical system~\eqref{2dMoran} exists, for $\beta>0$ we define the \emph{coexistence region}
\[ A_{\beta}^{\rm co} := (\tilde x_1-\beta,\tilde x_1+\beta) \times (\tilde x_{2a}-\beta, \tilde x_{2a}+\beta) \times (\tilde x_{2d}-\beta,\tilde x_{2d}+\beta) \]
and the \igb first entrance time in the coexistence region, \bl 
\[ T_{\beta}^{\rm co} = \inf \big\{ t\geq 0 \colon \frac1N \big(X_1^{(N)}(t),X_{2a}^{(N)}(t),X_{2d}^{(N)}(t)\big) \in A_{\beta}^{\rm co} \big\},  \]
where $\tilde x_{2d} = 1-\tilde x_1-\tilde x_{2a}$. When this equilibrium does not exist, we define $A_{\beta}^{\rm co} = \emptyset$ and consequently $ T_{\beta}^{\rm co} =+\infty$. 

\begin{defn}[Invasion scenarios]
\label{de:scenarios}
For our main theorem, we distinguish the following scenarios:
\begin{enumerate}[(i)]
\item  \emph{Type 1 (resident) fixates:} $\lambda_2<0$ and $\lambda_1 < 0$.
\item  \emph{Invasion and fixation of type 2:}  $\lambda_2>0$, $\lambda_1 >0$, and
conditional on the event $\{ T_0^2 > T_0^1  \}$, we have
\[ \lim_{N\to\infty} \frac{T_0^1}{\log N} = \frac{1}{\lambda_2}+\frac{1}{\lambda_1} \qquad \text{in probability} \numberthis\label{fixationtimeMoran} \]
\igb as well as \bl
\igb
\[ \lim_{N\to\infty} \frac1N \big(X_{2a}^{(N)}(T_0^1),X_{2d}^{(N)}(T_0^1)\big) = (\bar x_{2a}, \bar x_{2d} ) \qquad \text{in probability}. \numberthis\label{T01proportions} \]
\item  \emph{Coexistence:} $\lambda_2>0$, $\lambda_1<0$, and for all sufficiently small $\beta>0$, 
conditional on the event $\{ T_0^2 > T_0^1 \wedge T_{\beta}^{\rm co} \}$, we have
\[ \lim_{N\to\infty} \frac{T_\beta^{\rm co}}{\log N} = \frac{1}{\lambda_2} \qquad \text{in probability}. \numberthis\label{coextimeMoran}\]
\item \emph{Founder control:} $\lambda_2<0$ and $\lambda_1 > 0$. 
\end{enumerate}
\end{defn}
\bl


\begin{remark}
Recall from Section~\ref{sec-3phasesMoran} that $\lambda_2>0$ is equivalent to $q_{2a}<1$, i.e.\ the non-extinction of the mutant type 2 with asymptotically positive  probability. Hence, conditioning on $T_0^2$ having a large value is only relevant in the case of invasion and fixation of type 2 (ii) and coexistence (iii). Moreover, it follows from Lemma~\ref{lemma-xtildeexistence} that apart from the critical case $\sigma = \frac{2s^2}{p(1-s)}$, the coordinate-wise positive equilibrium $(\tilde x_1,\tilde x_{2a})$ exists if and only if $\lambda_1$ and $\lambda_2$ have opposite signs, i.e.\ precisely in the case of coexistence (iii) and founder control (iv). This way, replacing $\{ T_0^2 > T_0^1 \}$ with $\{ T_0^2 > T_0^1 \wedge T_\beta^{\rm co} \}$ for $\beta>0$ would make no difference in case (ii).
\end{remark}
\color{black}

Then our main result is the following:

\begin{theorem}[\igb Invasion, fixation, coexistence and founder control\bl]
\label{theorem-mainMoran}
Assume that $p \neq \frac{2s}{1+s}$, 
$\kappa \notin \{ \tilde{\kappa},\kappa' \}$,  and $\mathbf X^{(N)}(0)=(N-1,1,0)$. Then for all sufficiently small $\beta>0$ we have
\[ \lim_{N\to\infty} \P \big( T_0^2 < T_0^1 \wedge T_{\beta}^{\rm co} \big) = q_{2a} \]
and conditional on the event $\{ T_0^2 < T_0^1 \wedge T_{\beta}^{\rm co}  \}$, we have
\[ \lim_{N\to\infty} \frac{T_0^2}{\log N} =0 \qquad \text{in probability}. \numberthis\label{sublogMoran} \]
Moreover,
\begin{enumerate}[(a)]
    \item if $p<\frac{2s}{1+s}$, then (i) holds (1 fixates) for all $\kappa \geq 0$.
    \item if $p>\frac{2s}{1+s}$ and $\sigma > \frac{2s^2}{p(1-s)}$, then 
    \medskip
    \igb
    \begin{itemize}
        \item (i) holds (1 fixates) for $\kappa > \kappa'$, 
        \item (ii) holds (2 fixates) for $0 \leq \kappa <\tilde\kappa$,
        \item and (iii) (coexistence) holds for $\tilde \kappa < \kappa < \kappa'$.
    \end{itemize}
    \bl
    \medskip
    \item if $p>\frac{2s}{1+s}$ and $\sigma < \frac{2s^2}{p(1-s)}$, then 
    \medskip
    \igb
    \begin{itemize}
        \item (i) holds for  $\kappa > \tilde\kappa$,
        \item (ii) holds (2 fixates) for $0 \leq \kappa <  \kappa'$,
        \item and (iv) (founder control) holds for $\kappa \in ( \kappa',\tilde\kappa)$.
    \end{itemize}
    \bl
\end{enumerate}

\end{theorem}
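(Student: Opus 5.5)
We follow the three-phase strategy of Champagnat~\cite{C06}, adapted as in \cite{BT19}. First, parameter bookkeeping: by Lemma~\ref{lemma-simultaneousstability}, the signs of $\lambda_2$ and $\lambda_1$ match the (in)stability of $(1,0)$ and $(0,\bar x_{2a})$. Indeed, $\lambda_2>0$ iff $\kappa<\kappa'$. Also $\lambda_1<0$ iff \eqref{phase3cond} holds, which we verify by direct algebra using \eqref{x2abar,kappa>0} to be equivalent to $\kappa>\tilde\kappa$. Combined with the ordering of $\tilde\kappa,\kappa'$ from the last assertion of Lemma~\ref{lemma-simultaneousstability}, this yields the case distinction (a)--(c) and the sign conditions in (i)--(iv). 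For (a), the key point is that $p<\frac{2s}{1+s}$ forces $\det J>0$ and $\lambda_1<0$.

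\emph{Phase I.} Fix small $\eps>0$ and work up to $T^2_\eps\wedge T^2_0$. On this interval $X_1^{(N)}\ge N(1-\eps)$, so every jump rate of $(X_{2a},X_{2d})$ lies within a factor $1\pm C\eps$ of the rates of $(Z_{2a},Z_{2d})$. We couple $X_2$ between two two-type branching processes $Z^{\pm,\eps}$ with these perturbed rates; their extinction probabilities and growth rates converge to $q_{2a}$ and $\lambda_2$ as $\eps\to0$. In the subcritical case this gives extinction with probability $\to1$ in time $o(\log N)$. In the supercritical case it gives the dichotomy: extinction within $o(\log N)$ time (probability $\to q_{2a}$), which yields \eqref{sublogMoran}, or reaching $\lfloor\eps N\rfloor$ at time $\approx\log N/\lambda_2$ (probability $\to1-q_{2a}$). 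The latter uses standard results for supercritical multitype branching processes, namely $e^{-\lambda_2 t}Z\to W$ a.s. with $\{W>0\}=$ survival. At $T^2_\eps$ the composition $(X_{2a},X_{2d})/X_2$ is close to the Perron eigenvector of $J$. This gives the statement about $\P(T_0^2<T_0^1\wedge T^{\rm co}_\beta)$.

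\emph{Phase II.} Starting near $(1-\eps,\eps v_{2a},\eps v_{2d})$, Ethier--Kurtz (\cite{EK}) gives closeness to the ODE~\eqref{2dMoran} on any fixed horizon. We therefore need a deterministic statement: for small $\eps$, trajectories from such points reach any neighbourhood of $(0,\bar x_{2a})$ (if $\lambda_1>0$) or of $(\tilde x_1,\tilde x_{2a})$ (if $\lambda_1<0$, $\lambda_2>0$) in bounded time $T(\eps,\beta)$. \textbf{This global convergence is the main obstacle.} We would use the global stability results of Section~\ref{sec-preliminaryproofsMoran}. The approach is to show, via Poincaré--Bendixson on the invariant triangle $\{x_1,x_{2a}\ge0,x_1+x_{2a}\le1\}$, that there are no periodic orbits: use a Dulac function such as $1/(x_1x_{2a})$, and check the sign of the divergence of the rescaled field. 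Together with uniqueness of the interior equilibrium (Lemma~\ref{lemma-xtildeexistence}) and instability of $(1,0)$, the unstable manifold of $(1,0)$ entering the interior must converge to the remaining attracting equilibrium. In case (iii) this already yields \eqref{coextimeMoran}, because Phase II takes $O(1)$ time.

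\emph{Phase III} (case (ii)). Once near $(0,\bar x_{2a},\bar x_{2d})$, we show the stochastic system stays in a small neighbourhood for times $e^{cN}$, by large deviations and stability of $(0,\bar x_{2a})$. We then couple $X_1$ between branching processes with rates within $C\eps$ of $Z_1$. These are subcritical with rate $-\lambda_1\pm C\eps$ because $\lambda_1>0$, so extinction from $\eps N$ takes time $\approx\log N/\lambda_1$. Adding the phase durations $\log N/\lambda_2+O(1)+\log N/\lambda_1$ and letting $\eps\to0$ gives \eqref{fixationtimeMoran}. Proximity of the type-2 coordinates at $T^1_0$ gives \eqref{T01proportions}. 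Cases (i) and (iv) need only the Phase I subcritical part.
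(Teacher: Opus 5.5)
Your outline follows the paper's proof essentially step by step. Phase~I uses the same sandwich between perturbed two-type branching processes. Phase~II uses Ethier--Kurtz on a fixed horizon together with Poincaré--Bendixson and the Dulac function $1/(x_1x_{2a})$ (Lemmas~\ref{lemma-convtoaneq} and~\ref{lemma-convtoaneq2}). Phase~III uses Freidlin--Wentzell control of the type-2 coordinates and a sandwich of $X_1^{(N)}$ between subcritical branching processes (Lemma~\ref{lemma-phase3Moran}).

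The one step that does not follow from what you cite is the claim that at $T^2_\eps$ the active fraction $X^{(N)}_{2a}/X^{(N)}_2$ is close to the Perron vector. The coordinatewise sandwich only bounds this fraction between quantities like $Z^{\eps,-}_{2a}/Z^{\eps,+}_2$ and $Z^{\eps,+}_{2a}/Z^{\eps,-}_2$. The bounding processes have Malthusian parameters differing by order $\eps$, so at times of order $\log N$ their ratio can be of order $N^{c\eps}$, and these bounds are uninformative. The paper treats this separately in Lemma~\ref{lemma-phase2Moran}, via a semimartingale decomposition of the fraction. Even there it only obtains that the fraction enters $(\pi_{2a}-\delta,\pi_{2a}+\delta)$ at some time in $[T^2_\eps,T^2_{\sqrt\eps}]$ while $X^{(N)}_2\ge \eps N/C$.

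In your route you can simply drop the claim, and thereby bypass that lemma. The set $\{x\in\bar\Delta\colon x_1=1-\eps\}$ is compact and, for small $\eps$, contains no equilibrium. Every trajectory started on it enters $\Delta$ immediately. Hence Lemma~\ref{lemma-convtoaneq2}, together with continuity of the flow and compactness, gives a uniform bound on the time needed to enter the $\beta$-neighbourhood of the target, whatever the composition at $T^2_\eps$.

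Two smaller corrections to your Phase~II sketch.

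First, in case (iii) the polymorphic equilibrium is not known to be attracting. The available argument only shows that it lies in the $\omega$-limit set. That is enough, because $T^{\rm co}_\beta$ is a first-entrance time.

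Second, instability of $(1,0)$ does not by itself keep interior trajectories from accumulating there. What you need is that its stable manifold meets $\bar\Delta$ only at $(1,0)$. This holds because, in the coordinates $(x_{2a},x_{2d})$, the linearisation at $(1,0)$ is the transpose of the cooperative matrix $J$. By Perron--Frobenius, its stable eigenvector therefore has entries of opposite signs. Similarly, in case (iii) the stable manifold of the saddle $(0,\bar x_{2a})$ is the invariant edge $\{x_1=0\}$.
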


\igb See \bl Figures~\ref{fig:coexistence_range}, \ref{fig:no_coexistence_range}, and \ref{fig-regimes} for \igb a vizualisation of the corresponding parameter regimes.\bl

\begin{figure}[h!]
    \begin{center}
        \begin{tikzpicture}
            \draw[|-|, orange] (7.3,1.3) -- (1,1.3) node[above,yshift=0.3cm, right] {$(0, \bar x_{2a})$ stable};         
            \draw[|-, blue] (3.7,1.9) -- (10,1.9)node[above,yshift=0.3cm, left] {$(1, 0)$ stable};    
            \fill[pattern=checkerboard, pattern color=purple!20] (3.7,0) rectangle (7.3,1);
            \node at (5.5,0.5) {founder control};
            \fill[red!10] (1,0) rectangle (3.7,1);
            \node at (2.5,0.5) {$2$ fixates};
            \fill[blue!20] (7.3,0) rectangle (10,1);
            \node at (8.5,0.5) {$1$ fixates};
            \node at (8.5,0.5) {$1$ fixates};
            \draw[purple, thick, dashed] (3.7,0.5) -- (3.7,-0.5) node[below] {$ \kappa'$};
            \draw[blue, thick, dashed] (7.3,0.5) -- (7.3,-0.5) node[below] {$\tilde \kappa$};
            \draw[black, thick, dashed] (1,0.5) -- (1,-0.5) node[below] {$0$};
            \draw[->, thick] (0.7,0) -- (10.5,0) node[below, right] {$\kappa$};   
        \end{tikzpicture}
        \caption{In the case $p>\frac{2s}{1+s}$ and $\sigma < \frac{2s^2}{p(1-s)}$, we have $\tilde \kappa >\kappa'$. In the founder control regime where $\kappa \in ( \kappa', \tilde \kappa)$, both equilibria $(0,\bar x_{2a})$ and $(1,0)$ are locally asymptotically stable, and the coordinate-wise positive equilibrium $(\tilde x_1, \tilde x_{2a})$ also exists (but we expect it to be unstable).}\label{fig:coexistence_range}
        \end{center}
    \end{figure}
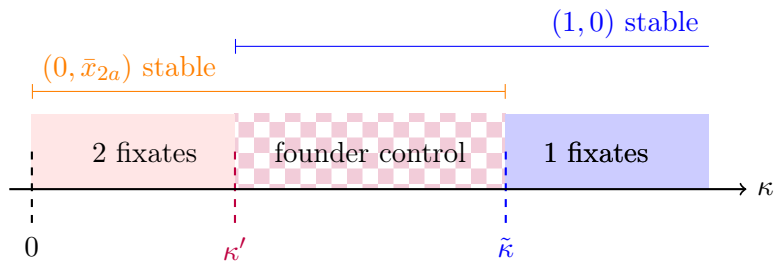

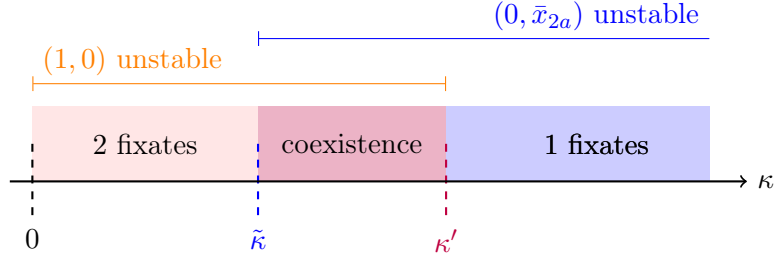
\begin{figure}[h!]
    \begin{center}
        \begin{tikzpicture}
            \draw[|-|, orange] (6.5,1.3) -- (1,1.3) node[above,yshift=0.3cm, right] {$(1,0)$  unstable};          
            \draw[|-, blue] (4,1.9) -- (10,1.9) node[above,yshift=0.3cm, left] {$(0, \bar x_{2a})$ unstable};    
            \fill[color=purple!30] (4,0) rectangle (7,1);
            \node at (5.25,0.5) {coexistence};
            \fill[red!10] (1,0) rectangle (4,1);
            \node at (2.5,0.5) {$2$ fixates};
            \fill[blue!20] (6.5,0) rectangle (10,1);
            \node at (8.5,0.5) {$1$ fixates};
            \node at (8.5,0.5) {$1$ fixates};
            \draw[blue, thick, dashed] (4,0.5) -- (4,-0.5) node[below] {$\tilde \kappa$};
            \draw[purple, thick, dashed] (6.5,0.5) -- (6.5,-0.5) node[below] {$\kappa'$};
            \draw[black, thick, dashed] (1,0.5) -- (1,-0.5) node[below] {$0$};
            \draw[->, thick] (0.7,0) -- (10.5,0) node[below, right] {$\kappa$}; 
        \end{tikzpicture}
        \caption{In the case $p>\frac{2s}{1+s}$ and $\sigma > \frac{2s^2}{p(1-s)}$, we have $\tilde \kappa < \kappa'$. In the coexistence regime where $\kappa \in (\tilde \kappa, \kappa')$, both equilibria $(0,\bar x_{2a})$ and $(1,0)$ are unstable, and the coordinate-wise positive equilibrium $(\tilde x_1, \tilde x_{2a})$ emerges (and we expect it to be asymptotically stable, see Conjecture~\ref{conj-Moran}).}\label{fig:no_coexistence_range}
        \end{center}
    \end{figure}

The proof of Theorem~\ref{theorem-mainMoran} will be carried out in Section~\ref{sec-proofMoran}. \igb It will be \bl based on the proofs of the previous lemmas on the existence and stability of equilibria of the dynamical system~\eqref{2dMoran} (see Section~\ref{sec-stability}), as well as on some additional results on the convergence of the dynamical system, which will be presented in Section~\ref{sec-convergenceMoran} together with their proofs. As already indicated, despite the fact that the population size of our Moran model is not governed by logistic competition but constant, the main structure of the invasion analysis follows the classical three-phase paradigm originating from~\cite{C06}. We note that the approximation by branching processes and logistic curves is also reminiscent of the analysis of selective sweeps in the population-genetic literature, in particular in the context of genetic hitchhiking (see e.g.\ \cite{EPW} and the references therein). However, there are also some differences. For example, since type 2 has two sub-types in our model, we need to employ Freidlin--Wentzell type large-deviation arguments to conclude that the type 2a (resp.\ 2d) population size scaled by $N$ stays close to $\bar x_{2a}$ (resp.\ $\bar x_{2d}$) during Phase III. This is done most conveniently using the methods of~\cite{C06}. \color{black} 

New difficulties arise for the behaviour of the dynamical system~\eqref{2dMoran} in the case \bl $\kappa>0$. \igb While for $\kappa=0$ \bl the proof of Theorem~\ref{theorem-mainMoran} can almost be seen as a simplified version of the one of the main results of~\cite{BT19} (which in turn uses various proof techniques from~\cite{C+19}) up to moderate modifications accounting for the constant population size. In contrast, for $\kappa>0$ the analysis of the equilibria of the system~\eqref{2dMoran} and their stability becomes surprisingly tedious, and the proof of the convergence of the dynamical system to the desired equilibrium also requires some additional arguments such as the Bendixson--Dulac criterion. 

\begin{remark}[\igb General comments on the assertions of Theorem~\ref{theorem-mainMoran}\bl]
\
\begin{enumerate}
\item
Note that the behaviour of the dynamical system becomes richer for $\kappa>0$. Indeed, for $\kappa=0$, the case (iii) of coexistence and the one (iv) of founder control never occur. Indeed, in Figure~\ref{fig-regimes}, we see that the width of the blue founder control region (iv) vanishes as $\sigma \downarrow 0$. 
\item
Note that there is no result in Theorem~\ref{theorem-mainMoran} corresponding to~\eqref{T01proportions} about $X_{1}^{(N)}(T_0^2)$ in case (i) because $X_{1}^{(N)}(T_0^2)=N$ almost surely on the event $\{ T_0^2<\infty \}$. 
\item
Note also the additional term $+\frac{1}{\lambda_1}$ in~\eqref{fixationtimeMoran} in case (ii) compared to \eqref{coextimeMoran} in case (iii). This is the duration of the last phase of the invasion of type $2$ where the former resident (type $1$) population goes extinct.
\item 
When $\sigma = \frac{2s^2}{p(1-s)}$, $\tilde\kappa$ and $ \kappa'$ coincide. Here, we do not have any assertion about the critical case $\kappa = \tilde \kappa = \kappa'$. For $\kappa$ smaller than this expression, type $2$ fixates, and for $\kappa$ larger than it, type $1$ fixates. The proof of Theorem~\ref{theorem-mainMoran} applies in this case for $\kappa \neq \tilde \kappa = \kappa'$, but for simplicity we did not include this as a separate case in the theorem.
\end{enumerate}
\end{remark}

\begin{figure}
\includegraphics[scale=3]{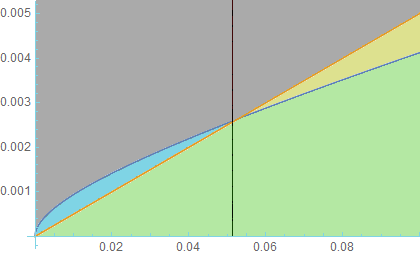}
\caption{For fixed values of $p=0.1$, and $s=0.05$ (so that $p > \frac{2s}{1+s}$), we plot $\tilde\kappa$ (blue) and $ \kappa'$ as functions of $\sigma$. Light green regime: fixation of type $2$; blue regime: founder control; gray regime: fixation of type $1$; yellow regime: stable \color{black} coexistence. The vertical line $\{ \sigma = \frac{2s^2}{p(1-s)}\}$ separates the cases (b) (left) and (c) (right). The gray regime corresponds to scenario (i), the light green one to scenario (ii), the yellow one to scenario (iii), and the blue one to scenario
(iv).
}\label{fig-regimes}
\end{figure}

\begin{remark}[Stability of the polymorphic equilibrium]
\label{remark-maintheorem}
According to Lemma~\ref{lemma-simultaneousstability}, 
case (iii) corresponds to the case when $(1,0)$ and $(0,\bar x_{2a})$ are both unstable. In this case, $(\tilde x_1,\tilde x_{2a})$ is coordinate-wise positive and we expect that it is asymptotically stable, however, this seems to be tedious to show directly via linearization.  What we actually show below in the proof of Lemma~\ref{lemma-convtoaneq2} is that in this case, solutions $(x_1(t),x_{2a}(t))$ to the dynamical system~\eqref{2dMoran} satisfying $x_1(0),x_{2a}(0)>0$ and $x_1(0)+x_{2a}(0)<1$ cannot converge to a periodic orbit as $t\to\infty$, neither to $(1,0)$ or $(0,\bar x_{2a})$, not even along a diverging sequence of times. Consequently, using the Poincaré--Bendixson theorem, the $\omega$-limit set of any solution to the dynamical system~\eqref{2dMoran} with an initial condition satisfying $x_1(0),x_{2a}(0)>0$ and $x_1(0)+x_{2a}(0)<1$ has to contain the equilibrium $(\tilde x_1,\tilde x_{2a})$. We expect but cannot prove that the $\omega$-limit set cannot be  a homoclinic orbit or a composition of multiple homoclinic orbits around this equilibrium, but just a singleton consisting of the equilibrium itself. (Note that a homoclinic orbit around $(\tilde x_1, \tilde x_{2a})$ could only exist if this equilibrium were a saddle point, which would, however, contradict Conjecture~\ref{conj-Moran}.)

In Section~\ref{sec-relationtoexistingmodels} we briefly compare our results on coexistence, fixation of one type, and founder control to some prior results in related stochastic population-dynamic models (partially with dormancy).
\end{remark}

\begin{figure}
\includegraphics[scale=1]{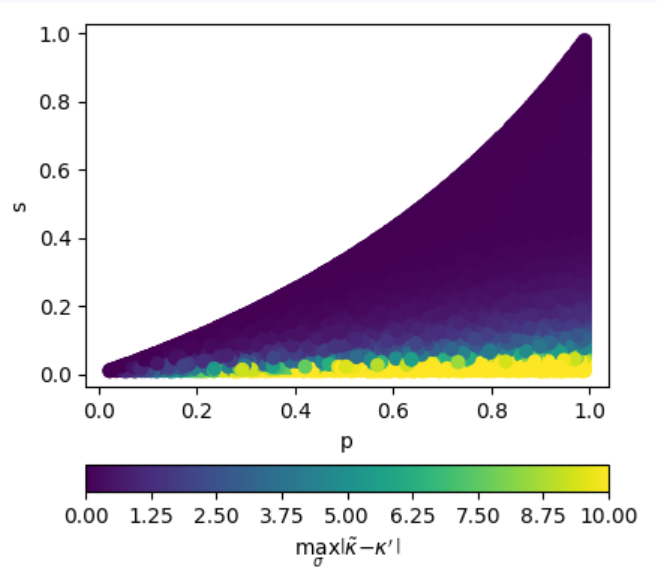}
\caption{We plot, for $p$ and $s$ satisfying $p>\frac{2s}{1+s}$, the maximum of the size of the interval where we have coexistence or founder control according to $\sigma$. Here, we generated a large number of random pairs $(s,p)$ such that $p>\frac{2s}{1+s}$  and then chose the maximizer of $|\tilde\kappa-\kappa'|$ from a finite set of values of $\sigma \in (0,1)$.}\label{fig-coexistence_interval_size}
\end{figure}

\begin{remark}[Dependence of the size of the coexistence/founder control region on $s$ and $p$\bl]
    The dependence of the size of coexistence or founder control regions on the underlying parameters is rather complex. While Figures \ref{fig:coexistence_range} and \ref{fig:no_coexistence_range} are purely schematic, and Figure \ref{fig-regimes} is for fixed $s$ and $p$, we conclude this section with a quick glance at the dependence of this region on $s$ and $p$. Indeed, as \bl we can see in Figure~\ref{fig-coexistence_interval_size}, the size of the coexistence/founder control interval 
    becomes \bl larger for small values of $s$ (yellow region), except when $p$ is also small. \bl Indeed, $\kappa'$ tends to infinity when $s$ tends to $0$ (unless it is compensated by simultaneously taking $p$ to $0$), while $\tilde{\kappa}$ converges to a positive value for $s$ to $0$ (cf.\ its representation in Lemma \ref{lemma-simultaneousstability}), so that we obtain an increasing coexistence region for small $s$ (and all other parameters fixed). (This will indeed be a coexistence and not a founder control region for small $s$ because $\kappa'$ will be larger than $\tilde\kappa$.) \color{black}
    \bl
    Intuitively, \bl for decreasing \bl $s$ the residents lose their reproductive \bl advantage, and need large values of $\kappa$ to compensate this loss from the high mortality in the seed bank. \bl This follows from the fact that the residents fixate for $\kappa>\kappa'$. Putting it differently, \color{black} for fixed $\kappa$ in the invasion regime of residents (i.e.\ the union of the regime of fixation of type 1 and the one of coexistence)\color{black}, the fixation of residents will be replaced by a coexistence regime when $s$ tends to 0. 
\end{remark}

\section{Discussion}\label{sec-discussionMoran}

In this section, we \igb begin with a \bl review our modelling assumptions in Section \ref{sec-modellingchoicesMoran}, discuss the \igb particular \bl consequences of the fixed population size \igb assumption of \bl the Moran model in Section~\ref{sec-consequences}, \igb review \bl some relations to existing work in Section \ref{sec-relationtoexistingmodels}, interpret the mutual invasion conditions in Section \ref{sec-invasionfitnesses} and finally comment on the metastability of coexistence equilibria for stochastic systems in Section \ref{sec-metastability}.

\subsection{Modelling choices}
\label{sec-modellingchoicesMoran}

In our Moran model, reproduction is incorporated by picking pairs of particles at a fixed rate, and choosing one of them for (potential) death and the other one for potential reproduction. This is formally  similar to modelling competition events in the approximation of Lotka--Volterra models (as in \cite{FM04}), where, however, only one particle dies, and reproduction \igb is \bl completely disentagled from competitive events. We include a selective advantage of one type with the help of our parameter $s$, which gives a higher chance to the type $1$ particle to become the reproducing parent. Note that the fixed population size assumption of the Moran model here enforces the \igb simultaneity \bl  of competitive death and reproduction.

One novelty in this Moran set-up is that particles exhibiting a dormancy trait can avoid competitive death and thus compensate selective disadvantages in a particular way, similar to the Lotka--Volterra set-up in \cite{BT19}. We deem it  natural to allow escape into dormancy both for intra- and inter-species competition (with the same probability).

A similar entanglement appears in the dormant state \igb for strictly positive death rates $\kappa$. \bl Here, if a dormant particle dies, we assume that its spot is immediately taken by a uniformly chosen particle from the \igb whole (currently present) \bl population. Since we do not consider this a competitive/selective event, individuals of \color{black} all types are equally likely to recolonize. Again, this instant recolonization is due to the fixed population size assumption of the Moran. Interestingly, this is \igb also precisely the mechanism that allows for the emergence of coexistence (for suitable parameters), \bl in contrast to the Lotka--Volterra type model in \cite{BT19}, where dormant individuals died at a linear rate without replacement and coexistence (or founder control) was impossible. See the discussion in Section~\ref{sec-consequences} below.

In the light of the above, two modification of the dynamics of the model seem to be \igb plausible, \bl although \igb we refrain \bl from investigating their full consequences here:
\begin{itemize}
    \item One could introduce a selective advantage according to $s$ \igb also \bl when re-colonizing vacant slot due to death in the dormant state. 
    \item One could pick different different transition probabilities, \igb say \bl $p_1, p_{2a}$, for escape into dormancy depending on the type of the potential parent ($1$ or $2a$), and even $p_{2a}=0$ might be \igb considered. \bl
\end{itemize}
We believe that in both cases, coexistence regimes still exist, but \igb do not delve into \bl  the details here.

Regarding time-scales, recall that in our model we assume strong selection \igb of order \bl $O(1)$ and weak dormancy times (also $O(1)$) on the ecological scale. One might be tempted to switch  to an evolutionary scale $O(N)$ with weak selection $s/N$ and strong dormancy times $O(N)$ and try to identify a corresponding diffusion limit. We leave this approach for future investigation.

\subsection{Consequences of the Moran modelling framework}
\label{sec-consequences}
It turns out that the {\em fixed population size assumption} of the Moran model has profound consequences, both on the level of results and the technical treatment of the model. These stem from two main effects: First, the death of an individual always occurs simultaneously with birth of another one, so that there are no `vacant population slots' (instant recolonization). Second, the fixed total population size means that in the description of the model one type out of $1, 2a, 2d$ is always redundant, and the corresponding ODE limits are two- instead of three dimensional. 


We begin with the consequences of instant recolonization, in particular during the death of a dormant individual. In absence of this, i.e.\ for $\kappa=0$, invasion implies fixation: Coexistence and founder control never occur. This was also already true in the Lotka--Volterra type model of~\cite{BT19} with symmetric competition and competition-induced dormancy, even in the variant of the model that included death in the dormant state. While for $\kappa=0$, the proof of Theorem~\ref{theorem-mainMoran} is an easy adaptation of that of the main result of~\cite{BT19}, this is not true anymore for $\kappa>0$ and in that case, the behaviour of the model of the present paper becomes much more complex.  

The point is that recolonization of slots vacated by type 2 particles always helps the type 1 particles. However, it turns out that the effect of increasing $\kappa$ is different for small $\sigma$ than for large $\sigma$. Assume that the choice of the parameters $s,p,\sigma$ is such that for $\kappa=0$, type 2 fixates. Then, if $\sigma$ is small, increasing $\kappa$ first makes $(1,0)$ stable and thus it leads to founder control (lack of invasion of type 2), and only afterwards it makes $(0,\bar x_{2a})$ unstable, leading to the fixation of type 1. When $\sigma$ is large, the stability of these two equilibria changes in the reverse order, and in the intermediate regime where $(0,\bar x_{2a})$ is already unstable but $(1,0)$ is still also unstable, we have coexistence (invasion of type 1). Eventually, for $\kappa$ large enough, type 1 will again reach fixation.

The lower dimensionality, in particular for the many-particle ODE limit, has technical advantages, allowing for a more complete analysis of asymptotic behaviour involving elements of the Poincaré--Bendixson theory, in particular the Bendixson--Dulac criterion. However, on a technical level, computations related to the existence and stability of equilibria of the dynamical system are often still substantially more involved than in the case of the dynamical system corresponding to~\cite{BT19}, despite the fact that the latter system is rank-3 three-dimensional.

\subsection{Relation to existing models}\label{sec-relationtoexistingmodels} The dichotomy between a coexistence regime and a founder control regime has been observed in many different types of population-dynamic models before. One of the most classical population-dynamic models is the two-type competitive Lotka--Volterra model. Here, when competition is {\em symmetric} between the two types, invasion implies fixation (namely, the type with the higher net reproduction rate wins). But if one makes competition {\em asymmetric}, depending on the strength of intra- vs.\ interspecific competition, a regime of stable coexistence or one of founder control emerges. In the regime of stable coexistence, the system has a stable coordinate-wise positive equilibrium, while in the case of founder control, it has an unstable one, and in the case of fixation of one type it has no coordinate-wise positive equilibrium. See e.g.\ \cite{Bomze} for a reference. 

In these Lotka--Volterra models, coexistence is equivalent to the coordinate-wise nonzero equilibrium having two positive coordinates and being stable\color{black}, which also both have to be smaller than the corresponding positive coordinates of the equilibria with exactly one positive and one zero coordinate. The characterization of founder control is similar, but here the coordinate-wise positive equilibrium is unstable. \color{black} This is also similar to our model because here, for $\sigma > \frac{2s^2}{p(1-s)}$, type 1 can invade if and only if $\tilde x_{2a} < \bar x_{2a}$ and type 2 can invade if and only if $\tilde x_{2a}>0$, i.e.\ $\tilde x_1 < 1$. I.e., intuitively speaking, type 1 can invade if and only if it can reduce the type $2a$ population size at equilibrium, and type 2 can invade if and only if it can reduce the type 1 population size at equilibrium. We do not know whether $(\tilde x_1,\tilde x_{2a})$ is stable in the coexistence regime and unstable in the founder control regime, but we expect that this is the case (cf.\ Conjecture~\ref{conj-Moran}). \color{black}

A microbial phenomenon with a similar but even stronger effect than asymmetric competition is horizontal gene transfer. There have been different ways of modelling it in the mathematical literature, see e.g.\ \cite{BCFMT16,BCFMT18,CMT19,BPT23}, but the way of modelling most similar to the instantaneous recolonization in the present paper is the one in~\cite{BT20}. Indeed, here any individual of one type is able to turn individuals of the other type into its own type, at a rate proportional to the population size of the other type. Thanks to this feature, under certain conditions even an unfit type that would be able to survive on its own can coexist with the other type (even if the latter is capable of competition-induced dormancy modelled similarly to~\cite{BT19}), which is not the case for usual asymmetric competition. Apart from the regimes of fixation and coexistence, for different choices of the parameters, a founder control regime was also identified in~\cite{BT20}.

\subsection{Interpretation of the invasion conditions} \label{sec-invasionfitnesses}

We begin with the analysis of the invasion of a single active type $2$ individual  into a resident population of type $1$. Let us recall from~\eqref{2invades1Moran} that this invasion is possible if the branching process approximating type $2$ in the initial phase is supercritical, which means that the equilibrium $(1,0)$ is unstable. This happens if and only if
\[ \kappa< \frac{\sigma}{s}\Big(\frac{1+s}{2}p-s\Big)=\kappa'.  \] 
Equivalently, we can also write this condition as 
\[s < \frac{1+s}{2} p \frac{\sigma}{\kappa+\sigma}.\numberthis\label{BT19style} \]
For $\kappa=0$, this coincides with the condition $\frac{2s}{1+s} < p$ (which is necessary for the invasion of type $2$ for any $\kappa \geq 0$ and is sufficient for $\kappa=0$\color{black}), i.e.\ the evolutionary outcome then does not depend on $\sigma$.

The right-hand side of~\eqref{BT19style} can be interpreted as the probability that in the initial phase of the invasion of type $2$, a type $2a$ individual survives a potential selective death event caused by a type $1$ individual via switching to dormancy, and then waking up before dying in the dormant state. Here, we note that when almost the entire population is of type $1$, the death of dormant individuals, due to instant recolonization by a uniformly chosen individual from the whole population, leads to a creation of type $1$ individuals with high probability. The left-hand side is the selective disadvantage of type $2a$ individuals relative to the initially resident type $1$. 
We also clearly see in~\eqref{BT19style} that if we fix all parameters but $\kappa$, then the larger $\kappa$ is, the harder it becomes for type $2$ to invade.

We can further rewrite~\eqref{BT19style} as
\[ \frac{\kappa}{\kappa+\sigma} s < \frac{\sigma}{\kappa+\sigma} \Big( \frac{1+s}{2}p -s \Big) . \]
This means that compared to the case $\kappa=0$, the selective advantage $s$ of type $1$ gets multiplied by the probability $\frac{\kappa}{\kappa+\sigma} $ that a dormant individual dies before waking up, while the benefit $\frac{1+s}{2}p -s$ of type 2 coming from its ability of selection-induced dormancy gets multiplied by the complementary probability $\frac{\sigma}{\kappa+\sigma} $, and type $2$ still has to have a larger advantage than type $1$ to invade. 
 This way, while for $\kappa=0$ the value of $\sigma$ is indifferent and only the relation between $s$ and $p$ matters, for $\sigma>0$ we see an additional trade-off between $\kappa$ and $\sigma$.

For the opposite invasion direction (type $1$ invading into type $2$ residents), recall that according to~\eqref{lambda1defMoran}, the branching process of type $1$ is supercritical, i.e.\ $(0,\bar x_{2a})$ is unstable, if and only if
\[ \kappa \bar x_{2d} > \Big( \frac{1+s}{2}p - s \Big) \bar x_{2a}. \]
Each individual in this branching process creates new individuals thanks to the death of type $2d$ individuals at rate $\kappa \bar x_{2d}$ and dies due to its disadvantage caused by the selection-induced dormancy of type $2$ at rate $ \big( \frac{1+s}{2}p - s \big) \bar x_{2a}$. The difference between this effective birth rate and death rate must be positive in order that type $1$ can invade.
It is remarkable that the invasion properties of type $1$ thus depend on the relative composition of the active and dormant sub-populations of the resident type $2$.

If we fix all parameters but $\sigma$, $\bar x_{2a}$ will tend to 1 and $\bar x_{2d}$ to 0 as $\sigma \to \infty$, and therefore for fixed $\kappa$ it becomes more and more difficult for type $1$ to invade if one increases $\sigma$.

We have seen in Theorem~\ref{theorem-mainMoran} that coexistence can only occur if and only if \[ \sigma > \frac{2s^2}{p(1-s)} \numberthis\label{sigmacond} \] and $\tilde \kappa < \kappa < \kappa'=\frac{\sigma}{s}\Big(\frac{1+s}{2}p-s\Big)$. This condition is equivalent to the simultaneous supercriticality of the type $1$ and type $2$ branching processes, i.e.\
\[ \kappa'= \frac{\sigma}{s}\Big(\frac{1+s}{2}p-s\Big) > \kappa > \frac{\bar x_{2a}}{\bar x_{2d}} \Big(\frac{1+s}{2}p-s\Big). \]
Thus, for the existence of such a $\kappa$ it is necessary that
$\sigma \bar x_{2d} > s \bar x_{2a}.$
Using that $\bar x_{2d} = 1 - \bar x_{2a}$, the latter condition is equivalent to 
\[ \bar x_{2a} < \frac{\sigma}{s+\sigma}. \]
This gives an upper bound on the size of the active equilibrium population size of type $2$ such that mutual invasion, and in particular of type $1$, can be possible.

\begin{remark}\label{remark-sigma}
As we have seen, condition~\eqref{sigmacond} is equivalent to the existence of $\kappa$ for which there is coexistence, while $\sigma < \frac{2s^2}{p(1-s)}$ is equivalent to the presence of founder control for some appropriately chosen $\kappa$. We often write these inequalities formally as conditions on $\sigma$ because this is convenient for some proofs, but note that the roles of $p$ and $\sigma$ in these inequalities are actually symmetric. From a biological point of view, we think that the most suitable way of writing~\eqref{sigmacond} is the following:
\[ p\sigma > \frac{2s^2}{1-s}. \]
This means on the one hand that even if the selection-induced switching mechanism is not so strong (i.e.\ $p$ is relatively small), a fast resuscitation (i.e.\ large $\sigma$) can compensate this in order to reach coexistence (for some $\kappa$). On the other hand, a slow resuscitation (i.e.\ small $\sigma$) can (for some $\kappa$) be compensated by an efficient selection-induced switching to dormancy (i.e.\ large $p$).
\end{remark}
\normalcolor

\subsection{Stable vs metastable coexistence}
\label{sec-metastability}
An ubiquitous and well-known caveat that arises when stochastic individual based systems are approximated by ODEs is that stable equilibria in the dynamical system typically do not imply perpetual \igb stability \bl in the stochastic system, \igb due to random fluctuations, \bl even in the Moran model set-up, where the overall population size remains constant for all times.

In particular, \igb in our set-up, \bl even if one can verify that under the coexistence scenario (iii) the system~\eqref{2dMoran} converges to $(\tilde x_1, \tilde x_{2a})$ as time tends to infinity, this does not mean simultaneous perpetual non-extinction of types 1 and 2 in the Moran model. Indeed, the Markov chain $(\mathbf X^{(N)}(t))_{t\geq 0}$ \igb {\em will} \bl almost surely either reach the absorbing state  $(N,0,0)$ or the subset $\{ (0,x_{2a},N-x_{2a}) \colon x_{2a} \in \{ 0,1,\ldots,N\} \}$ of the state space and stay there forever (restricted to the latter subset, the Markov chain is irreducible, and hence almost surely neither type $2a$ nor type $2d$ can go extinct forever). Thus, either type $1$ or type $2$ will die out in almost surely finite time for fixed $N$, leading to the fixation of the other type. Let us note that in models with logistic competition, not only coexistence is \igb merely \bl metastable, but \igb also \bl the entire population dies out in almost surely finite time \igb (in contrast to the Moran model set-up). However, \bl the expected extinction time grows exponentially with the carrying capacity.


\section{\igb Preparatory analysis of \bl the dynamical system~\eqref{2dMoran}}
\label{sec-preliminaryproofsMoran}
The goal of the present section is to prove preliminary results on the dynamical system~\eqref{2dMoran} that we will need for the proof of Theorem~\ref{theorem-mainMoran}. First, in Section~\ref{sec-stability}, we prove results on the existence and stability of equilibria of the dynamical system that were already stated in Section~\ref{sec-modeldefmainresultsMoran}. Based on these results and their proofs, in Section~\ref{sec-convergenceMoran} we state and prove lemmas on the asymptotic behaviour of the system. 

\subsection{Existence and stability of equilibria}\label{sec-stability}
In this section we \igb provide \bl  our previously stated assertions on the existence and stability of equilibria of the system~\eqref{2dMoran}: Lemmas~\ref{lemma-simultaneousstability} and~\ref{lemma-xtildeexistence}.

\begin{proof}[Proof of Lemma~\ref{lemma-simultaneousstability}]

Let us first analyse the stability of \igb the monomorphic equilibrium \bl $(1,0)$. The Jacobi matrix at this \igb point \bl is
\[ 
J_{(1,0)}:=\begin{pmatrix}
    -\kappa & s-\frac{1+s}{2} p -\kappa\\
     -\sigma & -\sigma -s
\end{pmatrix}. 
\]
Its trace is negative and \igb its determinant satisfies \bl $\det J_{(1,0)}= \sigma(s-\frac{1+s}{2}p)+\kappa s$, and so \igb from the basic linearization theorem \bl we can derive the following conclusions:
\begin{align*}
    & (a) \quad \text{stability of } (1,0) \Leftrightarrow \kappa > \kappa'= \frac{\sigma}{s}\Big( p\frac{1+s}{2}-s\Big) , \\
   &  (b) \quad \text{instability of } (1,0) \Leftrightarrow 0 < \kappa < \kappa'=\frac{\sigma}{s}\Big( p\frac{1+s}{2}-s \Big).
\end{align*}
\igb Regarding the \bl stability of \igb the other monomorphic equilibrium \bl $(0,\bar x_{2a})$, \igb we obtain for the corresponding \bl Jacobi matrix:
\[ 
J_{(0,\bar{x}_{2a})}:=\begin{pmatrix}
    \bar{x}_{2a}(s-\frac{1+s}{2}p) +\kappa(1-\bar{x}_{2a}) & 0\\
     -\bar{x}_{2a} s -\kappa\bar{x}_{2a} - \sigma & -2p\bar{x}_{2a}-\kappa\bar{x}_{2a}-\sigma+\kappa(1-\bar{x}_{2a})
\end{pmatrix}. 
\]
Let us first observe that the bottom right entry is always negative, indeed,
\[\begin{aligned}
    -2p\bar{x}_{2a}-\kappa\bar{x}_{2a}-\sigma+\kappa(1-\bar{x}_{2a}) <0 &\Leftrightarrow \bar{x}_{2a} > \frac{\kappa - \sigma}{2(p+\kappa)},  
\end{aligned}
\]
which is always \igb satisfied thanks to \bl the expression~\eqref{x2abar,kappa>0} \igb for \bl  $\bar{x}_{2a}$. Then, the stability only depends on the sign of the determinant, and so on the sign of the upper left entry. 

When $s-\frac{1+s}{2}p-\kappa>0$, stability is equivalent to
\[ \bar{x}_{2a} < \frac{\kappa}{\kappa -s +\frac{1+s}{2}p}, \numberthis\label{x2aupper} \] 
which is impossible as the right-hand side is negative, and $\bar{x}_{2a}$ is positive. We then have instability when $s-\frac{1+s}{2}p>\kappa$. \igb For \bl $\kappa>s-\frac{1+s}{2}p$, we have stability when
\[\bar{x}_{2a} > \frac{\kappa}{\kappa -s +\frac{1+s}{2}p}, \numberthis\label{x2alower} \] 
which is impossible \igb for \bl  $p<\frac{2s}{1+s}$, because $\bar{x}_{2a}<1$, we have instability in that case. And when  $p>\frac{2s}{1+s}$, using~\eqref{x2abar,kappa>0} again, we must have
\[\frac{\kappa - \sigma + \sqrt{(\kappa-\sigma)^2+4\sigma(p+\kappa)}}{2(p+\kappa)} > \frac{\kappa}{\kappa -s +\frac{1+s}{2}p}.\] 
Writing $R=\kappa -s +\frac{1+s}{2}p$ ($R>0$ here), we must have
\begin{align*}
    &R\sqrt{(\kappa-\sigma)^2+4\sigma(p+\kappa)}> 2\kappa(p+\kappa) - R(\kappa-\sigma) \\
    &\Leftrightarrow R^2((\kappa-\sigma)^2+4\sigma(p+\kappa)) > 4(p+\kappa)^2\kappa^2 + R^2(\kappa-\sigma)^2 - 4\kappa (p+\kappa)(\kappa-\sigma)R \\
    &\Leftrightarrow \tilde{Q}(\kappa):= (\kappa+p)(\kappa^2(-4p+2D)+\kappa 2D\sigma +D^2\sigma) > 0, \numberthis\label{Qfunction}
\end{align*}
\igb with \bl 
\[ D=p(s+1)-2s \numberthis\label{Ddef} \]
and where for the first equivalence, we used that the right-hand side of the first line is positive for $\kappa>0$. Let us also remark that $D-2p = p(s-1)-2s$. Considering the second degree polynomial $\tilde{Q}(\kappa)$ that we have in $\kappa$, the leading coefficient of $\tilde{Q}$ is negative, $\tilde{Q}(0)>0$, and so there is a unique root in $(0,\infty)$. Knowing that $D-2p<0$, this unique root must be
\[\tilde{\kappa}=\frac{-2D\sigma-\sqrt{4D^2\sigma^2-4(-4p+2D)\sigma D^2}}{4D-8p} \]
(cf.\ \eqref{tildekappadef}). 
And so, we obtain the instability of $(0,\bar{x}_{2a})$ when $\kappa>\tilde{\kappa}$ and the stability of $(0,\bar{x}_{2a})$ when $0<\kappa<\tilde{\kappa}$. Summarizing, to have stability, the following three conditions must be satisfied:
\begin{align*}
    &(0) \quad \kappa > s-\frac{1+s}{2}p, \\
    &(1) \quad p> \frac{2s}{1+s}, \\
    &(2) \quad 0<\kappa<\tilde{\kappa}.
\end{align*}
Condition $(0)$ can be ignored \igb since it is automatically implied by $(1)$ combined with \bl
$\kappa>0$. 

If $p<\frac{2s}{1+s}$, then $(0,\bar{x}_{2a})$ is unstable because of condition $(1)$ and $(1,0)$ is stable, because we have $\kappa>\frac{\sigma}{s}\Big(\frac{1+s}{2}p-s\Big)=\kappa'$.

The instability of $(1,0)$ corresponds to the case where $0<\kappa<\kappa'$. In that case, $p>\frac{2s}{1+s}$. Further we have that $\tilde{\kappa}<\kappa'$ when 
\[\begin{aligned}
    &~\frac{\sigma}{s} \Big( \frac{1+s}{2}p-s \Big) > \frac{-2D\sigma-\sqrt{4D^2\sigma^2-4(-4p+2D)\sigma D^2}}{4D-8p} \\
    &\Leftrightarrow 4(D-2p)\frac{D}{2}\frac{\sigma}{s}+2D\sigma < -\sqrt{4D^2\sigma^2-4(-4p+2D)\sigma D^2} \qquad \text{as } D-2p<0 \\
    &\Leftrightarrow 2\sigma D \Big( \frac{D-2p}{s} +1 \Big)  < -\sqrt{4D^2\sigma^2-4(-4p+2D)\sigma D^2} \\
    &\Leftrightarrow 4\sigma^2D^2 \Big( \frac{D-2p}{s}+1 \Big)^2 > 4D^2\sigma^2 - 4(-4p+2D)\sigma D^2 \qquad \text{as } \frac{D-2p}{s} +1<0 \Leftrightarrow p(s-1)<s \\
    &\Leftrightarrow 4\sigma^2D^2 \Big( \frac{D-2p}{s} \Big)^2 + 8\sigma^2D^2\frac{D-2p}{s} > -4(-4p+2D)\sigma D^2 \\
    &\Leftrightarrow \sigma^2\Big( \frac{D-2p}{s} \Big)^2+2\sigma^2\frac{D-2p}{s} > -(2D-4p)\sigma \\
    &\Leftrightarrow \frac{\sigma}{2s^2}(D-2p) + \frac{\sigma}{s} < -1 \\
    &\Leftrightarrow \frac{D-2p}{2s} +1 < -\frac{s}{\sigma} \\
    &\Leftrightarrow \sigma > \frac{2s^2}{p(1-s)}, 
\end{aligned}\]

Then, the condition to have $\tilde{\kappa}<\kappa'$ is:
\[\sigma > \frac{2s^2}{p(1-s)}. \numberthis\label{coexistencecond}\]

Now, if $p>\frac{2s}{1+s}$, we have:

\begin{align*}
    &(S_1) \quad \text{When } \kappa \in (0,\tilde{\kappa}), \: (0,\bar{x}_{2a}) \text{ is stable}, \\
    &(S_2) \quad \text{When } \kappa \in (0, \kappa'), \: (1,0) \text{ is unstable, } \\
    &(S_3) \quad \text{When } \kappa \in (\kappa', \infty), \: (1,0) \text{ is stable, } \\
    &(S_4) \quad \text{When } \kappa \in (\tilde{\kappa}, \infty), \: (0,\bar{x}_{2a}) \text{ is stable}.
\end{align*}

Further, if $\sigma > \frac{2s^2}{p(1-s)}$, then $\tilde{\kappa}<\kappa'$, and we have that both equilibria are unstable in $(\tilde{\kappa},\kappa')$, and otherwise both equilibria are stable in the interval with reverse endpoints. This concludes the proof. 
\end{proof}

\begin{proof}[Proof of Lemma~\ref{lemma-xtildeexistence}]

\igb Observe \bl 
that in the general case, the Jacobi matrix at an \igb (potentially polymorphic) \bl equilibrium $(\tilde{x}_1,\tilde{x}_{2a})$ is \igb given by \bl
\[ 
J_{(\tilde{x}_1,\tilde{x}_{2a})}:=\begin{pmatrix}
    \tilde{x}_{2a}(s-\frac{1+s}{2}p) +\kappa(1-\tilde{x}_1 - \tilde{x}_{2a}) -\kappa \tilde{x}_1 & \tilde{x}_1(s-\frac{1+s}{2}p) - \kappa \tilde{x}_1\\
     -\tilde{x}_{2a} s -\kappa\tilde{x}_{2a} - \sigma & -\tilde{x}_1 s-2p\tilde{x}_{2a}-\kappa\tilde{x}_{2a}-\sigma+\kappa(1-\tilde{x}_1 - \tilde{x}_{2a})
\end{pmatrix}. 
\]

\igb In particular, if \bl 
we assume that $(\widetilde x_1,\widetilde x_{2a})$ is \igb indeed \bl coordinate-wise positive, we have that $\tilde{x}_{2a}(s-\frac{1+s}{2}p) +\kappa(1-\tilde{x}_1 - \tilde{x}_{2a})=0$ in order to have $\dot{x}_1=0$, and so the Jacobi matrix \igb reduces to \bl
\[ 
J_{(\tilde{x}_1,\tilde{x}_{2a})}:=\begin{pmatrix}
     -\kappa \tilde{x}_1 & \tilde{x}_1(s-\frac{1+s}{2}p) - \kappa \tilde{x}_1\\
     -\tilde{x}_{2a} s -\kappa\tilde{x}_{2a} - \sigma & -\tilde{x}_1 s-2p\tilde{x}_{2a}-\kappa\tilde{x}_{2a}-\sigma+\kappa(1-\tilde{x}_1 - \tilde{x}_{2a})
\end{pmatrix}. 
\]

Let us \igb now investigate explicit expressions for such a truly polymorphic equilibrium. It \bl
must satisfy the equations
\[
\begin{aligned}
      0& = x_{1}(t) \Big( x_{2a}(t) (s-\frac{1+s}{2}p)+\kappa (1-x_1(t)-x_{2a}(t)) \Big), \\
    0 &= -x_{1}(t)x_{2a}(t) s - x_{2a}(t)^2 p + (1-x_{1}(t)-x_{2a}(t))(\kappa x_{2a}(t)+\sigma).
\end{aligned} 
\]
From the first equation, we obtain
\[\tilde{x}_{2a}=\kappa(\tilde{x}_1-1)\frac{1}{s-\frac{1+s}{2}p-\kappa},\]
and from the second:
\[\tilde{x}_1=\frac{-\tilde{x}_{2a}^2p+(1-\tilde{x}_{2a})(\kappa\tilde{x}_{2a}+\sigma)}{\tilde{x}_{2a}+\kappa\tilde{x}_{2a}+\sigma},\]
and putting the equations together \igb yields \bl 
\[\begin{aligned}
    \tilde{x}_{2a}&=\kappa\Big(\frac{-\tilde{x}_{2a}^2p+(1-\tilde{x}_{2a})(\kappa\tilde{x}_{2a}+\sigma)}{\tilde{x}_{2a}+\kappa\tilde{x}_{2a}+\sigma}-1\Big)\frac{1}{s-\frac{1+s}{2}p-\kappa} \\
    &=\kappa\frac{-\tilde{x}_{2a}^2p-\tilde{x}_{2a}(\kappa \tilde{x}_{2a}+\sigma) -\tilde{x}_{2a}s}{\tilde{x}_{2a}s+\kappa\tilde{x}_{2a}+\sigma}\frac{1}{s-\frac{1+s}{2}p-\kappa}.
\end{aligned}\]
After dividing on both sides by $\tilde{x}_{2a}$ and rearranging the terms, we get
\[\tilde{x}_{2a}=\frac{2\kappa s+\sigma(2s-p(1+s))}{\kappa p(s-1)+s(-2s+p(s+1))}. \numberthis\label{tildex2a} \]
We can already see \igb from this \bl that $\lim_{\kappa \rightarrow 0}\tilde{x}_{2a}=-\sigma/s $, and so the coordinate-wise positive equilibrium does not exist for $\kappa$ small enough. Further, $\tilde{x}_{2a}$ is negative if $p<\frac{2s}{1+s}$, \igb so that we now \bl suppose that $p>\frac{2s}{1+s}$. We also always have
\igb
that
\begin{equation}
    \label{eq_x1}
    \tilde{x}_1= \tilde{x}_{2a} \Big( s-\frac{1+s}{2}p-\kappa \Big) \frac{1}{\kappa}+1,
\end{equation}
and 
\begin{equation*}
    \label{eq_x2a}
    \tilde{x}_{2d}=\frac{\tilde{x}_{2a}}{\kappa}\Big( \frac{1+s}{2}p-s \Big). 
\end{equation*}
\bl
In order \igb for all \bl those three coordinates \igb to \bl belong to $(0,1)$, we need that
\[0< \tilde{x}_{2a} < \frac{\kappa}{\kappa+\frac{1+s}{2}p-s}. \]

\igb Of particular interest is the existence of  the \bl 
coordinate-wise positive equilibrium \igb in the case when the two monomorphic equilibria \bl are unstable (which is possible only if $\sigma > \frac{2s^2}{p(1-s)}$). First, we \igb note \bl  that when $\kappa=\kappa'=\frac{\sigma}{s}(\frac{1+s}{2}p-s)$, then the numerator of $\tilde{x}_{2a}$ is equal to 0, and so is $\tilde{x}_{2a}$, and then $\tilde{x}_1=1$ because of 
\igb \eqref{eq_x1} \bl
above. Also, when $\kappa=\tilde{\kappa}$, we have
\[\begin{aligned}
    \tilde{x}_1=0 &\Leftrightarrow (2\tilde{\kappa}s-\sigma D)(-2\tilde{\kappa}-D)=(\tilde{\kappa}p(1-s)-sD)2\tilde{\kappa} \\
    & \Leftrightarrow \tilde{\kappa}^2(-4s-2p(1-s))+2\tilde{\kappa}\sigma D+\sigma D^2=0
\end{aligned}\]
(recalling the definition of $D$ from~\eqref{Ddef}), which is indeed the case for $\tilde{\kappa}$ by computing the roots of the second degree polynomial $\tilde Q$ defined in ~\eqref{Qfunction}. And so, as $\tilde{x}_1=0$ at $\tilde{\kappa}$, we have
\[\tilde{x}_{2a}=\frac{\tilde{\kappa}}{\tilde{\kappa} +\frac{1+s}{2}p-s},\]
at $\tilde{\kappa}$, and this is precisely the value of $\bar{x}_{2a}$ at $\tilde\kappa$ as we have seen in~\eqref{x2aupper}--\eqref{x2alower}.

\medskip

Now we assume that $\sigma > \frac{2s^2}{p(1-s)}$ (we will comment on the case  $\sigma < \frac{2s^2}{p(1-s)}$ at the end of the proof). To finish the proof of Lemma~\ref{lemma-xtildeexistence}, we need to show that the coordinate-wise positive equilibrium $(\tilde x_{1}, \tilde x_{2a})$ exists on $(\tilde{\kappa},\kappa')$ (i.e.\ when the two monomorphic equilibria are unstable).

In the first boundary point $\tilde{\kappa}$, this equilibrium will be equal to $(0,\bar{x}_{2a})$, which is the stable equilibrium before the interval we consider, and on the second boundary point $\kappa'$, this equilibrium will be equal to $(1,0)$, the stable equilibrium after the interval we consider. For the existence of the coordinate-wise positive equilibrium on the interval, we need in particular that the denominator of $\tilde{x}_{2a}$ (cf.\ \eqref{tildex2a}) does not vanish on this interval. Moreover, as $\lim_{\kappa\rightarrow 0}\tilde{x}_{2a}<0$, $\tilde{x}_{2a}=\bar{x}_{2a}$ in $\tilde{\kappa}$ and $\tilde{x}_{2a}$ only vanishes in $\kappa'$ which is greater than $\tilde{\kappa}$, we would like to show that the asymptote (where the denominator vanishes) appears in $(0,\tilde{\kappa})$ (See the right image of Figure~\ref{fig-equilibria} for an illustration).

\medskip

Using the expression of $\tilde{x}_{2a}$, we have the asymptote when
\[\kappa=\frac{s(-2s+p(s+1))}{p(1-s)}. \]
Also, the denominator vanishes before $\kappa'=\frac{\sigma}{s}\Big(p\frac{1+s}{2}-s\Big)$ if
\[\begin{aligned}
    &\quad \frac{s(-2s+p(s+1))}{p(1-s)}<\frac{\sigma}{s}\Big(p\frac{1+s}{2}-s\Big) \\
    &\Leftrightarrow \frac{s}{p(1-s)}< \frac{\sigma}{2s} \\
    &\Leftrightarrow \sigma > \frac{2s^2}{p(1-s)},
\end{aligned}\]
which is precisely the condition we have in ~\eqref{coexistencecond}, which guarantees that we have an interval where the previous two equilibria, $(1,0)$ and $(0,\bar x_{2a})$, are unstable. 

\bigskip

Moreover, when $\sigma>\frac{2s^2}{p(1-s)}$, we obtain, by considering $\tilde{\kappa}$ as a function of $\sigma$, that
\[
    \tilde{\kappa}>\frac{\frac{4Ds^2}{p(1-s)}+\frac{4Ds}{p(1-s)}\sqrt{s^2+(2p-D)p(1-s)}}{4(2p-D)}. 
\]
Also:
\[\begin{aligned}
    &\quad \frac{Ds +D\sqrt{s^2+(2p-D)p(1-s)}}{2p-D} = D \\
    &\Leftrightarrow s + \sqrt{s^2+(2p-D)p(1-s)}=2p-D \\
    &\Leftrightarrow s^2+(2p-D)p(1-s) =(p(1-s)+s)^2 \\
    &\Leftrightarrow (2p-D)p(1-s) = p^2(1-s)^2 +2ps(1-s) \\
    &\Leftrightarrow 2p-D=p(1-s)+2s,
\end{aligned}\]
which is always the case as $D=p(s+1)-2s$. And using that the asymptote occurs when $\kappa=\frac{sD}{p(1-s)}$, we have that when $\sigma>\frac{2s^2}{p(1-s)}$, then $\tilde{\kappa}>\frac{sD}{p(1-s)}$. To summarize, when we have an interval where the first two equilibria are unstable, then the asymptote of $\tilde{x}_{2a}$ occurs before $\kappa=\kappa'=\frac{\sigma}{s}\big(p\frac{1+s}{2}-s\big)$, and also before $\kappa=\tilde{\kappa}$. In particular, it explains why $\tilde{x}_{2a}$ goes from $-\sigma/s$ to $\bar{x}_{2a}$ on $(0,\tilde{\kappa})$ and does not vanish on this interval, as it only vanishes at $\kappa'>\tilde{\kappa}$.

\medskip

Writing $\tilde{x}_{2a}=f(\kappa)$, we can rewrite $\tilde{x}_1$ and $\tilde{x}_{2d}$ as
\[\begin{aligned}
    &\tilde{x}_1=f(\kappa)\Big( s-\frac{1+s}{2}p-\kappa \Big) \frac{1}{\kappa}+1 \\
    &\tilde{x}_{2d}=\frac{f(\kappa)}{\kappa}\Big( \frac{1+s}{2}p-s \Big).
\end{aligned}\]
For such a function $f$, we have
\[\begin{aligned}
    f'(\kappa)&=\frac{2s(\kappa p(s-1)+s(-2s+p(s+1)))-p(s-1)(2\kappa s+\sigma(2s-p(1+s)))}{(\kappa p(s-1)+s(-2s+p(s+1)))^2} \\
    &= \frac{2s^2(-2s+p(s+1))-\sigma p(s-1)(2s-p(1+s))}{(\kappa p(s-1)+s(-2s+p(s+1)))^2}.
\end{aligned}\]

And so $f'(\kappa)<0 \Leftrightarrow \sigma > \frac{2s^2}{p(1-s)}$, which is the case here. From this, we obtain that $\tilde{x}_{2a}$ and $\tilde{x}_{2d}$ are decreasing, and so $\tilde{x}_1=1-\tilde{x}_{2a}-\tilde{x}_{2d}$ is increasing. Using the results we have above about the values of $(\tilde{x}_1,\tilde{x}_{2a})$ at $\tilde{\kappa}$ and $\kappa'$, we conclude that the coordinate-wise positive equilibrium exists precisely on the interval $(\tilde{\kappa},\kappa')$ when $\sigma>\frac{2s^2}{p(1-s)}$. 

Also, when $\sigma<\frac{2s^2}{p(1-s)}$, this equilibrium exists if and only if $\kappa\in( \kappa',\tilde\kappa)$. Indeed, we obtain, similarly as above, that the asymptote is not only greater than $\kappa'$ but also larger than $\tilde\kappa$ (See the left image of Figure~\ref{fig-equilibria} for a visualization). And we obtain all the results formulated in that case using the proofs of the ones formulated for $\sigma>\frac{2s^2}{p(1-s)}$.
\end{proof}

\subsection{Convergence of the dynamical system}\label{sec-convergenceMoran}

\begin{lemma}\label{lemma-convtoaneq}
Consider the dynamical system~\eqref{2dMoran}. Assume that $x_{1}(0),x_{2a}(0)>0$, $x_{1}(0)+x_{2a}(0)<1$, and that the assumptions of Theorem~\ref{theorem-mainMoran} on $p,s,\sigma,\kappa$ hold. Then, $(x_{1}(t),x_{2a}(t))$ has no periodic orbit lying in $\bar{\Delta}$, where $\Delta$ is the open and simply connected subset
\[ \Delta = \{ (x_1,x_{2a}) \in \R^2 \colon x_1,x_{2a}>0, x_1+x_{2a}<1 \} \subset \R^2, \]
which is positively invariant under the flow of~\eqref{2dMoran}.
\end{lemma}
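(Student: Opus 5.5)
The plan is to apply the Bendixson--Dulac criterion on $\Delta$ with the Dulac function $B(x_1,x_{2a}) = \frac{1}{x_1 x_{2a}}$, which is smooth and positive on $\Delta$. Separately, I will show that no periodic orbit in $\bar\Delta$ can touch $\partial\Delta$. Positive invariance of $\Delta$ comes out of the same boundary analysis, so I will do that first.

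\emph{Boundary behaviour and positive invariance.} On the edge $\{x_1=0\}$ the first equation of~\eqref{2dMoran} gives $\dot x_1 = 0$, so this line is invariant. By uniqueness of solutions, an orbit starting with $x_1(0)>0$ never reaches it. On the edge $\{x_{2a}=0,\ 0<x_1<1\}$ we get $\dot x_{2a} = (1-x_1)\sigma > 0$, so the vector field points strictly into $\Delta$. On the edge $\{x_1+x_{2a}=1\}$, i.e.\ $x_{2d}=0$, equation~\eqref{2dexpress} gives $\dot x_{2d} = x_1x_{2a}\frac{1+s}{2}p + x_{2a}^2 p$. This is strictly positive whenever $x_{2a}>0$. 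At the corner $(1,0)$, which is an equilibrium, the orbit cannot be reached in finite time. Hence trajectories starting in $\Delta$ stay in $\Delta$. The same computation shows that a periodic orbit in $\bar\Delta$ meeting $\partial\Delta$ must lie entirely in the invariant segment $\{x_1=0,\ 0\le x_{2a}\le 1\}$. The only other possibility is to touch an inward-pointing edge, but then the orbit enters $\Delta$ and, by invariance, can never come back to the boundary; this contradicts periodicity. On the segment $\{x_1=0\}$ the dynamics are one-dimensional, $\dot x_{2a} = -p x_{2a}^2 + (1-x_{2a})(\kappa x_{2a}+\sigma)$, so the only periodic orbits there are equilibria. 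Thus any nontrivial periodic orbit would have to lie in $\Delta$.

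\emph{Dulac computation.} Write $(f,g)$ for the right-hand side of~\eqref{2dMoran} and $a = s-\frac{1+s}{2}p$. Dividing by $x_1 x_{2a}$ gives
\[ \frac{f}{x_1x_{2a}} = a + \frac{\kappa(1-x_1-x_{2a})}{x_{2a}}, \]
\[ \frac{g}{x_1x_{2a}} = -s - \frac{p x_{2a}}{x_1} + \frac{\kappa(1-x_1-x_{2a})}{x_1} + \frac{\sigma(1-x_1-x_{2a})}{x_1x_{2a}}. \]
Differentiating the first expression in $x_1$ and the second in $x_{2a}$, and using $\partial_{x_{2a}}\frac{1-x_1-x_{2a}}{x_1x_{2a}} = -\frac{1-x_1}{x_1x_{2a}^2}$, I expect
\[ \partial_{x_1}(Bf)+\partial_{x_{2a}}(Bg) = -\frac{\kappa}{x_{2a}} - \frac{p+\kappa}{x_1} - \frac{\sigma(1-x_1)}{x_1x_{2a}^2}. \]
This is strictly negative on $\Delta$, since $p>0$ and $\sigma>0$; the sign holds for every $\kappa\ge 0$. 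Since $\Delta$ is simply connected, the Bendixson--Dulac criterion then excludes any periodic orbit contained in $\Delta$. Combined with the boundary step, this excludes periodic orbits in $\bar\Delta$.

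The main obstacle is finding a Dulac function whose weighted divergence has a definite sign; the choice $B=1/(x_1x_{2a})$ (Lotka--Volterra style) appears to work cleanly. What then remains is the careful boundary bookkeeping, in particular handling the corners and the invariant edge $\{x_1=0\}$. This part needs only uniqueness of solutions and the sign computations above.
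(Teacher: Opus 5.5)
Your proposal is correct and follows essentially the same route as the paper: the Bendixson--Dulac criterion on $\Delta$ with multiplier $1/(x_1x_{2a})$, plus a boundary case analysis showing that no nontrivial periodic orbit in $\bar\Delta$ can touch $\partial\Delta$. Your weighted divergence $-\frac{\kappa}{x_{2a}}-\frac{p+\kappa}{x_1}-\frac{\sigma(1-x_1)}{x_1x_{2a}^2}$ matches the paper's once its two $\sigma$-terms are combined. Your explicit proof of positive invariance and your use of the invariance of $\{x_1=0\}$ make the boundary bookkeeping a bit tidier than the paper's Cases 1--4, but the ideas are the same.
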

\begin{proof}
The divergence of system~\eqref{2dMoran} is 
\[  x_{2a}(t) \big( s-\frac{1+s}{2}p \big) - x_{1}(t)s - 2 x_{2a}(t) p -\sigma+\kappa(2-3 x_{1}(t)-3 x_{2a}(t)). \numberthis\label{divergence} \]
 Writing, for any $\kappa \geq 0$,
\[
\begin{aligned}
      f(x_1,x_{2a}) &= x_{1} \Big( x_{2a} \big(s-\frac{1+s}{2}p\big)+\kappa x_{2d} \Big), \\
     g(x_1,x_{2a}) &= -x_{1}x_{2a} s - x_{2a}^2 p + x_{2d}(\kappa x_{2a}+\sigma),
\end{aligned} 
\]
where $x_{2d}=x_{2d}(x_1,x_{2a})=1-x_1-x_{2a}$, 
we have $\dot{x}_1(t)=f(x_1(t),x_{2a}(t))$, $\dot{x}_{2a}(t)=g(x_1(t),x_{2a}(t))$ and thus
\[
\begin{aligned}
      \partial_{x_1} \Big(\frac{f(x_1,x_{2a})}{x_1x_{2a}}\Big) &= \partial_{x_1}\Big( s-\frac{1+s}{2}p+\kappa \frac{x_{2d}}{x_{2a}} \Big)=-\frac{\kappa}{x_{2a}}, \\
     \partial_{x_{2a}} \Big(\frac{g(x_1,x_{2a})}{x_1x_{2a}}\Big) &= \partial_{x_{2a}}\Big(-s - \frac{x_{2a}}{x_1} p + \frac{x_{2d}}{x_1}\kappa +\sigma\frac{x_{2d}}{x_1x_{2a}}\Big)=-\frac{p}{x_1}-\frac{\kappa}{x_1}-\sigma\frac{x_{2d}}{x_1x_{2a}^2}-\frac{\sigma}{x_{2a}x_1},
\end{aligned} 
\]
and thanks to the Bendixson--Dulac criterion, using that the divergence we obtained is negative in the open and simply connected subset $\Delta$, where the multiplicator function $(x_1,x_{2a}) \mapsto \frac1{x_1x_{2a}}$ is continuously differentiable, it follows that the system has no periodic orbit lying entirely in $\Delta$. 

By the assumption of the lemma, the initial condition $x_{1}(0)+x_{2a}(0)$ lies in $\Delta$, and the closure $\overline{\Delta}$ of $\Delta$ is positively invariant under the system~\eqref{2dMoran}. Now we show that there is no nontrivial periodic orbit in $\overline{\Delta}$ that has a point in $\Delta$. Note that if such a periodic orbit exists, then by a timeshift we also obtain such a periodic orbit starting in $\Delta$ at time $0$. So assume for a contradiction that there exists such a periodic orbit $\varphi=(\varphi_{t})_{t\geq 0}=(\varphi^{(1)}_{t},\varphi^{(2a)}_{t})_{t\geq 0}$ with period $T>0$ and with $\varphi_0 \in \Delta$. Then this trajectory cannot lie entirely in $\Delta$, so it must hit $\partial\Delta$ at some finite time. Let $t=\inf \{ s \in [0,T] \colon \varphi_s \in \partial \Delta \}$. Now we show that $t<\infty$ leads to a contradiction in all possible cases.

{\bf Case 1}: $\varphi_t=(x_1,0)$ for some $0 \leq x_1 < 1 $. Then, the right-hand side of the second equation of~\eqref{2dMoran} is $\sigma(1-x_1)>0$. It follows by continuity that there exists some $\eps>0$ such that $\dot \varphi^{(2a)}_u>0$ for all $u \in (t-\eps,t+\eps)$. However, this contradicts the assumption that $\varphi_\cdot$ hits $\{ (y,0) \colon y \in [0,1] \} \subseteq \partial \Delta$ at time $t$.

{\bf Case 2}: $\varphi_t=(1,0)$. Then, since $(1,0)$ is an equilibrium of~\eqref{2dMoran}, it follows that $\varphi_s=(1,0)$ for all $s \geq t$. This contradicts the assumption that $\varphi$ is a nontrivial periodic orbit.

{\bf Case 3}: $\varphi_t=(0,x_{2a})$ for some $0<x_{2a} \leq 1$. Then, at time $t$, $\dot \varphi^{(1)}$ vanishes and stays zero forever. Now, for $s>t$, we have
\[ \dot \varphi^{(2a)}_s= 
(1-\varphi_s^{(2a)})(\kappa \varphi_s^{(2a)} + \sigma),
\]
so $(\varphi^{(2a)}_s)_{s \geq t}$ is the solution of a one-dimensional autonomous ODE, which cannot have a nontrivial periodic orbit (in fact, it clearly converges to the stable equilibrium $\bar x_{2a}$ of the one-dimensional system as time tends to infinity). 

{\bf Case 4}: $\varphi_t^{(1)}, \varphi_t^{(2a)} >0$ and $\varphi_t^{(1)}+\varphi_t^{(2a)} =1$. This case is similar to Case 1. Indeed, now for $\varphi_t^{(2d)}(\cdot):=1-\varphi_t^{(1)}(\cdot)-\varphi_t^{(2a)}(\cdot)$ we obtain (cf.\ \eqref{2dexpress})
\[ 
\dot \varphi^{(2d)}_t = \varphi^{(1)}_t \varphi^{(2a)}_t \frac{1+s}{2}p + (\varphi^{(2a)}_s)^2 p >0.
\]
Thus, by continuity, there must exist some $\eps>0$ such that for all $u \in (t-\eps,t+\eps)$, $\dot \varphi^{(2d)}_u > 0$. But this  contradicts the assumption that $\varphi_\cdot$ hits $\{ (x_1,x_{2a}) \colon x_1,x_{2a}>0, x_1+x_{2a}=1 \} \subseteq \partial \Delta$ at time $t$.

We conclude that if there exists a nontrivial periodic orbit in $\overline{\Delta}$, it must lie entirely in $\partial \Delta$. However, our argument in Cases 1 and 2 shows that such a trajectory cannot pass through a point of the form $(x_1,0)$ with $0 \leq x_1 \leq 1$: For $x_1<1$ the problem is that the trajectory must enter $\Delta$, and for $x_1=1$ it degenerates to an equilibrium. Moreover, our reasoning in Case 3 shows that such a periodic trajectory cannot pass through a point of the form $(0,x_{2a})$ with $0<x_{2a} \leq 1$ because it would converge to $(0,\bar x_{2a})$ as time tends to infinity. Finally, by Case 4, it cannot pass through a point of the form $(x_1,x_{2a})$ with $x_1,x_{2a}>0$, $x_1+x_{2a}=1$ either because it would also enter $\Delta$.

It follows that system~\eqref{2dMoran} has no periodic orbit in $\overline{\Delta}$. 
\end{proof}


\begin{remark}
Since $x_1(t)+x_{2a}(t) < 1$, the divergence~\eqref{divergence} is negative on the entire set $\Delta$ whenever $\kappa < \sigma/2$ (in particular if $\kappa=0$). In this case, it follows immediately from the Bendixson criterion that the system has no periodic orbit lying entirely in $\Delta$, which simplifies the proof of Lemma~\ref{lemma-convtoaneq}.
\end{remark}
\color{black}
\begin{lemma}\label{lemma-convtoaneq2}
Consider the dynamical system~\eqref{2dMoran} under the assumption $p>\frac{2s}{1+s}$ and for $\kappa>0$. Under the condition $x_1(0)+x_{2a}(0)<1$ and supposing that the system does not start at an equilibrium, we have the following convergences for $\sigma>\frac{2s^2}{p(1-s)}$:
\begin{enumerate}[1)]
\item $\text{ If } \kappa\in(0,\tilde{\kappa}) \text{, the system converges to } (0,\bar{x}_{2a})$.
\item $ \text{ If } \kappa\in(\tilde{\kappa},\kappa') \text{, the system converges along a subsequence to } (\tilde{x}_1,\tilde{x}_{2a})$.
\item $\text{ If } \kappa \in(\kappa',\infty) \text{, the system converges to } (1,0).$
\end{enumerate}
And when $\sigma<\frac{2s^2}{p(1-s)}$, we have $\kappa'<\tilde{\kappa}$ and  the following convergences:
\begin{enumerate}[1)]
    \item $\text{ If } \kappa\in(0,\kappa') \text{, the system converges to } (0,\bar{x}_{2a})$.
    \item $\text{ If } \kappa \in(\tilde{\kappa},\infty) \text{, the system converges to } (1,0).$
\end{enumerate}
\end{lemma}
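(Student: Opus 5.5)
The plan is a Poincaré--Bendixson argument on the positively invariant triangle $\overline{\Delta}$, combined with the local stability information of Lemma~\ref{lemma-simultaneousstability} and the existence/non-existence of the polymorphic equilibrium from Lemma~\ref{lemma-xtildeexistence}. Since $\overline{\Delta}$ is compact and positively invariant, the $\omega$-limit set $\omega$ of any trajectory is nonempty, compact, connected and invariant. By Lemma~\ref{lemma-convtoaneq} there is no periodic orbit in $\overline{\Delta}$. The Poincaré--Bendixson theorem then says that $\omega$ consists of equilibria together with orbits connecting them (heteroclinic or homoclinic cycles). The equilibria in $\overline{\Delta}$ are $(1,0)$, $(0,\bar x_{2a})$, and $(\tilde x_1,\tilde x_{2a})$ exactly when $\kappa$ lies in the interval of Lemma~\ref{lemma-xtildeexistence}.

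First I treat the boundary. If $x_1(0)=0$, the system reduces to the one-dimensional ODE $\dot x_{2a}=(1-x_{2a})(\kappa x_{2a}+\sigma)-px_{2a}^2$, which converges to $\bar x_{2a}$. If $x_1(0)>0$, then $x_1(t)>0$ for all $t$, and Cases 1 and 4 of the proof of Lemma~\ref{lemma-convtoaneq} show that the trajectory enters $\Delta$ immediately. So I may assume the trajectory starts in $\Delta$.

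Next I rule out convergence, even along subsequences, to an \emph{unstable} monomorphic equilibrium.
\begin{itemize}
\item At $(1,0)$ with $\kappa<\kappa'$, the Jacobian $J_{(1,0)}$ has negative determinant, so $(1,0)$ is a saddle. One eigendirection is the invariant segment $\{x_{2a}=0\}$? Not quite, since $\sigma>0$ pushes $x_{2a}$ up there. I will instead compute the stable manifold and show that it lies outside $\Delta$, meeting $\overline\Delta$ only at $(1,0)$. Concretely, I will show the stable eigenvector points into the region $x_1>1$ or $x_{2a}<0$. This uses $\det J_{(1,0)}<0$ and the sign pattern of the entries.
\item At $(0,\bar x_{2a})$ with $\kappa>\tilde\kappa$, the Jacobian is lower triangular with eigenvalues $\lambda_1'>0$ (the upper-left entry) and a negative one. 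The stable manifold is the invariant line $\{x_1=0\}$, so no trajectory from $\Delta$ can converge to it.
\end{itemize}
The standard Butler--McGehee lemma then says that if a hyperbolic saddle lies in $\omega$ but $\omega$ is not the saddle itself, then $\omega$ contains points of both its stable and unstable manifolds other than the saddle. Applied to $(0,\bar x_{2a})$, the stable manifold is $\{x_1=0\}$, so $\omega$ would contain a point $(0,y)$ and hence $(0,\bar x_{2a})$'s whole stable manifold branch, whose closure forces $\omega$ to contain the point $(0,1)$. Its backward orbit is then needed, and this leaves $\overline\Delta$, which is a contradiction. The analogous argument works at $(1,0)$.

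Then I conclude case by case.
\begin{itemize}
\item For $\sigma>\frac{2s^2}{p(1-s)}$ and $\kappa<\tilde\kappa$: there is no interior equilibrium and $(1,0)$ is an excluded saddle, so $\omega$ consists of equilibria and connecting orbits among $\{(1,0),(0,\bar x_{2a})\}$. Butler--McGehee excludes $(1,0)$, so $\omega=\{(0,\bar x_{2a})\}$, which is hyperbolic stable, giving convergence.
\item For $\kappa>\kappa'$: symmetrically, $(0,\bar x_{2a})$ is excluded and the trajectory converges to $(1,0)$.
\item For $\kappa\in(\tilde\kappa,\kappa')$: both monomorphic equilibria are excluded, so $\omega$ contains $(\tilde x_1,\tilde x_{2a})$, which gives subsequential convergence.
\item For $\sigma<\frac{2s^2}{p(1-s)}$: for $\kappa<\kappa'$ (respectively $\kappa>\tilde\kappa$) there is no interior equilibrium and exactly one monomorphic equilibrium is a saddle, so the same argument applies.
\end{itemize}
Degenerate cases with $\kappa$ at the thresholds are excluded by assumption.

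The main obstacle is the exclusion of $(1,0)$ as an $\omega$-limit point: it requires checking that its stable manifold avoids $\Delta$. I would first verify that the stable eigenvector $(v_1,v_2)$ satisfies $v_1v_2>0$. This makes the corresponding half-line leave $\overline{\Delta}$ in the relevant direction, since near $(1,0)$ the set $\Delta$ requires $x_1<1$ and $x_{2a}>0$. The inequality follows from the second row of $J_{(1,0)}$: writing $\mu<0$ for the stable eigenvalue, $-\sigma v_1=(\mu+\sigma+s)v_2$. This requires showing $\mu+\sigma+s<0$, which I expect follows from $\det<0$ and a short computation; if it does not, I would use the invariance of $x_1+x_{2a}\le1$ instead.
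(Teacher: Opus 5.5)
Your proposal is correct. Its skeleton is the same as the paper's: Poincar\'e--Bendixson on the compact positively invariant set $\overline\Delta$, no periodic orbits by Lemma~\ref{lemma-convtoaneq}, then exclusion of the unstable monomorphic equilibria from the $\omega$-limit set. The difference is in how the exclusion step is done.

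The paper argues through the sign of $\dot x_1$. Near $(0,\bar x_{2a})$ the bound $\dot x_1\ge x_1(\eps-\kappa x_1)$ is fine. Near $(1,0)$, however, it has two weak points:
\begin{itemize}
\item It combines $\dot x_{2a}\to0$ with an additive error $\kappa\eps$, which is not small relative to $x_{2a}\to0$.
\item For connecting orbits it asserts $\dot x_1<0$ on a $\Delta$-neighbourhood of $(1,0)$. This fails, since $\dot x_1=x_1\big(\kappa x_{2d}-(\frac{1+s}{2}p-s)x_{2a}\big)>0$ wherever $(\frac{1+s}{2}p-s)x_{2a}<\kappa x_{2d}$, and such points of $\Delta$ exist arbitrarily close to $(1,0)$.
\end{itemize}

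Your use of local stable manifolds together with the Butler--McGehee lemma avoids both problems and handles heteroclinic and homoclinic chains cleanly. The price is relying on invariant-manifold theory rather than elementary sign estimates.

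The computation you flag as the main obstacle does go through. Since $\det J_{(1,0)}<0$, the stable eigenvalue satisfies $\mu_-=\operatorname{tr}J_{(1,0)}-\mu_+<-(\kappa+\sigma+s)$. Hence $\mu_-+\sigma+s<-\kappa<0$, and $v_1v_2>0$. Both rays of the stable direction then lie strictly outside the tangent cone $\{d_2\ge0,\ d_1+d_2\le0\}$ of $\overline\Delta$ at $(1,0)$. Positive invariance of $\overline\Delta$ then gives $W^s(1,0)\cap\overline\Delta=\{(1,0)\}$.

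Two small corrections.
\begin{itemize}
\item Your own boundary observation shows that for $x_1(0)=0$ the solution converges to $(0,\bar x_{2a})$ for every $\kappa$. So the assertions that the limit is $(1,0)$, or that $(\tilde x_1,\tilde x_{2a})$ is an $\omega$-limit point, require $x_1(0)>0$. This is how the lemma is actually used, cf.\ Remark~\ref{remark-maintheorem}. Say so explicitly instead of treating the statement as valid for all initial data with $x_1(0)+x_{2a}(0)<1$.
\item In the Butler--McGehee step at $(0,\bar x_{2a})$, the point is simply that the full orbit of any $(0,y)$ with $y\neq\bar x_{2a}$ leaves $\overline\Delta$ in backward time. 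It exits through $(0,0)$ if $y<\bar x_{2a}$ and through $(0,1)$ if $y>\bar x_{2a}$. This contradicts the invariance of $\omega\subset\overline\Delta$, so the detour via the closure of the stable branch is unnecessary.
\end{itemize}
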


\begin{remark}
\igb We refrain from investigating the convergence for \bl 
$\kappa \in (\kappa',\tilde{\kappa})$ when $\sigma<\frac{2s^2}{p(1-s)}$.  {\igb Here,} what we know is that both $(1,0)$ and $(0,\bar x_{2a})$ are locally asymptotically stable, so they \igb each \bl  have a domain of attraction with a nonempty interior. \color{black} \igb However, this case is not relevant \bl 
for \igb our study, \bl as the approximating branching process \igb for \bl the mutant population in our stochastic individual-based model during the first phase of invasion is subcritical (thanks to Lemma~\ref{lemma-phase1Moran} below), and thus a mutant invasion is unsuccessful with high probability, so that a dynamical system approximation is \igb meaningless. \bl
\end{remark}
\begin{proof}[Proof of Lemma~\ref{lemma-convtoaneq2}]

Let us suppose that the solution converges to $(1,0)$ as $t\to\infty$ when $\kappa<\kappa'$. Then writing 
\[ g_{2a}(x_1,x_{2a}) = -x_1 x_{2a} s - x_{2a}^2 p + (1-x_1-x_{2a})(\kappa x_{2a}+\sigma) \]
and noting that $\dot x_{2a}(t)=g_{2a}(x_{1}(t),x_{2a}(t))$ according to the second  equation of~\eqref{2dMoran}, we have thanks to the continuity of $g_{2a}$ that
\[ \lim_{t\to\infty} \dot x_{2a}(t) = \lim_{t\to\infty} g_{2a}(x_1(t),x_{2a}(t)) = g_{2a}(1,0)=0. \]
Thus, for $\eps>0$ fixed we can find $t_0=t_0(\eps) >0$ such that for all $t \geq t_0$ we have
\[x_{2d}(t) < \frac{x_1(t)x_{2a}(t)s+x_{2a}^2(t)p}{\kappa x_{2a}(t)+\sigma} + \eps,\]
and then
\[\dot x_1(t) < x_1(t)\Big( x_{2a}(t)(s-\frac{1+s}{2}p)+\kappa\big(\frac{x_1(t)x_{2a}(t)s+x_{2a}^2(t)p}{\kappa x_{2a}(t)+\sigma} + \eps\big) \Big).\]
And also
\[\dot x_1(t)<x_1(t)x_{2a}(t)\Big( s-\frac{1+s}{2}p+x_1(t)\kappa \frac{s}{\sigma}+x_{2a}(t)\kappa\frac{p}{\sigma} \Big) + \kappa \eps.\]
Using that $\lim_{t\to\infty}x_1(t)=1$, $\lim_{t\to\infty}x_{2a}(t)=0$ and $\kappa\frac{s}{\sigma}<p\frac{1+s}{2}-s$, when $\eps$ is small enough we have for $t>t_0$ large enough that
\[\dot x_1(t)<0,\]
which is a contradiction, as $x_1(t)$ must tend to 1. Thus, the solution cannot converge to $(1,0)$ when $\kappa<\kappa'$.
\color{black}
\bigskip

Let us suppose that the solution converges to $(0,\bar{x}_{2a})$ when $\kappa>\tilde{\kappa}$. Using $\kappa>\tilde{\kappa}$, we have that $\tilde{Q}(\kappa)<0$ ($\tilde{Q}$ defined in ~\eqref{Qfunction}), and so:
\[
\begin{aligned}
    &\qquad \bar{x}_{2a} < \frac{\kappa}{\kappa+\frac{1+s}{2}p-s} \\
    &\Leftrightarrow \bar{x}_{2a}\Big( s-\frac{1+s}{2}p \Big) + \kappa (1-\bar{x}_{2a})>0.
\end{aligned}\]
Using that $\lim_{t\rightarrow\infty}x_{2a}(t)=\bar{x}_{2a}$, we have, for $\eps>0$, that there exists $T>0$ such that $\forall \: t\ge T$ we have
\[\dot x_1(t)> x_1(t)(\eps -\kappa x_1(t)),\]
And using that $\lim_{t\to\infty}x_1(t)=0$, we have $\dot x_1(t)>0$ for $t$ large enough. This is a contradiction, and hence $(x_1(t),x_{2a}(t))$ cannot converge to $(0,\bar{x}_{2a})$ when $\kappa>\tilde{\kappa}$. 

\bigskip

It remains to check the cases when the $\omega$-limit set of our solution is a connected set consisting of a finite number of equilibria together with homoclinic and heteroclinic orbits connecting these equilibria.
In such a case, we have a sequence of non-equilibrium points $(q_n)_{n\in \N}$ included in such an orbit such that $\lim_{n\in\N}q_n=y$, where $y$ is an equilibrium. But for all $n$, we have a sequence of times $(t_m^n)_{m\in\N}$ such that the trajectory along those times converges to $q_n$ as $m$ tends to infinity. By extraction, we have a sequence of times such that the trajectory tends to $y$. As the solution gets closer and closer to $y$, we have that if $y$ is a stable equilibrium then the solution must stay close to it, and so we cannot have a homoclinic or heteroclinic orbit involving $y$. But if $y$ is $(1,0)$ (resp. $(0,\bar{x}_{2a})$) and $y$ is unstable, then as we have seen $\dot{x}_1$ is negative (resp. positive) in a neighborhood of $y$ in $\Delta$, and so the solution cannot get as close as we want to $y$, and this contradicts the existence of any heteroclinic or homoclinic orbit involving such a $y$. This excludes all the homoclinic or heteroclinic orbits involving a stable equilibrium or the equilibria $(1,0)$ and $(0,\bar{x}_{2a})$ when those ones are unstable. 

As we do not consider the case when both of these equilibra are stable, it remains to see what we get when both are unstable. Indeed, we do not have any result on the stability of the coordinate-wise positive equilibrium. But thanks to what we have proven, we just have two possibilities as time tends to $+\infty$: convergence to this equilibrium or a union of finitely many homoclinic orbits around this equilibrium. In case of a union of homoclinic orbits, we can extract a sequence of times such that the trajectory along this sequence converges to this equilibrium. 

According to the Poincaré--Bendixson theorem, since the system is two-dimensional and autonomous and the bounded set $\overline{\Delta}$ is positively invariant under it, any trajectory starting from $\overline{\Delta}$ must converge to a connected set consisting of a finite number of equilibria together with homoclinic and heteroclinic orbits between them, a periodic orbit (limit cycle), or an equilibrium. Lemma~\ref{lemma-convtoaneq} excludes the existence of a periodic orbit.
Using the proof of this lemma, we exclude all the cases that are not contained in the statements of the lemma, and we conclude the proof.  
\end{proof}

To end this section, we provide a conjecture on the behaviour of the dynamical system~\eqref{2dMoran} in the cases that we were not able to treat fully in the present section.

\begin{conj}\label{conj-Moran}
\begin{enumerate}[(A)]
    \item For $\sigma > \frac{2s^2}{p(1-s))}$, if $\kappa\in(\tilde{\kappa},\kappa') $, the equilibrium $(\tilde x_1,\tilde x_{2a})$ of the system~\eqref{2dMoran} is locally asymptotically stable. Consequently, started from a coordinate-wise positive initial condition $(x_1(0),x_{2a}(0))$ such that $x_1(0)+x_{2a}(0)<1$, the system converges to $(\widetilde x_1,\widetilde x_{2a})$. 
    \item For $\sigma <\frac{2s^2}{p(1-s))}$, if $\kappa\in(\kappa',\tilde{\kappa})$, the equilibrium $(\tilde x_1,\tilde x_{2a})$ is unstable and both eigenvalues of the corresponding Jacobi matrix have a strictly positive real part.
\end{enumerate}
\end{conj}
Note that if the first \igb assertion in \bl (A) is true, then the second one follows immediately. Indeed, if $(\tilde x_1,\tilde x_{2a})$ is asymptotically stable, there can be no homoclinic orbit around it, and we have seen in the proof of Lemma~\ref{lemma-convtoaneq2} that  apart from the equilibrium $(\tilde x_1,\tilde x_{2a})$, only the union of such homoclinic orbits around this equilibrium can be a possible candidate for the $\omega$-limit set of any relevant solution. 

\section{Proof of Theorem~\ref{theorem-mainMoran}}\label{sec-proofMoran}
The goal of this section is to prove our main result, Theorem~\ref{theorem-mainMoran}. Sections~\ref{sec-phase1proofMoran}, \ref{sec-phase2proofMoran}, and~\ref{sec-phase3proofMoran} provide preliminary results on the first, second, and third phase of the invasion, respectively. We will use these, and the results of Section~\ref{sec-preliminaryproofsMoran}, altogether in order to complete the proof of the theorem in Section~\ref{sec-endofproofMoran}. 

\subsection{First phase of the invasion: extinction or growth of the mutant population}\label{sec-phase1proofMoran}
\begin{proof}

  When $X_1^{(N)}$ is close to the total population size \igb $N$, \bl  we \igb may \bl approximate the process $(X_{2a}^{(N)}(t),X_{2d}^{(N)}(t))$ by the branching process $(Z_{2a}^{(N)}(t),Z_{2d}^{(N)}(t))$ introduced in Section~\ref{sec-3phasesMoran}.
Let us recall that $q_{2a}$ denotes the extinction probability of this branching process. Recall also from~\eqref{Teps2defMoran} that
\[T_{\eps}^2=\inf\{t\ge0: X_2^{(N)}(t)=\lfloor\eps N\rfloor\} \] 
and similarly
\[ T_{\eps}^{1} = \inf \{ t \geq 0 \colon X_1^{(N)}(t)=\lfloor \eps N \rfloor \}. \]
We verify the following lemma about the first phase of the invasion. Its statement is essentially analogous to the one of~\cite[Lemma 3.2]{C+19}, and its proof is also \igb similar; nevertheless we provide it here in \bl full detail for completeness \igb and accessibility. Note also the \bl  mathematically but also biologically \igb related result in \bl \cite[Proposition 4.1]{BT19}. \bl
\begin{lemma}\label{lemma-phase1Moran}
Under the assumptions of Theorem~\ref{theorem-mainMoran}, there exists a function $f \colon (0,\infty) \to (0,\infty)$ tending to zero as $\eps \downarrow 0$ such that
\[\begin{aligned}
    \liminf_{N\to\infty}\mathbb{P}\Big( \frac{T_0^2}{\log(N)}\le f(\eps); T_0^2<T_{\eps}^2 \Big) = q_{2a}-o_{\eps}(1) 
\end{aligned} \numberthis\label{extinctionphase1Moran} \]
and
    \[\limsup_{N\to\infty}\Big|\mathbb{P}\Big(T_{\eps}^2\le T_0^2; \: \big|\frac{T_{\eps}^2}{\log N}-\frac{1}{\lambda_2} \big|\le f(\eps)\Big)-(1-q_{2a})\Big|=o_{\eps}(1), \numberthis\label{goalinvasionMoran}\] 
    where $o_{\eps}(1)$ tends to $0$ as $\eps \downarrow 0$. 

\end{lemma}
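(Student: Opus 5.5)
We will proceed by coupling, following the strategy of \cite[Lemma 3.2]{C+19} and \cite[Proposition 4.1]{BT19}. Until the stopping time $T_\eps^2$, the type $1$ population satisfies $X_1^{(N)}(t)\geq N-\lfloor \eps N\rfloor$. We use this to sandwich $(X_{2a}^{(N)},X_{2d}^{(N)})$ between two autonomous two-type branching processes $(Z^{-,\eps}_{2a},Z^{-,\eps}_{2d})$ and $(Z^{+,\eps}_{2a},Z^{+,\eps}_{2d})$, all constructed on one probability space via Poisson point processes. The rates of these processes are $\eps$-perturbations of those of $(Z_{2a},Z_{2d})$ from Section~\ref{sec-3phasesMoran}.

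\textbf{Step 1: the bounding rates.} The rates of $X^{(N)}$ involve $\ell n/N$ with $\ell/N\in[1-\eps,1]$, the self-interaction rate $pn(n-1)/N\leq p\eps n$, and the recolonisation terms $\kappa m n/N\leq \kappa\eps m$ and $\kappa m\ell/N\in[\kappa(1-\eps)m,\kappa m]$. In the lower process we give each favourable event (birth of $2a$, resuscitation) its minimal rate and each unfavourable one its maximal rate, and vice versa for the upper process. For example, in $Z^-$ a dormant individual dies at rate $\kappa$ with no $2a$ replacement, and active individuals are lost at rate $\frac{1+s}{2}(1-p)+p\eps$. A transition $(i,j)\to(i+1,j-1)$ must be dominated in both coordinates. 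To handle this, we compare total type $2$ mass together with the active count, or more simply we couple jump by jump with thinning. Monotonicity is checked case by case.

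\textbf{Step 2: continuity in $\eps$.} The mean matrices $J^{\pm,\eps}$ converge to $J$, so their Perron eigenvalues satisfy $\lambda^{\pm,\eps}\to\lambda_2$. Likewise the extinction probabilities converge, $q^{\pm,\eps}_{2a}\to q_{2a}$, because the extinction probability is the smallest root of a quadratic whose coefficients depend continuously on $\eps$. Away from criticality this root is simple, since $\lambda_2\neq0$ by $\kappa\neq\kappa'$ and $p\neq\frac{2s}{1+s}$.

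\textbf{Step 3: standard branching estimates.} For a supercritical multitype branching process, $e^{-\lambda t}(Z_{2a}+Z_{2d})(t)\to W$ almost surely, with $\{W>0\}$ equal to the survival event (Kesten--Stigum, or \cite{AN72}). Hence on survival the hitting time of $\lfloor \eps N\rfloor$ satisfies $T/\log N\to 1/\lambda$ in probability. On extinction, the extinction time is almost surely finite, hence $o(\log N)$. In the subcritical case the extinction time is also finite almost surely and the process stays below $\eps N$ with probability tending to one. Applying this to $Z^{\pm,\eps}$ and using the coupling on $\{t<T_\eps^2\}$ gives both displays. Extinction of $Z^+$ forces $T_0^2<T_\eps^2$ with $T_0^2=O(1)$, so \eqref{extinctionphase1Moran} follows with lower bound $q^{+,\eps}_{2a}\to q_{2a}$. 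Survival of $Z^-$ forces $T_\eps^2\leq T_0^2$ with $T_\eps^2/\log N$ at most roughly $1/\lambda^{-,\eps}$, and survival of $Z^+$ gives the matching lower bound $1/\lambda^{+,\eps}$. The discrepancy events, where $Z^-$ dies but $Z^+$ survives, have probability $q^{-,\eps}-q^{+,\eps}=o_\eps(1)$. We then set $f(\eps)=\max|1/\lambda^{\pm,\eps}-1/\lambda_2|+\eps$.

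\textbf{Main obstacle.} The hardest part is Step 1, constructing a monotone coupling for a two-type process whose transition $2d\to2a$ moves mass between coordinates. A naive componentwise order can fail. To fix it, I would order the processes by the pair (number of type-$2$ individuals, number of active ones) and verify that each jump preserves this order, possibly by coupling resuscitations individually. A second, cruder route is to use the single-type process that tracks the total type-$2$ mass, with each dormant individual eventually either resuscitating or dying. This yields a Galton--Watson embedding whose offspring law depends continuously on $\eps$.
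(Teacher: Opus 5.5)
Your architecture is the paper's. You sandwich $(X^{(N)}_{2a},X^{(N)}_{2d})$ on $[0,T_0^2\wedge T_\eps^2]$ between two autonomous two-type branching processes with $\eps$-perturbed rates, then use continuity in $\eps$ of the extinction probabilities and Perron roots together with standard super/subcritical estimates. Steps~2 and~3 are fine. The gap is Step~1, which you rightly single out but leave unresolved, and the remedy you propose does not work. Ordering by (total type-$2$ mass, active count) does not remove the difficulty. Take $\kappa=0$, which Theorem~\ref{theorem-mainMoran} allows. The pair ``$X^{(N)}$ has $0$ active and $1$ dormant individual, the bound has $1$ active and $0$ dormant'' is comparable in your order. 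Yet the bound's total drops to $0$ at a positive rate (death of its active individual), while a dormant individual of $X^{(N)}$ can never die, so the order breaks. To keep the order you must prevent this pair from arising out of the common start $(1,0)$, i.e.\ couple the dormancy moves of $X^{(N)}$ with those of the bound. That brings you back to exactly the rate constraints under which plain componentwise order already works. Your Galton--Watson fallback would at best give extinction probabilities, not the real-time exponent $\lambda_2$ needed in \eqref{goalinvasionMoran}. It also still requires the same individual-level comparison with the non-branching process $X^{(N)}$.

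The missing point is that the obstacle is the choice of rates, not of the order: the moves $2a\to 2d$ and $2d\to 2a$ are neither purely favourable nor purely unfavourable. Write $(\ell,n,m)$ for the state of $X^{(N)}$, with $n+m\le \eps N$. Then $Z^{\eps,+}$ dominates componentwise as soon as its per-capita rates satisfy, uniformly in the state:
\begin{itemize}
\item birth $\ge \frac{\ell}{N}\frac{1-s}{2}$;
\item dormancy $\ge \frac{p}{N}(\ell\frac{1+s}{2}+n-1)$, \emph{but} dormancy plus death $\le \frac{\ell}{N}\frac{1+s}{2}+\frac{p(n-1)}{N}$;
\item resuscitation $\ge \sigma+\kappa\frac{n}{N}$, \emph{but} resuscitation plus dormant death $\le \sigma+\kappa\frac{N-m}{N}$.
\end{itemize}
Couple each matched individual through a common Poisson clock with uniform marks, letting unmatched individuals of the bound evolve freely. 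For an active individual: dormancy of $X^{(N)}$ $\subset$ dormancy of the bound $\subset$ dormancy or death of $X^{(N)}$, and the bound's death sits inside the remaining part of the death of $X^{(N)}$. Dormant individuals are handled analogously. For instance, the choice dormancy $p(\frac{1+s}{2}+\eps)$, death $(1-\eps)\frac{1+s}{2}-p(\frac{1+s}{2}+\eps)$, birth $\frac{1-s}{2}$, resuscitation $\sigma+\kappa\eps$ and dormant death $\kappa(1-2\eps)$ works. The mirrored choice gives $Z^{\eps,-}$: dormancy $(1-\eps)\frac{1+s}{2}p$, dormancy plus death $\frac{1+s}{2}+p\eps$, birth $(1-\eps)\frac{1-s}{2}$, resuscitation $\sigma$, dormant death $\kappa$. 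These rates converge to those of $(Z_{2a},Z_{2d})$, so your Steps~2--3 then go through verbatim. The paper itself only asserts the componentwise sandwich via a Poissonian construction. Its printed $Z^{\eps,+}$ loses an active individual at per-capita rate $\frac{1+s}{2}+p\eps$, faster than $X^{(N)}$ does when $n=1$, so its constants need the same adjustment.
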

\begin{proof}
We first show that 
\[ \limsup_{N\to\infty}\Big| \mathbb{P}\Big( T_0^2<T_{\eps}^2 \Big| \frac{1}{N}\mathbf X^{(N)}(0)=\big( \frac{N-1}{N},\frac{1}{N},0\big) \Big) -q_{2a}\Big| =o_{\eps}(1). \numberthis\label{phase1} \]
To do so, we introduce the branching process $(Z_{2a}^{\eps, -}(t),Z_{2d}^{\eps,-}(t))$ that jumps from $(i,j) \in \mathbb N_0^2$ to:
\begin{itemize}
    \item $(i-1,j)$ at rate $i\frac{1+s}{2}(1-p)c_-$,
    \item $(i+1,j)$ at rate $i \frac{1-s}{2}(1-\eps)$,
    \item $(i-1,j+1)$ at rate $pi(\frac{1+s}{2}(1-\eps)-\eps)$,
    \item $(i+1,j-1)$ at rate $\sigma j$,
    \item $(i,j-1)$ at rate $\kappa j(1+\eps)$\color{black}, 
\end{itemize}
with $c_-=1+\frac{\eps p}{1-p}+\frac{2p\eps}{(1+s)(1-p)}$, and the branching process $(Z_{2a}^{\eps,+}(t),Z_{2d}^{\eps,+}(t))$ that jumps from $(i,j) \in \mathbb N_0^2$ to:
\begin{itemize}
    \item $(i-1,j)$ at rate $i\frac{1+s}{2}(1-p)c_+$,
    \item $(i+1,j)$ at rate $i \frac{1-s}{2}(1+\eps)$,
    \item $(i-1,j+1)$ at rate $pi(\frac{1+s}{2}(1+\eps)+\eps)$,
    \item $(i+1,j-1)$ at rate $\sigma j+\kappa j\eps$, 
\end{itemize}
with $c_+=1-\frac{p\eps}{1-p}$, to obtain the almost \igb sure simultaneous sandwich couplings \bl 
\[
\begin{aligned}
    Z_{2v}^{\eps,-}(t)\le Z_{2v}^{(N)}(t)&\le Z_{2v}^{\eps,+}(t), \\[.1cm]
    Z^{\eps,-}_{2v}(t)\le X^{(N)}_{2v}(t)&\le Z^{\eps,+}_{2v}(t), 
\end{aligned}
\numberthis\label{firstcouplingMoran}
\]
\igb on $[0,T_0^2 \wedge T_\eps^2]$ for $\eps$ small enough,
where $v\in\{a,d\}$. \bl Indeed, the processes appearing in~\eqref{firstcouplingMoran} can be constructed on the same probability space including a suitable Poissonian construction, which we will formally introduce below in the proof of Lemma~\ref{lemma-phase2Moran}. \color{black} 

We write $q^{(\eps,\diamond)}$ \igb for \bl the extinction probability of $(Z_{2a}^{\eps,\diamond}(t),Z_{2d}^{\eps,\diamond}(t))$ for $\diamond\in\{-,+\}$ and $\eps>0$ fixed. As in~\cite[p.~487]{C+19}, we obtain:
\[0\le\liminf_{\eps\to0}|q^{(\eps,\diamond)}-q_{2a}|\le\limsup_{\eps\to0}|q^{(\eps,\diamond)}-q_{2a}|\le\limsup_{\eps\to0}|q^{(\eps,-)}-q^{(\eps,+)}|.\]

We introduce also $T_x^{(\eps,\diamond),2}=\inf\{t>0:Z_{2a}^{(\eps,\diamond)}(t)+Z_{2d}^{(\eps,\diamond)}(t)=\lfloor Nx\rfloor\}$. Thanks to the \igb sandwich \bl coupling, we have: 
\[\mathbb{P}(T_{\eps}^{(\eps,-),2} \le T_{0}^{(\eps,-),2}) \le \mathbb{P}(T_{\eps}^{2} \le T_{0}^{2}) \le\mathbb{P}(T_{\eps}^{(\eps,+),2} \le T_{0}^{(\eps,+),2}),\] and moreover:
\[\begin{aligned}
    \liminf_{N\to\infty}\mathbb{P}(T_{\eps}^{(\eps,-),2} \le T_{0}^{(\eps,-),2})\ge 1-q^{(\eps,-)}, \\
    \limsup_{N\to\infty}\mathbb{P}(T_{\eps}^{(\eps,+),2} \le T_{0}^{(\eps,+),2})\le 1-q^{(\eps,+)},
\end{aligned}\]
and hence: 
\[ \limsup_{N\to\infty}|\mathbb{P}(T_{\eps}^2\le T_0^2)-(1-q_{2a})|=o_{\eps}(1), \] and we obtain~\eqref{phase1} by taking \igb complements. \bl

\bigskip

\igb Now, let $B$ the event that \bl 
the branching process $(Z_{2a}^{(N)}(t),Z_{2d}^{(N)}(t))$
\igb goes extinct. Moreover, let $f$ be a \bl
continuous function such that $f(\eps)\xrightarrow[\eps\to0]{}0$. Using that
\[\limsup_{N\to\infty}\mathbb{P}\big(B\triangle\{T_0^2<T_{\eps}^2\}\big) +\mathbb{P}\big(B\triangle\{T_0^{(\eps,+),2}<\infty\}\big)=o_{\eps}(1),\]
where $\triangle$ means symmetric difference, we can end the proof of the result on the first phase of the invasion:
\[\begin{aligned}
    &\ge \liminf_{N\to\infty}\mathbb{P}\Big(\frac{T_0^2}{\log(N)}\le f(\eps); T_0^{(\eps,+)2}<\infty\Big) -o_{\eps}(1) \\
    &\ge \liminf_{N\to\infty}\mathbb{P}\Big(\frac{T_0^{(\eps,+),2}}{\log(N)}\le f(\eps); T_0^{(\eps,+)2}<\infty\Big) -o_{\eps}(1) \\
    &\ge \liminf_{N\to\infty}\mathbb{P}(T_0^{(\eps,+)2}<\infty) -o_{\eps}(1) \\
    &\ge q_{2a}-o_{\eps}(1),
\end{aligned}\]
and thus we have verified~\eqref{extinctionphase1Moran}. 
\igb Hence, \bl on the event $\{ T_0^2 < T_{\eps}^2\}$, we have
\[ \lim_{N\to\infty} \frac{T_0^2}{\log N} = 0 \qquad \text{in probability}. \numberthis\label{sublogagain} \]

\bigskip

Let us recall, for the rest of the proof, that we have
\[\limsup_{N\to\infty}|\mathbb{P}(T_{\eps}^2\le T_0^2)-(1-q_{2a})|=o_{\eps}(1). \]
Moreover, we will see that 
$$
\limsup_{N\to\infty}\Big| \mathbb{P} \Big( T_{\eps}^2\le T_0^2; \: |\frac{T_{\eps}^2}{\log N }-\frac{1}{\lambda_2}|\le f(\eps) \Big)-(1-q_{2a}) \Big| = o_{\eps}(1),
$$ 
where $f$ is a function that satisfies the same conditions as before.
Indeed, let us denote \igb by \bl $\tilde{\lambda}^{(\eps, \diamond})$ \igb the maximal \bl eigenvalue of the mean matrix of $(Z_{2a}^{(\eps,\diamond)}, Z_{2d}^{(\eps,\diamond)})$. This eigenvalue is positive for $\eps$ small enough and tends to $\lambda_2$ as $\eps\to0$. As before:
\[\begin{aligned}
    \mathbb{P}\Big( T_{\eps}^{(\eps,-),2}\le T_0^{(\eps,-),2} \wedge \frac{\log N}{\lambda_2}(1+f(\eps)) \Big)&\le \mathbb{P} \Big( T_{\eps}^{2}\le T_0^{2} \wedge \frac{\log N}{\lambda_2}(1+f(\eps)) \Big) \\
    &\le \mathbb{P}\Big( T_{\eps}^{(\eps,+),2}\le T_0^{(\eps,+),2} \wedge \frac{\log N}{\lambda_2}(1+f(\eps))\Big), 
\end{aligned}\]
for $\eps$ small enough such that $|\frac{\tilde{\lambda}^{(\eps,\diamond)}}{\lambda_2}-1|\le \frac{f(\eps)}{2}$. 

For $\eps$ small enough, we can argue analogously to~\cite[equation (4.34)]{BT21} to obtain
\[\limsup_{N\to\infty}\Big|\mathbb{P}\Big( T_{\eps}^2\le T_0^2\wedge\frac{\log N}{\lambda_2}(1+f(\eps))\Big)-(1-q_{2a})\Big|=o_{\eps}(1),\] 
and therefore
\[\limsup_{N\to\infty}\Big|\mathbb{P}\Big(T_{\eps}^2\le T_0^2; \: |\frac{T_{\eps}^2}{\log N}-\frac{1}{\lambda_2}|\le f(\eps)\Big)-(1-q_{2a})\Big|=o_{\eps}(1),\] 
which is~\eqref{goalinvasionMoran}, as wanted.
\end{proof}


\subsection{Second phase of the invasion: getting close to the new equilibrium}\label{sec-phase2proofMoran}
Let us now consider the second phase, where the number of resident individuals starts to decrease and the population gets closer to its new equilibrium (which is either $(0,\bar x_{2a})$ or $(\tilde x_1,\tilde x_{2a})$, depending on the choice of parameters). Note that according to Lemma~\ref{lemma-phase1Moran}, in case $(1,0)$ is hyperbolically asymptotically stable, or equivalently, $q_{2a}=1$, i.e.\  $\kappa> \kappa'=\frac{\sigma}{s}(\frac{1+s}{2}p-s)$, type $2$ dies out in the first phase with high probability and therefore no second phase occurs.


\medskip

Using the Kesten--Stigum theorem \igb for multi-type branching processes \bl (see e.g.\ \cite[Theorem 2.1]{GB03}), when $\lambda_2>0$ (which is equivalent to $\kappa< \kappa'$, i.e.\ \eqref{2invades1Moran}), we have:
\[
\Big(\frac{Z_{2a}^{(N)}(t)}{Z_2^{(N)}(t)},\frac{Z_{2d}^{(N)}(t)}{Z_2^{(N)}(t)}\Big)\xrightarrow[t\to\infty]{}(\pi_{2a},\pi_{2d}),
\]
almost surely conditionally on the event of survival of the branching process, where $Z_2^{(N)}(t)=Z_{2a}^{(N)}(t)+Z_{2d}^{(N)}(t)$ and $(\pi_{2a},\pi_{2d})$ is the positive left eigenvector associated to $\lambda_2$ such that $\pi_{2a}+\pi_{2d}=1$.

We first need to prove the following lemma in order to use Lemma \ref{lemma-convtoaneq} below. 
\begin{lemma}\label{lemma-phase2Moran}
Assume that~\eqref{2invades1Moran} holds. Then for 
$C$ sufficiently large and for $\delta>0$ such that $\pi_{2a} \pm \delta \in (0,1)$, we have:
\[\liminf_{N\to\infty}\mathbb{P} \Big( \exists \: t \in [T_{\eps}^2,T_{\sqrt{\eps}}^2]; \: \frac{\eps N}{C}\le X_2^{(N)}(t)\le \sqrt{\eps}N; \: \pi_{2a}-\delta < \frac{X_{2a}^{(N)}(t)}{X_2^{(N)}(t)}< \pi_{2a}+\delta \Big| T_{\sqrt{\eps}}^2< T_0^2 \Big) \ge 1 -o_{\eps}(1).\]
\end{lemma}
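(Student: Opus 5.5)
The plan is to transfer the Kesten--Stigum type-composition result from the branching process to the Moran model through the sandwich coupling~\eqref{firstcouplingMoran}. First we make the Poissonian construction explicit. On one probability space we take independent Poisson point measures, one per transition type of $\mathbf X^{(N)}$ and of $(Z^{\eps,\pm}_{2a},Z^{\eps,\pm}_{2d})$, with marks uniform on $[0,\infty)$. Each process jumps when the mark of an atom lies below its current rate. This gives the ordering~\eqref{firstcouplingMoran} coordinatewise on $[0,T_0^2\wedge T^2_{\sqrt{\eps}}]$, for $\eps$ small. The key point is that on this time interval $X_1^{(N)}\ge N(1-\sqrt{\eps})$, so every per-capita rate of type $2$ lies between the corresponding rates of $Z^{\eps,-}$ and $Z^{\eps,+}$. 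For this we replace $\eps$ by $\sqrt{\eps}$ in the definition of the bounding processes. This costs only $o_\eps(1)$ in their parameters, hence in $\pi_{2a}$ and $\lambda_2$, which depend continuously on the mean matrix.

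Next we apply the Kesten--Stigum theorem to both bounding processes. On survival, $e^{-\tilde\lambda^{(\eps,\pm)}t}Z^{\eps,\pm}_{2v}(t)\to W^{\pm}\pi^{\pm}_{2v}$ almost surely for $v\in\{a,d\}$, with $W^\pm>0$. Here $\pi^\pm$ is the normalized left eigenvector, and $\pi^\pm\to\pi$ as $\eps\downarrow0$. We then examine the window $[T^2_\eps,T^2_{\sqrt\eps}]$ on the event $\{T^2_{\sqrt\eps}<T_0^2\}$. By Lemma~\ref{lemma-phase1Moran}, on this event $T^2_\eps$ is of order $\log N/\lambda_2$, so all processes are in their asymptotic regime. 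At time $t=T^2_\eps$ we have $X_{2a}^{(N)}\in[Z^{\eps,-}_{2a},Z^{\eps,+}_{2a}]$ and $X_{2}^{(N)}=\lfloor\eps N\rfloor$. The Kesten--Stigum limits give $Z^{\eps,+}_{2a}(t)/Z^{\eps,-}_{2}(t)\approx \pi^+_{2a}(W^+/W^-)e^{(\tilde\lambda^+-\tilde\lambda^-)t}$, and similarly for the lower bound.

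The main obstacle is controlling the ratio $W^+/W^-$ together with the drift $e^{(\tilde\lambda^+-\tilde\lambda^-)t}$. Over a time of order $\log N$, this drift equals $N^{O(\eps)}$, which does not vanish as $N\to\infty$. A naive sandwich therefore cannot pin the ratio $X_{2a}/X_2$ to within $\delta$. We would first try to avoid comparing with the mean behaviour of the bounding processes. Instead, we condition on $\{T^2_\eps\le T^2_0\}$ and restart at $T^2_\eps$, following the approach of~\cite{C06} and~\cite[Lemma 3.2]{C+19}. After $T^2_\eps$ the population is macroscopic, so the law of large numbers \cite[Theorem 2.1]{EK} applies on any fixed time horizon. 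The rescaled process $\frac1N(X_{2a},X_{2d})$ follows, up to $o(1)$, the linear ODE $\dot y=yJ$ perturbed by an $O(\sqrt{\eps})$ term, as long as $X_2\le\sqrt\eps N$. Under this linearized flow the composition $y_{2a}/(y_{2a}+y_{2d})$ converges exponentially fast to $\pi_{2a}$ (up to $O(\sqrt{\eps})$), from any initial composition bounded away from the degenerate boundary, because $\pi$ is the attracting Perron direction. Along the way the total mass can shrink by at most a factor $C$, bounded by the worst-case negative transient of $e^{tJ}$. This is where the constant $C$ in the statement comes from.

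It remains to show that the total mass $X_2$ does not exceed $\sqrt\eps N$ before the composition settles. This holds because the settling time $t_\delta$ is a fixed time independent of $\eps$, while growing from $\eps N$ to $\sqrt\eps N$ requires time at least about $\frac{1}{2\lambda_2}\log(1/\eps)\to\infty$. Taking $t=T^2_\eps+t_\delta$ then yields the event in the statement with conditional probability $1-o_\eps(1)$. The one remaining technicality is the initial composition at $T^2_\eps$: it must be nondegenerate with high probability, meaning $X_{2a}(T^2_\eps)\ge c\,\eps N$. We obtain this from the coupling, since $Z^{\eps,-}_{2a}(T^2_\eps)/Z^{\eps,+}_{2}(T^2_\eps)$ is bounded below by a positive random constant by Kesten--Stigum.
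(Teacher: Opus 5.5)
Your route is different from the paper's, and apart from the last step it works. The paper never restarts the process or invokes a fluid limit. It works directly with the fraction $X^{(N)}_{2a}/X^{(N)}_2$ on $[T^2_\eps,T^2_{\sqrt\eps}]$:
\begin{itemize}
\item it writes the fraction as a semimartingale $M_2+V_2$ driven by Poisson measures;
\item it bounds $\langle M_2\rangle$ by $O\big(\log\log(1/\eps)/(\eps N/C)\big)$ up to the time $T_{\eps,\eps/C}$;
\item it shows the drift integrand is close to $S(x)=-x^2\big(\frac{1+s}{2}p-s+\kappa\big)-x(s+\sigma-\kappa)+\sigma$, whose unique zero in $[0,1]$ is $\pi_{2a}$.
\end{itemize}
The lower mass bound $\eps N/C$ over the whole window is imported from \cite[Lemma A.1]{C+19}. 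The bound $T^2_{\sqrt\eps}-T^2_\eps>\log\log(1/\eps)$ comes from \cite[Lemma A.2]{C+19} via Yule domination.

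You instead use the strong Markov property at $T^2_\eps$ and the law of large numbers \cite[Theorem 2.1]{EK} on a fixed horizon $t_\delta$, uniformly in the initial composition. You then read off the composition from the perturbed linear flow $\dot y=yJ$, whose projective dynamics is exactly $\dot x=S(x)$ up to $O(\sqrt\eps)$. This replaces the martingale bookkeeping and both auxiliary lemmas by a black-box fluid limit. It also gets $C$ deterministically from the transient of $e^{tJ}$, and gives conditional probability tending to $1$ for each fixed small $\eps$. The price is uniformity of the LLN over initial states, which the paper also uses implicitly in Section~\ref{sec-endofproofMoran}. Your first two paragraphs (sandwich coupling and Kesten--Stigum) play no role in this argument.

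The step that fails is your last paragraph. On survival $T^2_\eps\approx \log N/\lambda_2$ and $\tilde\lambda^+>\tilde\lambda^-$. So Kesten--Stigum gives that $Z^{\eps,-}_{2a}(T^2_\eps)/Z^{\eps,+}_{2}(T^2_\eps)$ behaves like $\pi^-_{2a}(W^-/W^+)\,N^{-(\tilde\lambda^+-\tilde\lambda^-)/\lambda_2}$. This tends to $0$ as $N\to\infty$, which is exactly the obstruction you identified yourself. It therefore cannot yield $X^{(N)}_{2a}(T^2_\eps)\ge c\,\eps N$.

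Fortunately, no nondegeneracy is needed. Since $S(0)=\sigma>0$ and $S(1)=-\frac{1+s}{2}p<0$, the composition enters $(\pi_{2a}-\delta,\pi_{2a}+\delta)$ within a time bounded uniformly over all initial compositions in $[0,1]$. Equivalently, $J$ has positive off-diagonal entries, so $e^{tJ}$ is entrywise positive for $t>0$. The mass-loss factor $\big(\min_{v\in\{a,d\}}\min_{0\le t\le t_\delta}\|e_v e^{tJ}\|_1\big)^{-1}$, with $e_v$ the unit row vectors, is uniform as well. The paper's drift argument likewise never needs it, since $S>\theta/2$ on all of $[0,\pi_{2a}-\delta)$. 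With this replacement, your third and fourth paragraphs constitute a complete proof.
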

To a great extent, this lemma is analogous to~\cite[Proposition 3.2]{C+19} and \cite[Proposition 4.4]{BT19}, and the principle of the proof is also rather similar. Still, some arguments, especially the ones related to the semimartingale decomposition of the mutant active/dormant proportions, have to be adapted to our setting. \igb Moreover, \bl the way we will use this lemma will be different from the aforementioned papers, \igb due \bl to the differences in the behaviour of the corresponding dynamical systems.
\begin{proof}[Proof of Lemma~\ref{lemma-phase2Moran}]
As in~\cite[Proposition 4.4]{BT19}, if $\pi_{2a}-\delta < \frac{X_{2a}^{(N)}(T_{\eps}^2)}{X_2^{(N)}(T_{\eps}^2)}< \pi_{2a}+\delta$, there is nothing to show. We then also suppose that $\frac{X_{2a}^{(N)}(T_{\eps}^2)}{X_2^{(N)}(T_{\eps}^2)}\le \pi_{2a}-\delta$ and remark that the case $\frac{X_{2a}^{(N)}(T_{\eps}^2)}{X_2^{(N)}(T_{\eps}^2)}\ge \pi_{2a}+\delta$ will be similar.

First of all, it is clear that Lemma~\ref{lemma-phase1Moran} also holds with $\eps$ replaced by $\sqrt\eps$ everywhere on the left-hand side, i.e.\ under the conditions of \igb this \bl lemma, we can also find \igb a suitable \bl $f$ 
such that
\[\limsup_{N\to\infty}\Big|\mathbb{P}\Big(T_{\sqrt\eps}^2\le T_0^2; \: |\frac{T_{\sqrt\eps}^2}{\log N}-\frac{1}{\lambda_2}|\le f(\eps)\Big)-(1-q_{2a})\Big|=o_{\eps}(1). \] 

The \igb expression for the $\liminf$ \bl in the lemma told us that after reaching $T_{\eps}^2$, with high probability, for $\eps$ small enough, the mutant population will not \igb shrink \bl  below $\frac{\eps N}{C}$ for $C$ sufficiently large. We introduce the stopping time
\[T_{\eps, \eps /C}=\inf\big\{t\ge T_{\eps}^2; \: X_2^{(N)}(t)\le \frac{\eps N}{C} \big\}.\]
\igb Using\bl~\cite[Lemma A.1]{C+19} and the supercriticality of $(Z_2^{(\eps, -)}(t))$ \igb yields \bl
\[\lim_{N\to\infty}\mathbb{P} \big( T_{\eps, \eps /C} < T_{\sqrt{\eps}}^2 | T_{\sqrt{\eps}}^2<T_0^2 \big)=0.\]
\igb Further, \bl by \cite[Lemma A.2]{C+19}, using the stochastic domination of the total size of mutant individuals by a Yule process 
\igb with \bl birth rate $\frac{1-s}{2}$ (by considering the second type of jump in the definition of the process $X^{(N)}$), we have
\[\lim_{N\to\infty}\mathbb{P} \big(T_{\sqrt{\eps}}^2\le T_{\eps}^2 +\log\log(1/\eps) \, \big| \, T_{\sqrt{\eps}}^2<T_0^2\big) \le \sqrt{\eps}(\log(1/\eps))^{\frac{1-s}{2}}.\]

To show that with probability close to one our fraction $\frac{X_{2a}^{(N)}(t)}{X_2^{(N)}(t)}$ cannot stay below $\pi_{2a}-\delta$ on $[T_{\eps}^2,T_{\sqrt{\eps}}^2]$, we introduce five independent Poisson random measures $P_i$ on $[0,\infty]^2$ with intensity $dud\theta$, corresponding to all the possible transitions for the process $X^{(N)}$. We denote by $\tilde{P}_i(du,d\theta)=P_i(du,d\theta)-dud\theta$ the associated compensated measures.

Then, $\frac{X_{2a}^{(N)}(t)}{X_2^{(N)}(t)}$ is a semimartingale that we can rewrite as
\[\frac{X_{2a}^{(N)}(t)}{X_2^{(N)}(t)}=\frac{X_{2a}^{(N)}(T_{\eps}^2)}{X_2^{(N)}(T_{\eps}^2)}+M_2(t)+V_2(t),\]
for $t\ge T_{\eps}^2$, $M_2$ a martingale and $V_2$ a finite variation process such that
\[\begin{aligned}
    M_2(t)=&-\int_{T_{\eps}^2}^t \int_{[0,\infty[} \mathds{1}_{\big\{\theta\le\frac{1}{N}X_1^{(N)}(u-)X_{2a}^{(N)}(u-)\frac{1+s}{2}(1-p)\big\}}\frac{X_{2d}^{(N)}(u-)}{X_2^{(N)}(u-)(X_2^{(N)}(u-)-1)}\tilde{P}_1(du,d\theta) \\
    &+\int_{T_{\eps}^2}^t \int_{[0,\infty[} \mathds{1}_{\big\{\theta\le\frac{1}{N}X_1^{(N)}(u-)X_{2a}^{(N)}(u-)\frac{1-s}{2}\big\}}\frac{X_{2d}^{(N)}(u-)}{X_2^{(N)}(u-)(X_2^{(N)}(u-)+1)}\tilde{P}_2(du,d\theta) \\
    &-\int_{T_{\eps}^2}^t \int_{[0,\infty[} \mathds{1}_{\big\{\theta\le\frac{1}{N}p(X_1^{(N)}(u-)X_{2a}^{(N)}(u-)\frac{1+s}{2}+X_{2a}^{(N)}(u-)(X_{2a}^{(N)}(u-)-1))\big\}}\frac{1}{X_2^{(N)}(u-)}\tilde{P}_3(du,d\theta) \\
    &+\int_{T_{\eps}^2}^t \int_{[0,\infty[} \mathds{1}_{\big\{\theta\le\sigma X_{2d}^{(N)}(u-)  + \kappa \frac{X_{2d}^{(N)}(u-)X_{2a}^{N}(u-)}{N} \big\}}\frac{1}{X_2^{(N)}(u-)}\tilde{P}_4(du,d\theta) \\
    &+ \int_{T_{\eps}^2}^t \int_{[0,\infty[} \mathds{1}_{\big\{\theta\le \kappa\frac{X_1^{(N)}(u-)X_{2d}^{(N)}(u-)}{N}\big\}}\frac{X_{2a}^{(N)}(u-)}{X_2^{(N)}(u-)(X_2^{(N)}(u-)-1)}\tilde{P}_5(du,d\theta), 
\end{aligned}\]
and 
\[\begin{aligned}
    V_2(t)=&-\int_{T_{\eps}^2}^t \frac{1}{N}X_1^{(N)}(u)X_{2a}^{(N)}(u)\frac{1+s}{2}(1-p)\frac{X_{2d}^{(N)}(u)}{X_2^{(N)}(u)(X_2^{(N)}(u)-1)}du \\
    &+\int_{T_{\eps}^2}^t \frac{1}{N}X_1^{(N)}(u)X_{2a}^{(N)}(u)\frac{1-s}{2}\frac{X_{2d}^{(N)}(u)}{X_2^{(N)}(u)(X_2^{(N)}(u)+1)}du \\
    &-\int_{T_{\eps}^2}^t \frac{1}{N}p(X_1^{(N)}(u)X_{2a}^{(N)}(u)\frac{1+s}{2}+X_{2a}^{(N)}(u)(X_{2a}^{(N)}(u)-1))\frac{1}{X_2^{(N)}(u)}du \\
    &+\int_{T_{\eps}^2}^t \Big(\sigma X_{2d}^{(N)}(u)+ \kappa\frac{ X_{2d}^{(N)}(u)X_{2a}^{(N)}(u)}{N}\Big) \frac{1}{X_2^{(N)}(u)}du \\
    &+\int_{T_{\eps}^2}^t \kappa\frac{X_{1}^{(N)}(u)X_{2d}^{(N)}(u)}{N}\frac{X_{2a}^{(N)}(u)}{X_2^{(N)}(u)(X_2^{(N)}(u)-1)}du.
\end{aligned}\]

For $M_2$, we have the following predictable quadratic variation:
\[\begin{aligned}
    \langle M_2\rangle_t=&-\int_{T_{\eps}^2}^t \frac{1}{N}X_1^{(N)}(u)X_{2a}^{(N)}(u)\frac{1+s}{2}(1-p)\frac{(X_{2d}^{(N)}(u))^2}{(X_2^{(N)}(u))^2(X_2^{(N)}(u)-1)^2}du \\
    &+\int_{T_{\eps}^2}^t \frac{1}{N}X_1^{(N)}(u)X_{2a}^{(N)}(u)\frac{1-s}{2}\frac{(X_{2d}^{(N)}(u))^2}{(X_2^{(N)}(u))^2(X_2^{(N)}(u)+1)^2}du \\
    &-\int_{T_{\eps}^2}^t \frac{1}{N}p\Big(X_1^{(N)}(u)X_{2a}^{(N)}(u)\frac{1+s}{2}+X_{2a}^{(N)}(u)(X_{2a}^{(N)}(u)-1)\Big)\frac{1}{(X_2^{(N)}(u))^2}du \\
    &+\int_{T_{\eps}^2}^t \Big(\sigma X_{2d}^{(N)}(u)+\kappa\frac{X_{2d}^{(N)}(u)X_{2a}^{(N)}(u)}{N}\Big)\frac{1}{(X_2^{(N)}(u))^2}du \\
    &+ \int_{T_{\eps}^2}^t \kappa\frac{X_{1}^{(N)}(u)X_{2d}^{(N)}(u)}{N}\frac{(X_{2a}^{(N)}(u))^2}{(X_2^{(N)}(u)(X_2^{(N)}(u)-1))^2}du.
\end{aligned}\]
And so, $\exists \: C_0>0$ such that $\forall \: t \ge T_{\eps}^2$: $\langle M_2 \rangle_t\le C_0(t-T_{\eps}^2)\sup_{T_{\eps}^2\le u \le t}\frac{1}{X_2^{(N)}(u)-1}$. This leads us to
\[\langle M_2\rangle_{(T_{\eps}^2+\log\log(1/\eps))\wedge T_{\eps, \eps/C}}\le \frac{C_0\log\log(1/\eps)}{\frac{\eps N}{C}-1},\]
and using this, as in~\cite{C+19}, 
\[
\begin{aligned}
   & \limsup_{N\to\infty}\mathbb{P}\Big( \sup_{T_{\eps}^2\le t\le T_{\eps}^2+\log\log(1/\eps)}|M_2(t)|\ge \eps \: |\: T_{\sqrt{\eps}}^2<T_0^2\Big) \\ &\le \limsup_{N\to\infty}\Big[ \mathbb{P}\Big(\sup_{T_{\eps}^2\le t\le (T_{\eps}^2+\log\log(1/\eps))\wedge T_{\eps, \eps/C}}|M_2(t)|\ge \eps \: |\: T_{\sqrt{\eps}}^2<T_0^2\Big) \\& \quad +\mathbb{P} \Big(T_{\eps, \eps/C} < T_{\eps}^2+\log\log(1/\eps) \: | \:  T_{\sqrt{\eps}}^2<T_0^2\Big)\Big] \\
    &\le \limsup_{N\to\infty} \frac{1}{\eps^2}\mathbb{E}\Big[\langle M_2 \rangle_{(T_{\eps}^2+\log\log(1/\eps))\wedge T_{\eps, \eps/C}} \: \ \: T_{\sqrt{\eps}}^2<T_0^2\Big] + \sqrt{\eps}\log(\frac{1}{\eps})^{\frac{1-s}{2}} \\
    &= \sqrt{\eps}\log\big(1/\eps\big)^{\frac{1-s}{2}}.
\end{aligned}
\]

For the finite variation process, we rewrite it as
\[V_2(t)=\int_{T_{\eps}^2}^t P^{(u)}\Big(\frac{X_{2a}^{(N)}(u)}{X_2^{(N)}(u)}\Big)\frac{X_2^{(N)}(u)}{X_2^{(N)}(u)+1} + Q^{(u)}\Big(\frac{X_{2a}^{(N)}(u)}{X_2^{(N)}(u)}\Big)\frac{X_2^{(N)}(u)}{X_2^{(N)}(u)-1} + R^{(u)}\Big(\frac{X_{2a}^{(N)}(u)}{X_2^{(N)}(u)}\Big) \: du,\]
where 
\[\begin{aligned} P^{(u)}(x)& =\frac{1}{N}X_1^{(N)}(u)\frac{1-s}{2}(1-x)x, \\ Q^{(u)}(x)&=\Big( \kappa\frac{X_1^{(N)}(u)}{N}-\frac{1}{N}X_1^{(N)}(u)\frac{1+s}{2}(1-p)\Big)x(1-x), \\ R^{(u)}(x)&=-\frac{1}{N}p\Big(X_1^{(N)}(u)x\frac{1+s}{2} +x(X_{2a}^{(N)}(u)-1)\Big)+\Big(\sigma +\kappa\frac{X_{2a}^{(N)}(u)}{N}\Big)(1-x). \end{aligned}\]
Then, for $N$ large enough and $\eps$ small enough, on $[T_{\eps}^2, T_{\sqrt{\eps}}^2]$, those functions are close on $[0,1]$ to the polynomial functions

\[P(x)=\frac{1-s}{2}(1-x)x, \quad Q(x)=\big(\kappa-\frac{1+s}{2}(1-p)\big)(1-x)x, \quad R(x)= \sigma(1-x)-px\frac{1+s}{2}.\]
And so, for $\eps$ small enough, the integrand of $V_2$ is close to the function

\[S(x)=x(1-x)\Big( \frac{1-s}{2} - \frac{1+s}{2}(1-p)+\kappa \Big) + \sigma(1-x)- px\frac{1+s}{2}.\]
And we can rewrite $S$ as: 
\[S(x)=-x^2\Big( p\frac{1+s}{2}-s+\kappa \Big) - x(s+\sigma -\kappa )+\sigma.\]
Then we have $S(0)>0$, $S(1)<0$, and as the degree of $S$ is two, we know that $x'=S(x)$ has a unique equilibrium \igb in \bl  $[0,1]$. Moreover, this equilibrium is $\pi_{2a}$. Indeed, we have that $(\pi_{2a},\pi_{2d})$ is a left eigenvector associated to $\lambda_2$ for the matrix $J$, and that $\pi_{2d}=1-\pi_{2a}$. Then,
\[-s\pi_{2a}+\sigma(1-\pi_{2a})=\pi_{2a}\frac{-(s+\sigma+\kappa)+\sqrt{(s+\sigma+\kappa)^2-4(s(\sigma+\kappa)-\frac{1+s}{2}p\sigma)}}{2}\]
and thus 
\[ \pi_{2a}=\frac{2\sigma}{\sqrt{(s+\sigma+\kappa)^2-4(s(\sigma+\kappa)-\frac{1+s}{2}p\sigma)}+\sigma+s-\kappa}\color{black}. \]
Using this, we can easily check that $\pi_{2a}$ is a root of $S$. 

We can then choose $\delta>0$ and $\theta>0$ such that $\pi_{2a}-\delta>0$ and $\forall\: x<\pi_{2a} - \delta$: $S(x)>\theta/2$. And so, for $\eps$ small enough and $N$ sufficiently large, we have for all $u \in [T_{\eps}^2,T_{\sqrt{\eps}}^2]$ and $x\in\: ]0,\pi_{2a}-\delta[$:
\[P^{(u)}(x)\frac{X_2^{(N)}}{X_2^{(N)}+1}+Q^{(u)}(x)\frac{X_2^{(N)}}{X_2^{(N)}-1}+R^{(u)}(x)\ge \frac{\theta}{2}>0.\]

We now define $\textbf{t}_{2a}^{(\eps)}=\inf\{t\ge T_{\eps}^2: \: \frac{X_{2a}^{(N)}(t)}{X_2^{(N)}(t)} \ge \pi_{2a}-\delta\}$. Using all the results we got before, we have that on $\{T_{\sqrt{\eps}}^2<T_0^2\}$ and $\forall \: t \in [T_{\eps}^2, (T_{\eps}^2+\log\log(1/\eps))\wedge \textbf{t}_{2a}^{(\eps)}]$ :

\[\pi_{2a}-\delta \ge \frac{X_{2a}^{(N)}(t)}{X_2^{(N)}(t)} \ge \frac{\theta}{2}\big(\log\log(1/\eps)\wedge(\textbf{t}_{2a}^{\eps}-T_{\eps}^2)\big)-\eps,\]
with a probability greater than $1-\sqrt{\eps}\log(1/\eps)^{\frac{1-s}{2}}$. And, as $\log\log(1/\eps)\xrightarrow[\eps \to 0]{}\infty$, we have that, for $\eps$ small enough, $\textbf{t}_{2a}^{\eps} \le T_{\eps}^2+\log\log(1/\eps)$ and so $\textbf{t}_{2a}^{\eps}\le T_{\sqrt{\eps}}^2$ with a probability close to one on $\{T_{\sqrt{\eps}}^2<T_0^2\}$.

Moreover, using that each jump of the fraction is less than $\frac{1}{\frac{\eps N}{C}+1}$, we get that for $N$ sufficiently large, the fraction is contained in the interval $[\pi_{2a}-\delta, \pi_{2a}+\delta]$. This finishes the proof of the lemma.
\end{proof}

\bigskip

According to Lemma~\ref{lemma-phase2Moran}, knowing that the mutants reach the level $\sqrt{\eps}N$ before dying out, with high probability \color{black} the population of mutants will not go below $\frac{\eps N}{C}$ for a $C$ sufficiently large after having reached the level $\eps N$. Moreover, the proportion of active mutants in the mutant population will be close to the equilibrium $(\pi_{2a},\pi_{2d})$ before the size of the mutant population reaches $\sqrt{\eps}N$. 

We will see in Section ~\ref{sec-endofproofMoran} that \color{black} the previous lemmas regarding the convergence of the dynamical system will, in case of invasion, bring us from the configuration state obtained in \color{black} Lemma~\ref{lemma-phase2Moran} to an initial condition corresponding to Lemma~\ref{lemma-phase3Moran}. 
In case of coexistence, Lemma~\ref{lemma-phase2Moran} and the lemmas on the dynamical system will in turn be useful for proving that the rescaled stochastic process approaches the polymorphic equilibrium with high probability conditional on mutant survival. 

\subsection{Third phase of the invasion: Extinction of the \igb formerly \bl resident population}\label{sec-phase3proofMoran}
Let us now study a configuration where the \igb mutant population size has grown close to its \bl equilibrium and the \igb originally resident population has shrunk to a size that \bl is small compared to the mutant one. We will estimate the time \igb to \bl extinction of the \igb latter \bl and show that \igb in the meantime, \bl the mutant population stays close to its equilibrium with high probability. We suppose here that $\bar{x}_{2a}>\frac{\kappa}{\kappa+\frac{1+s}{2}p-s}$ in order \igb the \bl the branching process associated to the \igb previously \bl resident population is subcritical (see ~\ref{phase3cond}).  Let us also recall that this condition is equivalent to the one for the stability of $(0,\bar{x}_{2a})$ which is $\kappa<\tilde\kappa$. 

\igb To carry out this program formally, we first define \bl 
$T_{S_{\eps}}=\inf\{t>0, \: X^{(N)}(t)\in S_{\eps}\}$, where
\[S_{\eps}=\{0\}\times[\bar{x}_{2a}-\eps, \bar{x}_{2a}+\eps]\times[\bar{x}_{2d}-\eps,\bar{x}_{2d}+\eps].\]

As in \cite[Proposition 4.9]{BT19}, we will prove the following assertion.
\begin{lemma}\label{lemma-phase3Moran}
Under the assumptions of Theorem~\ref{theorem-mainMoran} in case $\bar{x}_{2a}>\frac{\kappa}{\kappa+\frac{1+s}{2}p-s}$, there exist positive numbers $\eps_0$ and $C_0$ such that for \igb all \bl $\eps \in (0,\eps_0)$, if the initial condition satisfies
\[\big|\frac{X_{2a}^{(N)}(0)}{N}-\bar{x}_{2a}\big|\le\eps, \quad \big|\frac{X_{2d}^{(N)}(0)}{N}-\bar{x}_{2d} \big|\le\eps \quad \text{ and } \quad N\eta\eps \le X_1^{(N)}(0)\le \eps N,\]
for $\eta \in (0,\frac{1}{2})$, then we have
\[\begin{aligned}\forall \: \tilde{C}>\frac{1}{\lambda_1}+C_0\eps, \quad \lim_{N\to\infty} \mathbb{P}(T_{S_{\eps}}\le \tilde{C}\log(N))=1,  \\
\forall \: 0\le \tilde{C}<\frac{1}{\lambda_1}-C_0\eps, \quad \lim_{N\to\infty} \mathbb{P}(T_{S_{\eps}}\le \tilde{C}\log(N))=0.\end{aligned}\]
\end{lemma}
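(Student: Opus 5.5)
The plan follows the strategy of \cite[Proposition 4.9]{BT19} and \cite{C06}, and has two parts. First, the type-$2$ coordinates stay close to $(\bar x_{2a},\bar x_{2d})$ on a time interval of order $\log N$. Second, during that interval the small type-$1$ population is sandwiched between two subcritical linear birth--death processes whose rates are $\eps$-perturbations of those of $Z_1$.

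\emph{Step 1: the mutants stay near equilibrium.} While $X_1^{(N)}\le \eps N$, we compare $\frac1N(X_{2a}^{(N)},X_{2d}^{(N)})$ with the one-dimensional limiting dynamics $\dot x_{2a}=-px_{2a}^2+(1-x_{2a})(\kappa x_{2a}+\sigma)$ (with $x_{2d}=1-x_{2a}$), perturbed by terms of size $O(\eps)$. This system has $\bar x_{2a}$ as a hyperbolically stable equilibrium: the derivative is the bottom-right entry of $J_{(0,\bar x_{2a})}$, which the proof of Lemma~\ref{lemma-simultaneousstability} shows is negative. A Freidlin--Wentzell large-deviation estimate, as in \cite[Theorem 3(c)]{C06}, then gives the following. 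There is a constant $C_1$ such that, with probability tending to one, the rescaled $(X_{2a}^{(N)},X_{2d}^{(N)})$ stays in the $C_1\eps$-neighbourhood of $(\bar x_{2a},\bar x_{2d})$ up to time $e^{VN}$ for some $V>0$, as long as $X_1^{(N)}\le \eps N$. This is the step I expect to be the main technical obstacle. The exit-time estimate must be set up for a process whose drift also depends on the slowly varying coordinate $X_1^{(N)}/N\in[0,\eps]$. I would handle this by using a Lyapunov-type quadratic function around $\bar x_{2a}$, uniform over that perturbation, and then invoking the standard exponential exit bound.

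\emph{Step 2: type $1$ decays at rate $\lambda_1$.} On the good event of Step 1, the jump rates of $X_1^{(N)}$ from state $n$ are as follows. The birth rate is $n\frac{X_{2a}}{N}\frac{1+s}{2}(1-p)+\kappa X_{2d}\frac nN$. The death rate is $n\frac{X_{2a}}{N}\frac{1-s}{2}+\kappa X_{2d}\cdot$(death of a dormant individual replaced by type~$1$), which is in fact part of the first rate. Both are per-capita rates within $C\eps$ of those of $Z_1$ (per capita $\kappa\bar x_{2d}$). Hence $X_1^{(N)}$ can be coupled between two birth--death processes $Z_1^{\eps,\pm}$ with net growth rates $-\lambda_1\pm C_2\eps$. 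These are subcritical for small $\eps$, since $\lambda_1>0$ under the assumption $\bar x_{2a}>\frac{\kappa}{\kappa+\frac{1+s}{2}p-s}$ by \eqref{phase3cond}. The coupling uses the Poissonian construction of Lemma~\ref{lemma-phase2Moran}, and it also guarantees that $X_1^{(N)}$ never exceeds $\eps N$ again with high probability.

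For a subcritical linear birth--death process started from $k\in[\eta\eps N,\eps N]$ individuals, the extinction time divided by $\log k$ converges in probability to $1/(\text{decay rate})$. This can be shown, as in \cite[Lemma A.1]{C+19}, through the generating function or the explicit law of extinction time, $\P(T_0\le t)=(1-ce^{-\lambda t})^k$ up to constants. Since $\log k=\log N+O(1)$, the extinction time of $X_1^{(N)}$ lies in $[\frac{\log N}{\lambda_1+C_2\eps},\frac{\log N}{\lambda_1-C_2\eps}]$ with high probability. This gives both limits with $C_0$ of order $C_2/\lambda_1^2$.

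Finally, at the extinction time of type $1$, Step 1 puts the type-$2$ coordinates within $\eps$ of $(\bar x_{2a},\bar x_{2d})$, after shrinking $\eps$ and adjusting the constants. Therefore $T_{S_\eps}$ coincides with $T_0^1$ up to the bounded time needed to re-enter the $\eps$-box. This yields both assertions of the lemma.
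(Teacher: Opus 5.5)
Your outline is sound and has the same two-part structure as the paper's proof. The genuine difference is how you keep the type-$2$ coordinates near $(\bar x_{2a},\bar x_{2d})$.

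\emph{The paper's route.} It builds two autonomous two-dimensional chains $(Y^{\eps,\le}_{2a},Y^{\eps,\le}_{2d})$ and $(Y^{\eps,\ge}_{2a},Y^{\eps,\ge}_{2d})$ that do not depend on $X_1^{(N)}$. It asserts a componentwise sandwich $NY^{\eps,\le}_{2i}\le X^{(N)}_{2i}\le NY^{\eps,\ge}_{2i}$ up to $T^1_{\sqrt\eps}$. It then applies the Freidlin--Wentzell exit estimate to each auxiliary chain near its stable equilibrium.

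\emph{Your route.} You use $X_1^{(N)}+X_{2a}^{(N)}+X_{2d}^{(N)}=N$ to reduce to the single coordinate $X_{2a}^{(N)}/N$. Its drift is $F(x_{2a})+O(\eps)$, with $F(x)=-px^2+(1-x)(\kappa x+\sigma)$ and $F'(\bar x_{2a})<0$. You get the exponential exit time from a Lyapunov/exponential-supermartingale bound, which needs only drift inequalities and not the Markov property of that coordinate. This avoids justifying a componentwise monotone coupling in a system where individuals move back and forth between $2a$ and $2d$. In exchange, the paper's route can quote Freidlin--Wentzell theory for autonomous chains directly.

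Your final step makes explicit a point the paper leaves implicit. Trivially $T_{S_\eps}\ge T_0^1$, and $T_{S_\eps}\le T_0^1+O_{\P}(1)$ because after $T_0^1$ the type-$2$ chain is one-dimensional with globally attracting equilibrium $\bar x_{2a}$.

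One claim fails as stated. You run Step~1 and the coupling only while $X_1^{(N)}\le\eps N$, and assert that $X_1^{(N)}$ then never exceeds $\eps N$ again with high probability. But the hypothesis allows $X_1^{(N)}(0)=\lfloor\eps N\rfloor$. Write $b=\bar x_{2a}\frac{1+s}{2}(1-p)+\kappa\bar x_{2d}$ and $d=\bar x_{2a}\frac{1-s}{2}$ for the per-capita rates, so $d-b=\lambda_1$. On your good event, the first jump of $X_1^{(N)}$ is upward with probability close to $b/(b+d)$, which is bounded away from $0$.

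You need a buffer, as the paper has with $T^1_{\sqrt\eps}$: run Step~1 and the coupling up to the first time $X_1^{(N)}$ exceeds, say, $2\eps N$. The drift perturbation in Step~1 is then still $O(\eps)$. A buffer proportional to $\eps N$ keeps all rate perturbations of order $\eps$, matching the $C_0\eps$ in the statement; a buffer at $\sqrt\eps N$ would give $\sqrt\eps$. By gambler's ruin, the subcritical upper process started at most at $\eps N$ reaches $2\eps N$ before $0$ with probability of order $(b/d)^{\eps N}\to 0$, with $b,d$ perturbed by $O(\eps)$. With this change your argument goes through.
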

\begin{proof}
\igb We first again \bl use arguments from \cite[Section 4.1]{C+16} to show that the rescaled mutant population size $\frac1N (X^{(N)}_{2a}(t),X^{(N)}_{2d}(t))$ will stay close to its equilibrium $(\bar x_{2a},\bar x_{2d})$. 
\igb For $\eps>0$ and $i\in \{a,d\}$ consider the \bl
stopping times 
\[\begin{aligned}
    T_{\sqrt{\eps}}^1=\inf\{t\ge 0 \: : X_1^{(N)}(t)\ge \sqrt{\eps} N\}, \\
    T_0^1=\inf\{t\ge 0 \: : X_1^{(N)}(t)= 0\}, \\
    R_{\eps,i}=\inf\Big\{t\ge 0 \: : \Big| \frac{X_{2i}^{(N)}(t)}{N}-\bar{x}_{2i} \Big| > \eps \Big\}.
\end{aligned}\]
We introduce the processes $(Y_{2a}^{\eps,\le}(t),Y_{2d}^{\eps,\le}(t))$ and $(Y_{2a}^{\eps,\ge}(t),Y_{2d}^{\eps,\ge}(t))$ to obtain the \igb sandwich \bl coupling
\[ NY_{2i}^{\eps,\le}(t) \le X_{2i}^{(N)}(t) \le NY_{2i}^{\eps,\ge}(t) \quad \text{ a.s. for } i\in\{a,d\} \text{ and } 0\le t \le T_{\sqrt\eps}^1,\]
where the process $(Y_{2a}^{\eps, \le}(t),Y_{2d}^{\eps,\le}(t))$ jumps for $(i,j) \in \N_0^2$ from $(\frac{i}{N},\frac{j}{N})$ to:
\begin{itemize}
    \item $(\frac{i}{N}-\frac{1}{N},\frac{j}{N})$ at rate $\sqrt{\eps} i\frac{1+s}{2}$,
    \item $(\frac{i}{N}-\frac{1}{N},\frac{j}{N}+\frac{1}{N})$ at rate $\frac{pi(i-1)}{N}$,
    \item $(\frac{i}{N}+\frac{1}{N},\frac{j}{N}-\frac{1}{N})$ at rate $\sigma j$,
    \item $(\frac{i}{N},\frac{j}{N}-\frac{1}{N})$ at rate $ \kappa j (1+\sqrt{\eps}),$
\end{itemize}
and the process $(Y_{2a}^{\eps, \ge}(t),Y_{2d}^{\eps,\ge}(t))$ jumps from $(\frac{i}{N},\frac{j}{N})$ to:
\begin{itemize}
    \item $(\frac{i}{N}-\frac{1}{N},\frac{j}{N})$ at rate $i\frac{1+s}{2}$,
    \item $(\frac{i}N+\frac{1}{N},\frac{j}{N})$ at rate $\sqrt{\eps} i \frac{1-s}{2}$
    \item $(\frac{i}{N}-\frac{1}{N},\frac{j}{N}+\frac{1}{N})$ at rate $\frac{pi(i-1)}{N} + \sqrt{\eps} i \frac{1+s}{2}p$,
    \item $(\frac{i}{N}+\frac{1}{N},\frac{j}{N}-\frac{1}{N})$ at rate $\sigma j + \kappa j \sqrt{\eps}$. 
\end{itemize}

We introduce for these processes the stopping times
\[R_{\eta, v}^{\diamond}=\inf\{t\ge 0: Y_{2v}^{\eps,\diamond}(t) \notin [\bar{x}_{2v}-\eta, \bar{x}_{2v}+\eta]\},\]
where $\eta>0$, $v\in\{a,d\}$ and $\diamond \in \{\le,\ge\}$. For the process $(Y_{2a}^{\eps, \le}(t),Y_{2d}^{\eps,\le}(t))$, its dynamics is close, for $N$ large enough and $\eps$ small enough, to the one of the unique solution of the system
\[\begin{aligned}
    \dot y_{2a}(t)&=\sigma y_{2d}(t)-y_{2a}(t)\big(p(y_{2a}(t)-\sqrt{\eps}) +\sqrt{\eps}\frac{1+s}{2}\big), \\
    \dot y_{2d}(t)&=py_{2a}(t)(y_{2a}(t)-\sqrt{\eps}) - (\sigma +\kappa(1+\sqrt{\eps}))y_{2d}(t). \quad \qquad 
\end{aligned}\]
So, for all $\eps$ small enough, this system has a unique positive equilibrium (denoted by $(\bar{y}_{2a}^{\eps, \le},\bar{y}_{2d}^{\eps, \le})$) that tends to $(\bar{x}_{2a},\bar{x}_{2d})$ when $\eps$ tends to $0$. Thanks to Lemma \ref{lemma-convtoaneq2}, we have, for $\eps$ small enough, the asymptotic stability \igb of, and thus convergence 
to, \bl 
this equilibrium 
for any nonnegative initial condition (except $(0,0)$). Moreover, we have $\eps_0'$ and $c_0$ such that, for all $\eps \in (0,\eps_0')$:
\[\forall \: i \in \{a,d\} : |\bar{y}_{2i}^{\eps,\le}-\bar{x}_{2i}|\le (c_0-1)\eps \quad \text{and } 0\notin[\bar{x}_{2i}-c_0\eps, \bar{x}_{2i}+c_0\eps].\]
\igb From this \bl 
we obtain, thanks to \cite[Chapter 5]{FW84}, that there exists \igb a \bl $V>0$ such that
\[\lim_{N\to\infty}\mathbb{P}\big( R_{c_0\eps,a}^{\le} > \exp(NV); R_{c_0\eps,d}^{\le} > \exp(NV)\big) =1.\]
Similarly, $\lim_{N\to\infty}\mathbb{P}(R_{c_0\eps,a}^{\ge} > \exp(NV); R_{c_0\eps,d}^{\ge} > \exp(NV))=1.$

\bigskip

As the total population \igb size \bl is constant, we have for our $c_0$ that for $\eps$ small enough $R_{c_0\eps,a}\wedge R_{c_0\eps,d} \le T_{\sqrt\eps}^1$, and as in \cite[Section 4.3]{BT19}, we note that on $\{ R_{c_0\eps,i} \le T_{\sqrt\eps}^1\}$, we have $R_{c_0\eps,i}\ge R_{c_0\eps,i}^{\le} \wedge R_{c_0\eps,i}^{\ge}$. And thus
\[\lim_{N\to\infty}\mathbb{P}\big( (R_{c_0\eps,a}\wedge R_{c_0\eps,d} \le \exp(NV)\wedge T_{\sqrt\eps}^1\big)=0. \numberthis\label{BD4}\]
This result shows that the mutant population stays close to its equilibrium on $[0,\exp(NV)\wedge T_{\sqrt\eps}^1]$ as $N$ tends to infinity. It remains for us to see that the resident population will \igb indeed \bl go extinct, and that the extinction takes place approximately after $\frac1{\lambda_1}\log N$ time \igb units. \bl  

\medskip

\igb To show this, we \bl introduce the processes $(Z_1^{\eps,\le}(t))_{t\ge0}$ and $(Z_1^{\eps,\ge}(t))_{t\ge0}$ with the following transitions from $n \in \mathbb{N}_0$ to
\begin{itemize}
    \item $n+1$ at rate $n((\bar{x}_{2a}-c_0\eps)\frac{1+s}{2}(1-p)+\kappa(\bar{x}_{2d}-c_0\eps))$,
    \item $n-1$ at rate $n(\bar{x}_{2a}+c_0\eps)\frac{1-s}{2}$, 
\end{itemize}
for $Z_1^{\eps,\le}$, and: 
\begin{itemize}
    \item $n+1$ at rate $n((\bar{x}_{2a}+c_0\eps)\frac{1+s}{2}(1-p)+\kappa(\bar{x}_{2d}+c_0\eps))$,
    \item $n-1$ at rate $n(\bar{x}_{2a}-c_0\eps)\frac{1-s}{2}$, 
\end{itemize}
for $Z_1^{\eps,\ge}$, where we suppose that we start $Z_1^{\eps,\le}$ in $\lfloor N\eta \eps \rfloor$ and $Z_1^{\eps,\ge}$ in $\lceil N \eps \rceil$, to obtain \igb yet another sandwich \bl coupling
\[Z_1^{\eps,\le}(t)\le X_1^{(N)}(t)\le Z_1^{\eps,\ge}(t) \quad \text{a.s.\  for } t \in I_{\eps}^N=\big[ 0,R_{c_0\eps,a}\wedge R_{c_0\eps,d}\wedge T_{\sqrt\eps}^1\big].\]

For $\eps$ small enough, these two processes are subcritical with growth rate $\bar{x}_{2a}(s-p\frac{1+s}{2})+\kappa \bar{x}_{2d} \pm \mathcal{O}(\eps)$. Using 
arguments analogous to \cite[Section 4.1]{C+16}, 
because the birth rate is less than the death rate (we are in the case where $p>\frac{2s}{1+s}$ and $\bar{x}_{2a}>\frac{\kappa}{\kappa+\frac{1+s}{2}p-s}$), we have
\[\begin{aligned}
    \forall \: \tilde{C}<\Big( \bar{x}_{2a}\big(p\frac{1+s}{2}-s\big)-\kappa\bar{x}_{2d}\Big) ^{-1} \quad \lim_{N\to\infty}\mathbb{P}\big(\mathcal{S}_0^{Z_1^{\eps,\diamond}} \le \tilde{C}\log(N)\big)=0,\\
    \forall \: \tilde{C}>\Big( \bar{x}_{2a}\big(p\frac{1+s}{2}-s\big) -\kappa\bar{x}_{2d}\Big)^{-1} \quad \lim_{N\to\infty}\mathbb{P}\big(\mathcal{S}_0^{Z_1^{\eps,\diamond}} \le \tilde{C}\log(N)\big)=1,
\end{aligned}\]
for $\diamond \in\{\le,\ge\}$, and $S_0^{\mathcal{N}}=\inf\{t\ge0: \mathcal{N}(t)=0\}$.

Moreover, for $Z_1^{\eps,\le}(0)=\lfloor \eta\eps N\rfloor$, we have for $\eps$ small enough that
\[ \lim_{N\to\infty}\mathbb{P}\big(\mathcal{S}_0^{Z_1^{\eps,\le}}>N\wedge S_{\lfloor \eps N\rfloor}^{Z_1^{\eps,\le}}\big)=0, \numberthis\label{BD3}\] 
where $S_{\lfloor \eps N\rfloor}^{Z_1^{\eps,\le}}$ is the first time when our process $Z_1^{\eps,\le}$ reaches $\lfloor \eps N\rfloor$. Using all of this, we have, similarly to \cite{BT19}, for $C\ge0$:
\[\begin{aligned}
    \mathbb{P}(T_0^1<C\log(N))-\mathbb{P}(\mathcal{S}_0^{Z_1^{\eps,\le}}< C\log(N)) &= \mathbb{P}(T_0^1<C\log(N)\le \mathcal{S}_0^{Z_1^{\eps,\le}}) \\
    &\le \mathbb{P}(T_0^1<C\log(N); \: R_{c_0\eps,a}\wedge R_{c_0\eps,d} < T_{\sqrt\eps}^1\wedge T_0^1) \\
    &+ \mathbb{P}(T_0^1<C\log(N); \: T_0^1\wedge R_{c_0\eps,a}\wedge R_{c_0\eps,d} > T_{\sqrt\eps}^1) \\
    &+ \mathbb{P}(T_0^1<C\log(N) \le S_0^{Z_1^{\eps,\le}}; \: T_0^1< R_{c_0\eps,a}\wedge R_{c_0\eps,d} \wedge T_{\sqrt\eps}^1) \\
    &\le \mathbb{P}(R_{c_0\eps,a}\wedge R_{c_0\eps,d} < T_{\sqrt\eps}^1 \wedge C\log(N)) + \mathbb{P}(T_0^1>T_{\sqrt\eps}^1) \\
    &\le \mathbb{P}(R_{c_0\eps,a}\wedge R_{c_0\eps,d} < T_{\sqrt\eps}^1 \wedge N) + \mathbb{P}(T_0^1>T_{\sqrt\eps}^1\wedge N) \\
    &\le \mathbb{P}(R_{c_0\eps,a}\wedge R_{c_0\eps,d} < T_{\sqrt\eps}^1 \wedge N) + \mathbb{P}(S_0^{Z_1^{\eps,\le}}>S_{\lfloor \eps N \rfloor}^{Z_1^{\eps,\le}}\wedge N), 
\end{aligned}\]
where we obtain the second inequality by \igb noting \bl that if $T_0^1 \in [0,I_{\eps}^N]$, then $T_0^1\ge S_0^{Z_1^{\eps,\le}}$. By ~\eqref{BD4} and ~\eqref{BD3}, we have
\[\limsup_{N\to\infty} \mathbb{P}(T_0^1<C\log(N)) \le \lim_{N\to\infty} \mathbb{P}(\mathcal{S}_0^{Z_1^{\eps,\le}} \le C\log(N)),\]
and using the same steps with $\mathcal{S}_0^{Z_1^{\eps,\ge}}$ \igb yields \bl 
\[\liminf_{N\to\infty} \mathbb{P}(T_0^1<C\log(N)) \ge \lim_{N\to\infty} \mathbb{P}(\mathcal{S}_0^{Z_1^{\eps,\ge}} \le C\log(N)),\]
which implies the lemma.
\end{proof}
This gives us that, in our setting, we reach extinction of the residents \igb within \bl  order $\log(N)$ \igb time units,  
\igb while \bl the mutant population stays close to its equilibrium. We now have \igb the necessary \bl results on all three phases of an invasion of mutant individuals against resident ones. 

\subsection{Completion of the proof of Theorem~\ref{theorem-mainMoran}}\label{sec-endofproofMoran} Let us \igb now \bl  put together the \igb findings of the \bl previous three subsections and the \igb preparatory \bl results of Section~\ref{sec-preliminaryproofsMoran} on the dynamical system~\eqref{2dMoran} to obtain the theorem.
First of all, note that we have already proved~\eqref{sublogMoran} in~\eqref{sublogagain}, and that in parts (a), (b), (c) of the theorem, the assertion of the theorem in case (i) (fixation of type 1) and (iv) (founder control) readily follows from Section~\ref{sec-3phasesMoran}. Hence, it remains to consider parts (b) and (c) and show that under their respective assumptions, assertions (ii) and (iii) hold.

We first present the proof when assertion (ii) should hold. 

\color{black} For $\eps>0$ and $\beta>0$, we introduce the sets
\begin{itemize}
    \item $\mathcal{B}_{\eps}^1=[\pi_{2a}-\delta, \pi_{2a}+\delta]\times[\frac{\eps N}{C}, \sqrt{\eps}N],$
    \item $\mathcal{B}_{\beta}^2=[0,N\beta]\times[N(\bar{x}_{2a}-\beta),N(\bar{x}_{2a}+\beta)]\times[N(\bar{x}_{2d}-\beta),N(\bar{x}_{2d}+\beta)],$
\end{itemize}
and the stopping times
\begin{itemize}
    \item $T_{\eps}'=\inf\{t\ge0: (\frac{X_{2a}^{(N)}(t)}{X_2^{(N)}(t)},X_2^{(N)}(t)) \in \mathcal{B}_{\eps}^1\},$
    \item $T_{\beta}''=\inf\{t\ge0: (X_1^{(N)}(t),X_{2a}^{(N)}(t),X_{2d}^{(N)}(t)) \in \mathcal{B}_{\beta}^2\}.$
\end{itemize}

We consider the (pre-limiting) fixation event 
$$A(N,\eps)=\Big\{ \Big|\frac{T_{S_{\beta}}}{\log(N)}-\big( \frac{1}{\lambda_1}+\frac{1}{\lambda_2} \big) \Big|\le f(\eps); \: T_{\mathcal{S}_{\beta}}<T_0^2 \Big\}.$$ 


On $\{T_{\mathcal{S}_{\beta}}<T_0^2\}$ we have $T_{\sqrt{\eps}}^2<T_{\beta}''<T_{\mathcal{S}_{\beta}}$, and if $T_{\eps}'<\infty$ then $T_{\eps}'<T_{\beta}''$ and thus
\[\begin{aligned}
    \liminf_{N\to\infty}\P(A(N,\eps))&\ge \liminf_{N\to\infty}\Big[ \mathbb{P}\Big(|\frac{T_{\eps}'}{\log(N)}-\frac{1}{\lambda_2}| \le \frac{f(\eps)}{3} \:; \: T_{\eps}'<T_0^2; \: T_{\sqrt{\eps}}^2<T_0^2\Big) \\
    &\times \inf_{x=(x_1,x_{2a},x_{2d}) \colon (\frac{x_{2a}}{x_2},x_2)\in\mathcal{B}_{\eps}^1}\mathbb{P}\Big( |\frac{T_{\beta}''<T_{\eps}'}{\log(N)}| \le \frac{f(\eps)}{3}\: ; \: T_{\beta}''<T_0^2 | X^{(N)}(0)=x\Big) \\
    &\times \inf_{x \in \mathcal{B}_{\beta}^2}\mathbb{P}\Big(|\frac{T_{\mathcal{S}_{\beta}}-T_{\beta}''}{\log (N)}-\frac{1}{\lambda_1}|\le \frac{f(\eps)}{3} \: ; \: T_{\mathcal{S}_{\beta}}<T_0^2 | X^{(N)}(0)=x\Big) \Big].
\end{aligned}\]
We need to show that the right-hand side is close to $1-q_{2a}$ when $\eps$ is small. For the first term we have, using the same reasoning that in \cite[proof of (3.61)]{C+19}:
\[\liminf_{N\to\infty} \mathbb{P}\Big(\Big|\frac{T_{\eps}'}{\log(N)}-\frac{1}{\lambda_2}\Big| \le \frac{f(\eps)}{3} \:; \: T_{\eps}'<T_0^2; \: T_{\sqrt{\eps}}^2<T_0^2\Big) \ge 1-q_{2a} +o_{\eps}(1). \]

For the second term, we write for $\textbf{m}=(m_1,m_{2a},m_{2d})\in [0,\infty[^3$, the unique solution $\textbf{x}^{\textbf{m}}$ of the dynamical system ~\eqref{2dMoran} with initial condition $\textbf{m}$. 
Thanks to the continuity of the flow of this dynamical system with respect to the initial condition and thanks to the convergence provided by Lemma \ref{lemma-convtoaneq2} (we suppose $\kappa<\tilde\kappa\wedge \kappa'$ for assertion (ii)),  we have \color{black} the convergence of the dynamical system to $(0,\bar{x}_{2a})$. Precisely, \color{black} for any $\beta$ that we choose small enough, we can find \color{black} $\eps_0$ and $\delta_0$ such that for all $\eps \in (0,\eps_0)$ and all $\delta \in (0,\delta_0)$, there exists $t_{\beta,\eps,\delta}$ such that $\forall \: t> t_{\beta,\eps,\delta}$:
\[|\textbf{x}^{\text{x}^0}(t)-(0,\bar{x}_{2a},\bar{x}_{2d})|\le \beta,\]
for any initial condition $\textbf{x}^0\in \mathcal{B}_{\eps}^1$. Then, using  \cite[Theorem 2.1, p.~456]{EK}, we obtain for $\eps<\eps_0$: 
\[\lim_{N \to\infty}\mathbb{P}\Big(T_{\beta}''-T_{\eps}'<t_{\beta,\eps,\delta} \: \Big| \: \Big(\frac{X_{2a}^{(N)}(0)}{X_2^{(N)}(0)},X_2^{(N)}(0)\Big)\in\mathcal{B}_{\eps}^1\Big)=1-o_{\eps}(1).\]
Therefore, the second term is close to one when $N$ tends to infinity and $\eps$ is small enough. 

\medskip

For the third term, using Lemma ~\ref{lemma-phase3Moran}, there exists $\beta_0>0$ such that for all $\beta<\beta_0$ and $\eps$ small enough:
\[\lim_{N\to\infty}\mathbb{P}\Big(\Big|\frac{T_{\mathcal{S}_{\beta}}-T_{\beta}''}{\log(N)}-\frac{1}{\lambda_1}\Big| \le \frac{f(\eps)}{3} \: \Big| \: X^{(N)}(0)\in\mathcal{B}_{\beta}^2\Big)=1-o_{\eps}(1).\]
Using all of this, we obtain:
\[\liminf_{N\to\infty}\mathbb{P}\Big(\Big|\frac{T_{S_{\beta}}}{\log(N)}-\Big(\frac{1}{\lambda_1}+\frac{1}{\lambda_2}\Big)\Big|\le f(\eps); \: T_{\mathcal{S}_{\beta}}<T_0^2\Big)\ge 1-q_{2a}-o_{\eps}(1).\] 

As we reach the extinction of the residents \igb within \bl  $\log(N)$ \igb time units \bl when we start from $\mathcal{B}_{\beta}^2$, we then have on the event $\{ T_0^1 < T_0^2 \}$:
\[ \lim_{N\to\infty} \frac{T_0^1}{\log N} = \frac{1}{\lambda_2} + \frac{1}{\lambda_1} \qquad \text{in probability}, \]
and we also \igb obtain \bl  the result regarding the convergence of the mutant population in the theorem as \igb keeping in mind that 
the mutant population stays as close as we want to its equilibrium. \bl

Similarly, by modifying the assumptions, we obtain the result stated in the theorem regarding $T_\beta^{\rm co}$ in case (iii). Indeed, with the proper assumptions, Lemma~\ref{lemma-convtoaneq2} gave us the convergence to $(\tilde{x}_1,\tilde{x}_{2a})$ for the second phase.

\igb Note that the \bl  branching process associated to the residents we used for the third phase is no more subcritical here. What will happen in this situation was stated in Remark ~\ref{sec-metastability}.

\end{proof}

\subsection*{Acknowledgements.}  BLD was partially supported
by the ERC Synergy Grant No.\ 810115 -- DYNASNET\color{black}. Funding acknowledgement by BLD and AT: Project no.\ STARTING 149835 has been implemented with the support provided by the Ministry of Culture and Innovation of Hungary from the National Research, Development and Innovation Fund, financed under the STARTING\_24 funding scheme. \color{black} Further funding acknowledgement by AT: This paper was supported by the János Bolyai Research Scholarship of the Hungarian Academy of Sciences. \color{black} AT also wishes to thank the Goethe University Frankfurt for hospitality. 

\smallskip



\end{document}